\def\BibTeX{{\rm B\kern-.05em{\sc i\kern-.025em b}\kern-.08em
    T\kern-.1667em\lower.7ex\hbox{E}\kern-.125emX}}
\newtheorem{theorem}{Theorem}
\newtheorem{lemma}[theorem]{Lemma}
\newtheorem{corollary}[theorem]{Corollary}
\newtheorem{definition}{Definition}
\newtheorem{note}{Note}
\newcommand{\1}{{\bf 1}} % vector of all 1's
\newcommand{\E}{\mathsf{E}} % expectation
\newcommand{\card}[1]           {\left| #1\right|}
\newcommand\argmin{\mathop{\mbox{{\rm argmin}}}\limits}
\newcommand{\bea}{\begin{eqnarray}}
\newcommand{\eea}{\end{eqnarray}}
\newcommand{\beas}{\begin{eqnarray*}}
\newcommand{\eeas}{\end{eqnarray*}}
\begin{document}
\title{On Real Time Coding with Limited Lookahead}
\author{\IEEEauthorblockN{Himanshu Asnani\IEEEauthorrefmark{1} and 
Tsachy Weissman\IEEEauthorrefmark{2}} 
\thanks{\IEEEauthorblockA{\IEEEauthorrefmark{1}Stanford University, Email:
asnani@stanford.edu.}}
\thanks{\IEEEauthorblockA{\IEEEauthorrefmark{2}Stanford University, Email:
tsachy@stanford.edu.}}}

\maketitle

%\markboth{}

\maketitle

\begin{abstract}
A real time coding system with lookahead consists of a 
memoryless source, a memoryless channel, an
encoder, which encodes the source symbols sequentially with knowledge
of future source symbols upto a fixed finite lookahead, $d$, with or without
feedback of the past channel output symbols and a decoder, which sequentially
constructs the source symbols using the channel output. The objective is to
minimize the expected
per-symbol distortion.
\par
For a fixed finite lookahead $d\ge 1$ we invoke the
theory of controlled markov chains to obtain an average cost optimality
equation (ACOE), the solution of which, denoted by $D(d)$, is the minimum
expected per-symbol
distortion. With increasing $d$, $D(d)$ bridges the gap between causal encoding,
$d=0$, where symbol by
symbol encoding-decoding is optimal and the infinite lookahead case,
$d=\infty$, where Shannon Theoretic arguments show that
separation is optimal.
\par
 We extend the analysis to a system with finite state
decoders, with or without noise-free feedback. For a Bernoulli source and binary
symmetric channel, under hamming loss, we compute the optimal
distortion for
various source and channel parameters, and thus obtain computable bounds on
$D(d)$. We also identify regions of source and channel parameters where
symbol by symbol encoding-decoding is suboptimal. Finally, we demonstrate the
wide applicability of our approach by applying it in additional coding
scenarios, such as the case where the sequential decoder can take cost
constrained actions affecting the quality or availability of side information
about the source.  

\end{abstract}

\begin{keywords}
Actions, Average Cost Optimality Equation (ACOE), Beliefs, Bellman
Equation, Constrained Markov Decision
Process, Controlled Markov Chains, Expected Average Distortion, Finite State
Decoders, Lagrangian, Lookahead, Optimal Cost,
Policy, Side Information, Value Iteration, Vending Machine.
\end{keywords}

\section{Introduction}
\label{intro}\par
\subsection{Motivation and Related Work}
\label{motivation}
{A} memoryless source $\{U_1, U_2, \ldots\}$  is to be communicated over a
memoryless channel with
the objective of minimizing expected average (per-symbol) distortion, with or
without the
availability of unit-delay noise-free feedback. The communication is in real
time and hence
the
encoding and decoding is sequential, with a fixed finite lookahead of
source symbols available at the encoder (cf. the setting in Fig.
\ref{feedbackmemoryrtclookahead}). The motivation stems from practical systems
such as for 
video streaming, cache memory devices in
computing systems, real time communication systems etc., where the encoder has a
fixed buffer of
future source symbols, and the quality of service demands that encoding and
decoding should be in real time. The problem finds its applications in other
sequential decision systems, where resource allocation should be done on the fly
due
to adverse effects of latency or delay, such as sensor networks,
weather-monitoring systems, flow in societal networks such as transportation
networks, recycling systems, etc. A
natural criterion of performance is to minimize the expected average distortion.
What is the best we can do here ? Note that
 such a framework with real time constraints is not covered by Shannon Theory.
In classical Information Theory, encoding of long \textquotedblleft
typical\textquotedblright\ sequences
in blocks as well as block decoding introduces large delays and thus such 
achievable schemes violate the very
premise of bounded or no delay constraint. To answer the question, we
invoke markov decision theory and cast our problem and other such variants as
discrete time controlled markov chains with average
cost criterion.
\par
The problem is well motivated by practical problems of delay constrained
source-channel coding and has been of much interest in the literature. There
have also been many different ways to model the notion of sequential encoding
and decoding. In
the source coding context, causal source codes were studied in
\cite{NeuhoffGilbert}, \cite{LinderZamir}, \cite{Piret}, which demand the
reconstruction to depend causally on the source symbols. But this is a
much weaker constraint and causal source codes can operate on large delays as
was
pointed out in \cite{NeuhoffGilbert} itself. Causal source codes with side
information were studied in \cite{WeissmanMerhavCausal}.
\par
Note that we can transform our setting of limited encoder lookahead of $d$, to
that of a zero lookahead of a markov source,
$V_i=U_{i}^{i+d}$. This transformation puts the problem in the class of
sequential encoding decoding problems with markov sources. 
When the communication horizon is fixed, the structure of optimal encoding and
decoding policies with Markov sources have been studied in \cite{Witsenhausen},
\cite{TeneketzisPhD}, \cite{TeneketzisRTC}, \cite{WalrandVaraiya},
\cite{MunsonPHD},  \cite{GorantlaColeman}.  In \cite{MahajanTeneketzis}, authors
propose a systematic methodology 
for such a non-classical information structure to search for an optimal strategy. 
\par
The problem of real time coding and decoding in semi stochastic setting, 
i.e., for the individual sequences was studied in \cite{LinderLugosi} and
\cite{TsachyMerhavlimitedelay}, while finite state digital systems were the
subject of study in \cite{GaarderSlepian}.
\par
The connection between dynamic programming and information theory 
has been well exploited. The problem of computing the capacity of channels with
feedback was formulated as a Markov Decision Process in \cite{TatikondaPHD},
\cite{TatikondaMitter}. The long standing problem of capacity of trapdoor
channel
(cf. \cite{Blackwell}, \cite{Ash}) with feedback was evaluated using average
cost optimality equations in \cite{HaimCuffRoyTsachy}.  Zero error capacity for
certain channel coding problems was computed using dynamic programming in
\cite{ZhaoHaimZeroError}.

\subsection{Contributions and Organization of the Paper}
\label{contributionorganization}
The approaches in \cite{Witsenhausen},
\cite{TeneketzisPhD}, \cite{TeneketzisRTC}, \cite{WalrandVaraiya},
\cite{MunsonPHD},  \cite{GorantlaColeman} and \cite{MahajanTeneketzis} are 
inspired by control theory, which provides tools for finding optimal schemes
and understanding their structure. In this work, we take these tools further to
provide more explicit expressions and bounds for the optimum performance under a
given lookahead constraint $d$. While optimum performance in the case $d=0$ is
easily shown to be attained by \textquotedblleft
symbol-by-symbol\textquotedblright\ operations, and the case $d=\infty$ can be
answered with the tools of Shannon theory, for any finite $d\ge 1$, the existing
literature does not provide useful analytical values or bounds on the
minimum expected average distortion, $D(d)$. In addition to being amenable to
a decision theoretic formulation of Markov sources, as in the surveyed
literature
above, the model we consider here is more basic and lends itself to simpler
average 
cost
optimality equation, which in some cases (cf. Section
\ref{rtcfinite}) can be computed exactly. While in \cite{TeneketzisRTC},
\cite{WalrandVaraiya}, \cite{GorantlaColeman} emphasis is on expected total
fixed horizon cost, we argue that expected average cost over infinite horizon is
a more natural criterion of
performance as in the sequential encoding and decoding problems, we
typically do not know when to stop, and hence we would like to analyze the
asymptotics of the horizon-independent problem. While the main focus in this
work has been to
characterize the minimum achievable distortion, the average cost
optimality
equations also characterize sufficient conditions on the optimality of
stationary (encoding and decoding) policies. 
\par
Note that in our
communication problem in Fig.
\ref{feedbackmemoryrtclookahead}, the lookahead is available only at the
encoder while the decoder constructs the estimates causally, instead of a
seemingly more general setting where lookahead of $l_e$ is present at the
encoder while decoder has lookahead $l_d$. However performance of any
policy/code with encoder and decoder lookahead parameters $(l_e,l_m)$ can be
attained arbitrarily closely by the optimal policy for our setting in Fig.
\ref{feedbackmemoryrtclookahead} with $d=l_e+l_m$ as pointed out in Section
II of \cite{TsachyMerhavlimitedelay}.
% Assuming symbol wise distortion is
% bounded by $\Lambda_{max}$
% (we
% assume this in our problem formulation), to transmit source sequence $U^n$ for
%a
% fixed $n$, a sequential code in our setting (Code A) will generate encoded
% symbols, $X_1,X_2,\cdots,X_{n-d}$, and the decoder decodes the first $n-d$
% symbols as $\hat{U}_1,\hat{U}_2,\cdots,\hat{U}_{n-d}$ In the setting with both
% encoder and decoder lookahead, for any scheme (Code B) encoded symbols are
% $X_{1},X_2,\cdots,X_{n-l_e}$ while source reconstruction for first
%$n-(l_e+l_m)$
% symbols is $\hat{U}_1,\hat{U}_2,\cdots,\hat{U}_{n-(l_e+l_m)}$. Hence if
% $d=l_e+l_m$, per-symbol distortions of Code A and Code B will differ at most
%by
% $\frac{d\Lambda_{max}}{n}$ which vanishes with $n\rightarrow\infty$. 
Authors in
\cite{SahaiBlock} consider the communication problem similar to our setting with
$l_e=0$, $l_d=d$ for $d\ge 0$, per-symbol distortion $D(d)$ and show that
$D(d)$ converges exponentially rapidly to $D(\infty)$ and provide bounds on the
exponent. However the results are asymptotic in nature and hence different from
this work, which is explicit exact or approximate characterization of values for
$D(d)$ for any fixed, possibly small $d$. 
\par
Recently there has been work in the
direction of \textquotedblleft action in information theory\textquotedblright\
, i.e. canonical Shannon theoretic models with encoder and/or decoder
taking cost constrained actions to affect the generation or availability of
channel state information, side
information,
feedback etc., cf.
action in point to point scenarios in \cite{HaimTsachyVendor},
\cite{TsachyChannel}, \cite{Kittichokechai} \cite{HimanshuHaimTsachyProbing},
\cite{HimanshuHaimTsachyFeedback} and in
multi-terminal systems in \cite{ChiaHimanshuTsachy},
\cite{HaimHimanshuCribbing}. We revisit the setting of source coding with a side
information vending machine, as in \cite{HaimTsachyVendor} (See Fig.
\ref{actionfeedbackmemoryrtclookahead}) for the case where the
encoding is sequential with lookahead, decoder takes an action $A_v$
sequentially
dependent on the encoded symbols to get side information about the source
through a memoryless channel, $P_{Y|U,A_v}$. The reconstruction of the source
is
based upon the current encoded symbol, the current side information symbol and
memories storing the past encoded symbols and side information symbols. We show
that the problem can be formulated as a constrained Markov Decision
Process.
\par 
The main contribution of this paper is the casting of a large class of limited
delay source, channel and joint source-channel coding problems in the realm of
sequential decision theory, obtain characterizations of the optimum
performance via average
cost optimality equations with finite or compact state spaces, and solve exactly
or obtain bounds for the expected average distortion as a function of
lookahead $d$.
\par
The paper is organized as follows. 
Section \ref{problem} describes the basic model of problems with lookahead (See
Fig. \ref{feedbackmemoryrtclookahead}), encoding is sequential using the
lookahead and unit delay noise-free feedback, $X_i(U^{i+d},Y^{i-1})$, while the
decoding depends on the current channel output and the past memory,
$\hat{U}_i(Y_i,Z_{i-1})$. The memory evolves as
$Z_i(Z_{i-1},Y_i)$. We seek to find the minimum expected average distortion as a
function of lookahead, i.e.,
\bea
D(d)=\inf_{\{X_i(\cdot)\},\{\hat{U}_i(\cdot)\}}\limsup_
{N\rightarrow\infty } \E\left [ \frac { 1 } { N } \sum_
{i=1}^N\Lambda(U_i,
\hat{U}_i)\right].
\eea
In Section \ref{controlledmarkov} we present an overview of controlled 
markov processes with average cost, the unconstrained case in
Section \ref{unconstrainedcontrol} and constrained control in Section
\ref{constrainedcontrol}. Section \ref{rtccomplete} studies the case of complete
memory,
i.e.,
$Z_{i}=Y^i$. In Section \ref{rtccompleteacoe} we use the theory of Section
\ref{controlledmarkov} to construct an average cost optimality equation, the
solution to which is the average optimal distortion. In Section
\ref{rtccompletelookahead}, we consider the question \textquotedblleft to
look or not to lookahead \textquotedblright\ and specify a sufficient condition
under which symbol by symbol encoding-decoding is optimal for a given source,
channel, distortion function and lookahead. This kind of result in our
problem of sequential encoding decoding with lookahead complements that of
\textquotedblleft to code or not to code \textquotedblright\ of
\cite{GastparRimoldiVetterli}. In Section \ref{rtcfinite}, we consider the
framework with finite state decoders, constructing corresponding ACOE in
Section \ref{rtcfiniteacoe}. In Section \ref{rtcfinitecomputation}, we
use relative value iteration to solve the problem exactly for an example of
binary source and binary symmetric channel under hamming loss, thereby
demonstrating 
how the average
distortion values for this setting can be used to bound $D(d)$ of Section
\ref{rtccomplete}. We also contrast with the extreme cases of no lookahead,
$d=0$, where symbol by symbol policies are optimal and $d=\infty$ where
Shannon's 
Separation Theorem \cite{Shannon} determines the minimum expected average
distortion. We also highlight the regions of source-channel parameters where for
any finite $d\ge 1$, symbol by symbol encoding-decoding is strictly suboptimal
for a Bernoulli
source and binary symmetric channel. Section \ref{nofeedbackrtc}
relaxes the assumption of the previous sections that feedback is present. In
Section
\ref{rtcaction}, the setting of source coding with a side information vending
machine is
considered. Here again, encoding is sequential with lookahead, decoder takes
cost constrained actions, $A_{v,i}$, sequentially to get side
information about the source through a memoryless channel, $P_{Y|U,A_v}$. The
decoding is the
optimal reconstruction $\hat{U}_i(X_i,Y_i,M_{i-1},N_{i-1})$, where $M_{i-1}$ and
$N_{i-1}$ are the memories storing some or all of past encoded symbols and side
information symbols, respectively. Section \ref{sirtcaction} evaluates
the case when encoder also has access to the side information, with
decoder having complete
memory in Section \ref{sirtcactioncomplete}, while finite memory decoders are
considered in Section 
\ref{sirtcactionfinite}. Section \ref{nosirtcaction} studies the same source
coding problem with a side information vending machine but now encoder has no
access to side information. Section
\ref{summary} summarizes the methodology developed in this paper of constructing
average cost
optimality equations. The paper
is concluded in
Section \ref{conclusion}.

\section{Problem Formulation}
\label{problem}
We begin by explaining the notation to be used throughout this paper.
Let upper case, lower case, and calligraphic letters denote, respectively,
random
variables, specific or deterministic values which random variables may assume,
and
their alphabets. For two jointly distributed random variables, $X$ and $Y$, let
$P_X$, $P_{XY}$ and $P_{X|Y}$ respectively denote the marginal of $X$, joint
distribution of $(X,Y)$ and conditional distribution of 
$X$ given $Y$. $X_{m}^{n}$ is a shorthand for the $n-m+1$ tuple
$\{X_m,X_{m+1},\cdots,X_{n-1},X_n\}$.
$\mathcal{B}(\mathcal{X})$ denotes the Borel
$\sigma$-algebra of a given topological space,
$\mathcal{X}$. $\mathcal{P}(\mathcal{X})$ denotes the
probability simplex on the finite alphabet, $\mathcal{X}$. $C_b(\mathcal{X})$
denotes the set of continuous and bounded functions on the topological
space $\mathcal{X}$. $\1_{\{\cdot\}}$ stands for the indicator function.
$\mathbf{N}$ and $\mathbf{R}$ denote the sets of natural and real numbers
respectively. We impose the
assumption of finiteness of
cardinality on all alphabets of operational significance (source, channel input,
channel output, reconstruction), unless otherwise indicated.
\begin{figure}[htbp]
\begin{center}
\scalebox{0.7}{\input{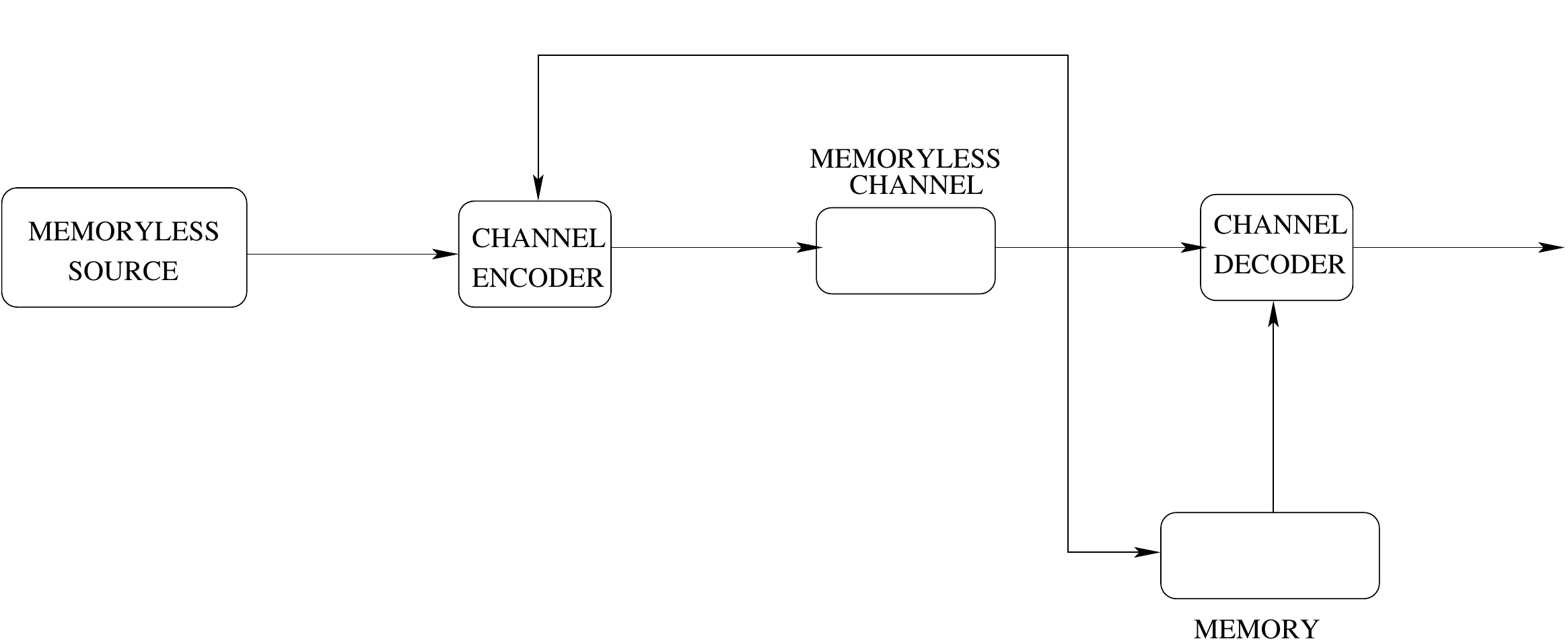_t}}
\caption{Real time coding with lookahead. Encoder uses future source
symbols upto a fixed finite lookahead, $d$ and unit-delay noise free feedback,
decoder uses present channel output and past memory for source reconstruction.
Complete memory case corresponds to 
$Z_i=Y^i$ which implies
$\card{\mathcal{Z}_i}=\card{\mathcal{Y}}^i$.}
\label{feedbackmemoryrtclookahead}
\end{center}
\end{figure}
The general problem setup, depicted in Fig.
\ref{feedbackmemoryrtclookahead} consists
of the following principle components :
\begin{itemize}
 \item \textit{Source} : Generates i.i.d. source symbols,
$\{U_i\}_{i\in\mathbf{N}} \in \mathcal{U}$. The source symbols are
distributed $\sim P_U$. 
\item \textit{Channel Encoder} : The encoder has access to unit-delay
noise-free feedback
from the channel output and future source symbols upto a fixed finite lookahead,
$d$,
i.e,
$X_i=f_{e,i}(U^{i+d},Y^{i-1})$, where $f_{e,i}$ is the encoding
function, $f_{e,i}:\mathcal{U}^{i+d}\times\mathcal{Y}^{i-1}\rightarrow
\mathcal{X}$, $i\in\mathbf{N}$. 
\item \textit{Channel} : Given channel input symbol, $x_i$, and all the source
symbols and past
channel inputs and outputs, $(u_1^{\infty},x^{i-1},y^{i-1})$, channel output,
$y_i$
is
distributed i.i.d. $\sim P_{Y|X}$, i.e.,
\bea\label{DMC}
P(y_i|u_1^{\infty},x^i,y^{i-1})=P_{Y|X}(y_i|x_i).
\eea 

\item \textit{Memory} : The decoder cannot make use of all the channel output
symbols upto current time due to memory constraints. Memory is updated as a
function of the past state of the memory and the current channel output, i.e.,
$Z_i=f_{m,i}(Z_{i-1},Y_i)$, where the $f_{m.i}$ is the memory update
function,
$f_{m,i}:\mathcal{Z}_{i-1}\times\mathcal{Y}\rightarrow\mathcal{Z}_{i}$,
$i\in\mathbf{N}$. Note that the alphabet $\mathcal{Z}_i$ can grow with $i$,
hence the setup also includes the special case of complete memory, i.e.,
$Z_i=Y^i$ which implies
$\card{\mathcal{Z}_i}=\card{\mathcal{Y}}^i$.
\item \textit{Channel Decoder} : Channel decoder uses the current channel
output and the past memory state to construct its estimate of the source
symbol, i.e., $\hat{U}_i=f_{d,i}(Z_{i-1},Y_i)$, the decoding rule is
the map, $f_{d,i} :
\mathcal{Z}_{i-1}\times\mathcal{Y}\rightarrow\hat{\mathcal{U}}$.
\end{itemize}
The alphabets $\mathcal{U}$, $\mathcal{X}$, $\mathcal{Y}$ and $\hat\mathcal{U}$
are assumed to be finite. Let
$\Lambda(\cdot,\cdot):\mathcal{U}\times\hat\mathcal{U}\rightarrow\mathbf{R}$
 indicate a distortion function. We assume for simplicity that, $0\le
\Lambda(\cdot,\cdot)\le\Lambda_{max}<\infty$. Let the tuple
$\mu(d)=(f_e,f_m,f_d)$ indicate the sequence of encoding rules,
$\{f_{e,i}\}_{i\in\mathbf{N}}$, memory update rules,
$\{f_{m,i}\}_{i\in\mathbf{N}}$ and decoding rules,
$\{f_{d,i}\}_{i\in\mathbf{N}}$.
\begin{definition}\label{def1}[Distortion-Optimal Policy]
For a fixed lookahead, $d$, we define \textit{$d$-distortion optimal} policies,
$\mathcal{P}^{opt}(d)$ as the set of $(f_e,f_m,f_d)$-policies, denoted by
$\mu(d)$, which achieve the minimum expected
average distortion, i.e,
\bea
\mathcal{P}^{opt}(d)=\left\lbrace\mu(d):\mu(d)=\arg\inf_{\{f_e,f_m,f_d\}}
\limsup_
{N\rightarrow\infty } \E\left [ \frac { 1 } { N } \sum_
{i=1}^N\Lambda(U_i,
\hat{U}_i)\right]\right\rbrace.
\eea
The corresponding minimum expected distortion  as a function of lookahead, $d$,
\bea\label{dopt}
D(d)=\inf_{\{f_e,f_m,f_d\}}\limsup_{N\rightarrow\infty}\E\left[\frac{1}{N}
\sum_
{i=1}^N\Lambda(U_i,
\hat{U}_i)\right].
\eea
\end{definition}
Our main goal is to characterize $D(d)$ and identify structural properties of
the elements of 
$\mathcal{P}^{opt}(d)$.

\begin{note}\label{note1}
Note that $\inf$ in the definition of $D(d)$ can equivalently be replaced by
$\min$ (cf. Appendix \ref{appendix0}). This implies that $\mathcal{P}^{opt}(d)$
is non-empty. Taking limsup in
definition of $D(d)$, while appearing more conservative, is actually
inconsequential as you would get the same value of D(d) if you put a liminf in
the definition. This can be easily argued as follows. Let, the
per-symbol expected distortion under a policy $\mu$ upto time $N$ be denoted by
$D_{\mu}^{(N)}$. Denoting $D^{\sup}(d)$ and $D^{\inf}(d)$ as the distortion
criterion with $\limsup$ and $\liminf$ respectively, we know $D^{\inf}(d)\le
D^{\sup}(d)$. We will now show $D^{\inf}(d)\ge
D^{\sup}(d)$. Let a policy $\mu^{\ast}$ attains the infimum for
$D^{\inf}(d)$
(that there exists such policy follows from the same arguments as above for the
non-emptiness of $\mathcal{P}^{opt}(d)$). This implies (as $\Lambda(\cdot)$ is
bounded) for $\epsilon>0$, $\exists$ $N(\epsilon)>0$ such that under this policy
$D^{(N(\epsilon))}\le D^{\inf}(d)+\epsilon$. Operating such a policy in $b$
blocks, 
\bea
D^{\sup}(d)\le\lim_{b\rightarrow\infty}D_{\mu^{\ast}}^{(N(\epsilon)b)}\le
D^{\inf}(d)+\epsilon,
\eea
which implies in the limit $\epsilon \rightarrow 0$, $D^{\sup}(d)\le
D^{\inf}(d)$.
\end{note}

\section{Controlled Markov Process with Average Cost : Background and Preliminaries}
\label{controlledmarkov}
We present here an overview of parts of the controlled Markov process with
average cost
criterion framework that will be applied. First, we present an overview of the
unconstrained case where
the only objective is to maximize an expected average cost. We then
consider the
constrained case where, in addition, the system needs to satisfy certain
expected
average cost constraints. 
\subsection{Unconstrained Control}
\label{unconstrainedcontrol}
Here we overview results about general Borel state and action spaces. We refer
to 
\cite{Borkar_survey} for a more complete discussion. The
problem is characterized by the tuple
$(\mathcal{S},A_s, \mathcal{A}, \mathcal{W},F,P_{S},P_{W},g)$ and a discrete
time
dynamical system,
\bea
s_t=F(s_{t-1},a_t,w_t),
\eea
where the states $s_t$ take values in finite, countable or in general Borel
space $\mathcal{S}$ (called the state space), actions $a_t$ take values in the
admissible action space,
$A_s(s_t)$ which is a subset of a compact subset
$\mathcal{A}$ (called the action space) of a Borel space, and the disturbance,
$w_t$, takes values in a
measurable space $\mathcal{W}$ (called the disturbance space). Initial state
$S_0$ is drawn with distribution
$P_S$ and the disturbance $w_t$ is drawn from the distribution,
$P_W(\cdot|s_{t-1},a_t)$ which depends on past actions and states, only through
the pair $(s_{t-1},a_t)$. We consider only  measurable functions. A policy
$\pi$ is defined to be the sequence of functions, $\pi=(\mu_1,\mu_2,\cdots)$,
where $\mu_t$ is the function which maps histories
($\phi_t=(s_0,w_0,\cdots,w_{t-1})$) to actions. A set of \textit{history
deterministic} policies, $\Pi_{HD}$ is characterized by policies for
which actions are generated as $a_t=\mu_t(\Phi_t)$. A set of \textit{Markov
deterministic} policies, $\Pi_{MD}$ is characterized by policies for
which actions are generated as $a_t=\mu_t(s_{t-1})$. A set of policies
$\Pi_{SD}$
is
referred to as \textit{stationary deterministic} if it is characterized by a
function $\mu:\mathcal{S}\rightarrow\mathcal{A}$ such that,
$\mu_t(\Phi_t)=\mu(s_{t-1})$ $\forall\ t$. Policies can be randomized or
deterministic (\cite{Borkar_survey}, Section 2.2). The policy sets $\Pi_{HR}$,
$\Pi_{MR}$ and $\Pi_{SR}$ respectively stand for \textit{history randomized},
\textit{markov randomized} and \textit{stationary randomized} policies. As per
our definitions and interests, the largest class of policies considered
henceforth
will be history deterministic policies, $\Pi_{HD}$. Let
\bea
\mathcal{K}=\{(x,a):x\in\mathcal{S}, a\in A_s(x)\}\in
\mathcal{B}(\mathcal{S}\times\mathcal{A}).
\eea
Note if $\mathcal{S}$ and $\mathcal{A}$ are compact subsets of a Borel space,
$\mathcal{K}$ is a compact subset
$\in\mathcal{B}(\mathcal{S}\times\mathcal{A})$. The dynamics induce a
stochastic transition kernel on
$\mathcal{B}(\mathcal{S})\times\mathcal{K}$, $Q(\cdot|x,a)$,
which implies for each $(x,a)\in\mathcal{K}$, $Q(\cdot|x,a)$
is probability measure on $\mathcal{B}(\mathcal{S})$ and for each
$D\in\mathcal{B}(\mathcal{S})$, $Q(D|\cdot)$ is Borel measurable on
$\mathcal{K}$.\\
The objective is to maximize expected average reward given a bounded one
stage reward
function,
$g:\mathcal{K}\rightarrow \mathbf{R}$ and find the optimal
policy. The average reward of a policy $\pi$ with a given initial state
distribution $\nu$ is defined by,
\bea
J({\nu,\pi})\stackrel{\triangle}{=}\liminf_{N\rightarrow\infty}\E^{\pi}_{\nu}
\left [ \frac { 1 } { N } \sum_ { t=1 }^{N}
g(S_{t-1},\mu_{t}(\Phi_{t}))\right].
\eea
The optimal average reward and the optimal policy is defined by,
\bea
J^{opt}(\nu)&=&\sup_{\pi}J({\nu,\pi})\\
\pi^{opt}(\nu)&=&\{\pi:J({\nu,\pi})=J^{opt}(\nu)\}.
\eea
Note that in general for a controlled Markov process with average cost
criterion,
where the state space is infinite, the total expected average cost might depend
on the initial state.  However, operationally, since our objective is to
minimize
the expected average distortion as in Eq. \ref{dopt}, we can decide to start of
the system
with the best initial state, state which yields the best distortion, in which
case the optimal cost and optimal policy will be denoted by, $J^{opt}$ and
$\pi^{opt}$.
\bea
J^{opt}&=&\sup_{\nu}J^{opt}({\nu})\\
\pi^{opt}&=&\{\pi:\exists\ \nu\mbox{ s.t. } J(\nu,\pi)=J^{opt}\}.
\eea
We need not dwell on sensitivity of the optimal cost to initial states, as this
will not be an issue in our application of this framework. However when state
space is say finite, irreducible and positive
recurrent, average cost is indeed equal for all initial states.
In general, there can be more than one optimal policy, in which case ties are
resolved arbitrarily.
\\
The following theorem describes
the average cost optimality equation (ACOE) for such a process, and relates the
optimal reward with the optimal stationary deterministic policy.
\begin{theorem}[cf. Theorem 6.1 of \cite{Borkar_survey}]
\label{theorem1}
If $\lambda\in\mathbf{R}$ and a bounded function
$h:\mathcal{S}\rightarrow\mathbf{R}$ satisfy,
\bea
\lambda+h(s)=\sup_{a\in\mathcal{A}}\left[g(s,a)+\int
P_W(dw|s,a)h(F(s,a,w))\right],\mbox{ }\forall\ s\in\mathcal{S},
\eea 
then $\lambda=J^{opt}$. Further, if there is a function $\mu :
\mathcal{S}\rightarrow \mathcal{A}$ such that $\mu(s)$ attains the supremum
above for all states, then $J({\pi})=J^{opt}$ for
$\pi=\{\mu_1,\mu_2,\cdots\}$ with $\mu_i(\Phi_i)=\mu(s_{i-1})$, $\forall i$. 
\end{theorem}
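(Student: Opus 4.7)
The plan is to prove the theorem by a direct telescoping argument built on the ACOE, avoiding the more common detour through the vanishing-discount limit. There are two things to establish: (i) an upper bound $J(\nu,\pi)\le \lambda$ for every history-deterministic policy $\pi\in\Pi_{HD}$ and every initial distribution $\nu$; and (ii) achievability, namely $J(\nu,\pi^{\ast})=\lambda$ for the stationary deterministic policy $\pi^{\ast}$ generated by the selector $\mu$. Combined, these two yield $J^{opt}(\nu)=\lambda$ for all $\nu$, and hence $J^{opt}=\lambda$.

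For (i), first observe that the ACOE, together with the definition of the supremum, yields for every $s\in\mathcal{S}$ and every admissible action $a\in A_s(s)$ the pointwise inequality
\bea
g(s,a)+\int P_W(dw|s,a)\,h(F(s,a,w)) \;\le\; \lambda+h(s).
\eea
Fixing any $\pi=(\mu_1,\mu_2,\ldots)\in\Pi_{HD}$ and any initial distribution $\nu$, I would substitute $s=S_{t-1}$ and $a=\mu_t(\Phi_t)$, take expectations under the induced law $\E^{\pi}_{\nu}$ (using the tower property so that the inner integral against $P_W(\cdot|S_{t-1},\mu_t(\Phi_t))$ equals $\E^{\pi}_{\nu}[h(S_t)\mid \Phi_t]$), and sum from $t=1$ to $t=N$. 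The middle terms telescope to give
\bea
\sum_{t=1}^{N} \E^{\pi}_{\nu}\!\left[g(S_{t-1},\mu_t(\Phi_t))\right] \;\le\; N\lambda+\E_{\nu}[h(S_0)]-\E^{\pi}_{\nu}[h(S_N)].
\eea
Dividing by $N$ and using boundedness of $h$ (so the endpoint correction is $O(1/N)$), then taking $\liminf_{N\to\infty}$, delivers $J(\nu,\pi)\le\lambda$.

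For (ii), the selector $\mu$ attains the supremum in the ACOE, so the inequality above becomes an equality at $a=\mu(s)$, for every $s$. Under the stationary deterministic policy $\pi^{\ast}$ the telescoping identity then holds with equality, and after dividing by $N$ and sending $N\to\infty$ one obtains $J(\nu,\pi^{\ast})=\lambda$. Since $J^{opt}(\nu)\le \lambda$ by (i) and $J^{opt}(\nu)\ge J(\nu,\pi^{\ast})=\lambda$, we conclude $J^{opt}(\nu)=\lambda$ and $\pi^{\ast}$ is optimal, as claimed.

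The main technical subtlety, rather than any single hard step, is measurability bookkeeping: lifting the pointwise ACOE inequality to an expected-value identity for an arbitrary history-deterministic policy under a general Borel state space requires measurability of $\mu_t$ (which is built into the definition of $\Pi_{HD}$) and measurability of the map $(s,a)\mapsto \int P_W(dw|s,a)\,h(F(s,a,w))$, the latter guaranteed by the measurable stochastic kernel $Q(\cdot|s,a)$ on $\mathcal{B}(\mathcal{S})\times\mathcal{K}$ already set up earlier in this section. The only place where boundedness of $h$ is used is to make the endpoint correction vanish as $N\to\infty$; relaxing that hypothesis would require an additional uniform-integrability argument which the stated theorem avoids.
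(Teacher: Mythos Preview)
The paper does not prove this theorem; it is stated as a background result imported from \cite{Borkar_survey} (Theorem~6.1 there), so there is no in-paper proof to compare against. Your telescoping/verification argument is the standard one and is essentially the proof one finds in Borkar's survey and in most MDP textbooks, so in that sense your approach matches the intended reference.

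One small structural mismatch with the theorem as \emph{stated}: the first conclusion, $\lambda=J^{opt}$, is asserted \emph{before} the selector $\mu$ is assumed to exist, whereas your lower bound $J^{opt}\ge\lambda$ relies on part~(ii), i.e.\ on the selector attaining the supremum. To prove the first sentence on its own you would need to treat the case where the supremum is not attained, for instance by choosing for each $\epsilon>0$ a measurable $\epsilon$-optimal selector $\mu_\epsilon$ with
\[
g(s,\mu_\epsilon(s))+\int P_W(dw\,|\,s,\mu_\epsilon(s))\,h(F(s,\mu_\epsilon(s),w))\;\ge\;\lambda+h(s)-\epsilon,
\]
and then running the same telescoping identity to get $J(\nu,\pi_\epsilon)\ge\lambda-\epsilon$, hence $J^{opt}\ge\lambda$. (The existence of such measurable $\epsilon$-selectors is exactly where the semi-continuous model assumptions mentioned in the Note following the theorem come in.) In every application in this paper the action set $\mathcal{A}$ is finite, so the supremum is a maximum and this refinement is never needed.
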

\begin{note}
 As in \cite{Borkar_survey}, the above theorem assumes the conditions of
semi-continuous model, (\cite{Borkar_survey}, Section 2.4). However in the set
of problems considered in our paper, all such
assumptions will be trivially met such as the transition kernel being weakly
continuous in $\mathcal{K}$ and the continuity of $g$. For brevity, we
omit
explicitly mentioning such
assumptions before invoking the above theorem in the sections to follow.
\end{note}

\subsection{Constrained Control}
\label{constrainedcontrol}
In constrained control, the
system is characterized by the tuple
$(\mathcal{S},A_s,\mathcal{A},\mathcal{W},F,P_{S},P_{W},g,\mathbf{l},\mathbf{
\Gamma } )$. With all the terms carrying the same meaning as in previous
subsection, $\mathbf{l}=\{l_1(\cdot),\cdots,l_k(\cdot)\}$ and
$\mathbf{\Gamma}=\{\Gamma_1,\cdots,\Gamma_k\}$ are respectively $k$-dimensional
constraint functions (defined on $\mathcal{K}$) and cost
vectors for some $k\in\mathbf{N}$. the dynamics of the system are
precisely the same as in the unconstrained case, the objective here
being,
\bea
\mathbf{maximize } \mbox{ }&&J(\nu,\mu)\nonumber\\
\mathbf{subject}\mbox{ }\mathbf{to}\mbox{ }&&J^{c}_i(\nu,\mu)\le\Gamma_i\mbox{
}\forall\mbox{ }i=1,\cdots,k,
\eea
where,
\bea
J({\nu,\pi})\stackrel{\triangle}{=}\liminf_{N\rightarrow\infty}\E^{\pi}_{\nu}
\left [ \frac { 1 } { N } \sum_ { t=1 }^{N}
g(S_{t-1},\mu_{t}(\Phi_{t}))\right],
\eea
is the average cost and,
\bea
J_i^c({\nu,\pi})\stackrel{\triangle}{=}\liminf_{N\rightarrow\infty}\E^{\pi}_{\nu
}
\left [ \frac { 1 } { N } \sum_ { t=1 }^{N}
l_i(S_{t-1},\mu_{t}(\Phi_{t}))\right]\mbox{ }\forall\mbox{ }i=1,\cdots,k,
\eea
are the constraints. \cite{Borkar_survey} and \cite{Altman} provide a treatment
of this problem
but only for denumerable states. We here present the more general framework of 
\cite{KuranoNakagamiHuang}, with compact state and action spaces. The
Lagrangian, $L$, associated with the
problem is defined as,
\bea
L((\nu,\pi),\lambda)=J(\nu,\pi)+\sum_{i=1}^{k}\lambda_i(\Gamma_i-J_i^c(\nu,\mu))
,
\eea
for any $(\nu,\pi)\in P(\mathcal{S})\times\Pi_{HD}$ and
$\lambda=(\lambda_1,\cdots,\lambda_k)\in\mathbf{R}^{k}_{+}$ (positive orthant
of the $k$-dimensional Euclidean space).\par
The following
theorem gives conditions of optimality of a particular
initial state distribution and a policy. 
\begin{theorem}\label{theorem2}[Theorem 2.3 of \cite{KuranoNakagamiHuang}]
Assume the following conditions for the tuple
$(\mathcal{S},A_s,\mathcal{A},\mathcal{W},F,P_{S},P_{W},g,\mathbf{l},\mathbf{
\Gamma } )$,
\begin{itemize}
 \item [C1]$\mathcal{S}$ and $\mathcal{K}$ are compact.
 \item [C2] $g\in C_b(\mathcal{K})$ and $l_i\in
C_b(\mathcal{K}), \mbox{ }\forall\mbox{ } i=1,\cdots,k$.
 \item [C3] For all $x_n\rightarrow x$ and $a_n\rightarrow a$,
$Q(\cdot|x_n,a_n)$, converges weakly to $Q(\cdot|x,a)$.
 \item [C4] (\textit{Slater's Condition})  There exists a
$(\overline{\nu},\overline{\pi})\in P(\mathcal{S})\times \Pi_{HD}$ such that,
\bea
J_i^c(\overline{\nu},\overline{\pi})<\Gamma_i \mbox{ }\forall\mbox{
}i=1,\cdots,k.
\eea
\end{itemize}
Under the conditions C1-C4, the Lagrangian $L(\cdot,\cdot)$ has a saddle point
with a randomized stationary policy, i.e., $\exists\mbox{ } \lambda^{\ast}\ge 0$
and
$(\nu^{\ast},\pi^{\ast})\in P(\mathcal{S})\times\Pi_{SR}$ such that,
\bea
L((\nu,\pi),\lambda^{\ast})\le
L((\nu^{\ast},\pi^{\ast}),\lambda^{\ast})\le
L((\nu^{\ast},\pi^{\ast}),\lambda),\mbox{ }\forall\mbox{ }(\nu,\pi)\in
P(\mathcal{S})\times\Pi, \mbox{ }\lambda\ge 0,
\eea
which implies (from Theorem 2.1 of \cite{KuranoNakagamiHuang}) that
$(\nu^{\ast},\pi^{\ast})$ is a constrained optimal pair.
Further (Theorem 2.2 of \cite{KuranoNakagamiHuang}),
\bea
L^{\ast}=L((\nu^{\ast},\mu^{\ast}),\lambda^{\ast})=\inf_{\lambda\ge
0}\sup_{(\nu,\pi)\in
P(\mathcal{S})\times\Pi_{HD}}L((\nu,\pi),\lambda)=\sup_{(\nu,\pi)\in
P(\mathcal{S})\times\Pi_{HD}} \inf_{\lambda\ge
0} L((\nu,\pi),\lambda),
\eea
and $L^{\ast}$ is the solution of the problem or the minimum expected average
distortion such that the constraints are satisfied.
\end{theorem}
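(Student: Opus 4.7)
The plan is to establish the saddle point via Lagrangian duality by reducing the constrained average-cost problem to a convex program over the set of invariant occupation measures on $\mathcal{K}$, then invoking a minimax theorem. The key observation is that although the Lagrangian is generally not concave in the policy itself, it becomes affine once policies are parameterized by their long-run occupation measures, at which point standard minimax machinery applies.

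First, I would parameterize every stationary randomized policy by its long-run occupation measure on $\mathcal{K}$. For each $\pi\in\Pi_{SR}$ with initial distribution $\nu$, Cesaro averaging under C1 and C3 produces an invariant probability measure $\mu^{\pi,\nu}$ on $\mathcal{K}$. Let $\mathcal{M}\subset\mathcal{P}(\mathcal{K})$ be the closure of the set of all such occupation measures. Using compactness of $\mathcal{K}$ (C1), weak continuity of $Q$ (C3), and Prokhorov's theorem, $\mathcal{M}$ is weakly compact. Convexity of $\mathcal{M}$ follows because any mixture of stationary randomized policies (by mixing their decision kernels against Bayes' rule applied to the stationary marginals) is itself a stationary randomized policy whose occupation measure is the corresponding convex combination. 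Under C2, the objective and constraint functionals reduce to
\[
J(\nu,\pi)\;=\;\int_{\mathcal{K}} g\, d\mu^{\pi,\nu},\qquad J_i^c(\nu,\pi)\;=\;\int_{\mathcal{K}} l_i\, d\mu^{\pi,\nu},
\]
so the Lagrangian becomes $L(\mu,\lambda)=\int g\, d\mu+\sum_{i=1}^{k}\lambda_i(\Gamma_i-\int l_i\, d\mu)$, which is affine in $\mu\in\mathcal{M}$ and affine in $\lambda\in\mathbb{R}_+^{k}$.

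Next, I would invoke Sion's minimax theorem. The pairing $(\mu,\lambda)\mapsto L(\mu,\lambda)$ is continuous, concave in $\mu$, and convex in $\lambda$; $\mathcal{M}$ is compact and convex; $\mathbb{R}_+^{k}$ is convex. Slater's condition C4 supplies a strictly feasible $(\overline\nu,\overline\pi)$, and hence a strictly feasible $\overline\mu\in\mathcal{M}$, which bounds the dual optima: any $\lambda^{\ast}$ achieving $\inf_{\lambda}\sup_{\mu}L(\mu,\lambda)$ must satisfy
\[
\sum_{i}\lambda^{\ast}_i\bigl(\Gamma_i-\textstyle\int l_i\, d\overline\mu\bigr)\;\le\;\sup_{\mu}L(\mu,\lambda^{\ast})-\textstyle\int g\, d\overline\mu\;\le\; 2\Lambda_{max},
\]
so the infimum in $\lambda$ can be restricted to a compact box. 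Sion's theorem then yields $\sup_{\mu}\inf_{\lambda} L=\inf_{\lambda}\sup_{\mu} L$, with attainment at some pair $(\mu^{\ast},\lambda^{\ast})$, which is the required saddle point, and the common value is $L^{\ast}$.

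Finally, I would recover a stationary randomized policy from $\mu^{\ast}$ by disintegration: write $\mu^{\ast}(dx,da)=\nu^{\ast}(dx)\kappa^{\ast}(da\mid x)$ against its marginal $\nu^{\ast}$ on $\mathcal{S}$, and let $\pi^{\ast}\in\Pi_{SR}$ be the stationary randomized policy with decision kernel $\kappa^{\ast}$. Invariance of $\mu^{\ast}$ under the transition kernel $Q_{\pi^{\ast}}$ induced by $\pi^{\ast}$ gives $J(\nu^{\ast},\pi^{\ast})=\int g\, d\mu^{\ast}$ and $J_i^{c}(\nu^{\ast},\pi^{\ast})=\int l_i\, d\mu^{\ast}$, so the saddle point is realized by a pair in $\mathcal{P}(\mathcal{S})\times\Pi_{SR}$; the saddle-point inequalities then directly imply primal feasibility, complementary slackness, and optimality. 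The main technical obstacle is verifying that $\mathcal{M}$ is \emph{exactly} the set of invariant measures induced by stationary randomized policies and that it is weakly closed; this requires the measurable-selection and disintegration machinery for stochastic kernels, combined with the weak continuity of $Q$ from C3 in order to pass to the limit in the invariance equation $\mu=\mu Q_{\pi}$ along a weakly convergent sequence of occupation measures.
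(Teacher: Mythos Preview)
The paper does not prove this theorem at all: it is stated as background material in Section~\ref{constrainedcontrol} and attributed in its entirety to \cite{KuranoNakagamiHuang} (Theorems~2.1--2.3 there). There is therefore no in-paper proof to compare your proposal against.

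That said, your sketch follows the standard route for such results and is broadly correct in spirit: reduce to the convex compact set of invariant occupation measures on $\mathcal{K}$, note that both the reward and the constraint functionals are affine in the occupation measure, use Slater's condition to confine the dual variable to a compact box, apply a minimax theorem, and then disintegrate the optimal occupation measure to recover a stationary randomized policy. This is essentially the program carried out in the constrained-MDP literature (e.g., Altman, Borkar, and indeed Kurano--Nakagami--Huang). Two places in your sketch would need tightening to be a full proof: (i) your argument for convexity of $\mathcal{M}$ via ``mixing decision kernels against Bayes' rule'' is heuristic---the clean way is to show that the set of invariant measures on $\mathcal{K}$ satisfying $\mu(B\times\mathcal{A})=\int Q(B\mid x,a)\,\mu(dx,da)$ is convex by linearity of that constraint, and then identify each such $\mu$ with a stationary randomized policy via disintegration; (ii) you correctly flag the closure issue, but you also need that the \emph{history-dependent} supremum in the Lagrangian equals the supremum over $\mathcal{M}$, which requires the standard result that for any $\pi\in\Pi_{HD}$ the Ces\`aro limit points of its expected empirical measures lie in $\mathcal{M}$ (this is where C1--C3 do the work). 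With those two points filled in, your argument would constitute a legitimate proof of the cited theorem.
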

\begin{note}
 \label{note2}
In all the settings considered henceforth, $\forall\ s\in\mathcal{S}$,
$A_s(s)=\mathcal{A}$, hence with benign abuse of notation, we will drop $A_s$
from the tuple associated with our description.
\end{note}

\section{Real-Time Coding with Limited Lookahead : Complete Memory}
\label{rtccomplete}
The problem we described in Section \ref{problem} (Fig.
\ref{feedbackmemoryrtclookahead}) is an abstraction of a real time
communication problem
with the encoder having a fixed lookahead of the future source symbols and a
perfect unit-delay feedback of the channel output symbols. In this section, we
show that this problem can be formulated as a controlled Markov chain process
with average cost
criterion, and derive an optimality equation. Before that, we modify our
source to concentrate on an equivalent problem. Note that the i.i.d. source,
$S=\{U_i\}_{i\in\mathbf{N}}$ considered can be replaced by a markov source
$S_{M}=\{V_i\}_{i\in\mathbf{N}}$ such that,
$V_i=U_i^{i+d}\in\mathcal{U}^{d+1}$. Since the source $S$ is i.i.d., the
transition kernel for
this Markov process $S_M$ from $v=(u_1,u_2,\cdots,u_{d+1})$ to
$\tilde{v}=(\tilde{u}_1,\tilde{u}_2,\cdots,\tilde{u}_{d+1})$ is given by,
\bea
K(v,\tilde{v})=P(\tilde{v}|v)=\1_{\{(u_2,\cdots,u_{d+1})=(\tilde{u}_1,\cdots,
\tilde{u}_d)\}}P_U(\tilde{u}_{d+1}).
\eea
The transition matrix is denoted
 by $\mathbf{K}$. Let us assume the distribution of initial state is
$P_V$. Also there is no loss of
optimality in considering encoding functions to be $\card{\mathcal{V}}$
dimensional mappings, $\{f_{e,i}(v,V^{i-1},Y^{i-1})\}_{v\in\mathcal{V}}$. The 
effective problem with modified source, $S_M$ is now a real-time communication
problem as in Fig. \ref{feedbackmemoryrtc} with no lookahead. For this modified
problem, we seek
to minimize the average reward,
\bea
\inf\limsup_{n\rightarrow\infty}\frac{1}{n}\E\left[\sum_{i=1}^{n}
\tilde{\Lambda}(V_i ,
\hat { V } ^ {
opt}_i(Y^i))\right]=\inf\limsup_{n\rightarrow\infty}\frac{1}{n}\E\left[\sum_{i=1
}^{n}
\Lambda(U_i ,
\hat {U} ^ {
opt}_i(Y^i))\right],
\eea
where $\tilde{\Lambda}(V_i ,
\hat { V } ^ {
opt}_i(Y^i))=\Lambda(U_i ,
\hat {U} ^ {
opt}_i(Y^i))$.
In this section we construct an average cost optimality equation for the
equivalent problem in Fig. \ref{feedbackmemoryrtc} and complete memory, i.e.
$Z_i=Y^{i}$. 
\begin{figure}[htbp]
\begin{center}
\scalebox{0.6}{\input{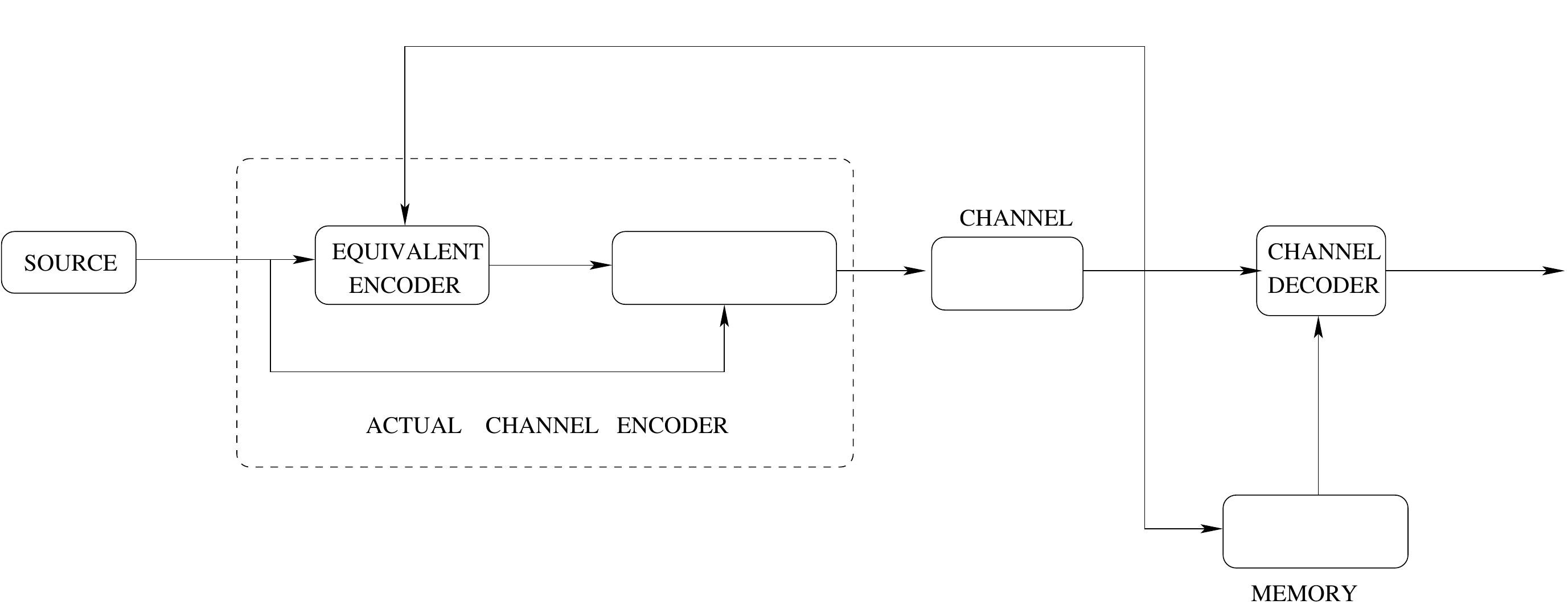_t}}
\caption{Equivalent problem to Fig. \ref{feedbackmemoryrtclookahead}, with
memoryless source $S=\{U\}_{i\in\mathbf{N}}$ transformed to a Markov source,
$S_M=\{V\}_{i\in\mathbf{N}}$.}
\label{feedbackmemoryrtc}
\end{center}
\end{figure}

\subsection{Average Cost Optimality Equation}
\label{rtccompleteacoe}
\begin{definition}[Bayes Envelope and Bayes Response]\label{def1}
Consider a random variable $X$ taking values in a finite alphabet $\mathcal{X}$
with distribution $\sim P_X$ and $\hat{x}\in\hat\mathcal{X}$ is our guess. The
loss function $\Lambda :
\mathcal{X}\times\hat{\mathcal{X}}\rightarrow\mathbf{R}$ can be
understood as quantifying the discrepancy in the actual value of $X$ and its
estimate. An estimate is good if its expected loss $\E[\Lambda(X,\hat{X})]$ is
small. We define the \textit{Bayes Envelope} as
$B(P_X)=\min_{\hat{x}}\E_{P_X}[\Lambda(X,\hat{x})]$. This represents
the minimal expected loss value associated with the best guess possible. The
best guess is called the \textit{Bayes Response} to $P_X$ and is denoted as
$\hat{X}_{Bayes}(P_X)=\argmin_{\hat{x}}\E[\Lambda(X,\hat{x})]$, where ties are
resolved arbitrarily. In the presence
of observation, the optimal estimator of X based on Y in the sense of
minimizing expected loss under $\Lambda$ is given by
$\hat{X}_{Bayes}(P_{X|Y})=\argmin_{\hat{x}}\E[\Lambda(X,\hat{x})|Y]$.
Note that in general, the \textit{Bayes} response depends on the loss
function, this dependence is implied whenever we use \textit{Bayes} response.
\end{definition}

\begin{lemma}\label{lemma1}
 The optimal decoding rule for the problem in Fig. \ref{feedbackmemoryrtc}
is given by,
\bea
\hat{V}^{opt}_{i}(Y^i)=\hat{V}_{Bayes}(P_{V_i|Y^{i}}).
\eea
\end{lemma}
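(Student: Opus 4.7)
The plan is to exploit the fact that the decoder's choice at time $i$ is a purely local matter: it affects only the instantaneous distortion contribution $\Lambda(V_i,\hat{V}_i)$ and has no influence on the encoding (which sees $U^{i+d}$ and $Y^{i-1}$ only), on the channel (which is memoryless), or on the memory recursion $Z_i = f_{m,i}(Z_{i-1},Y_i)$. Consequently, for any fixed triple $(f_e,f_m)$ the cumulative expected distortion decomposes into a sum of independent pointwise optimization problems, one per symbol index $i$.

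First, I would fix an arbitrary encoding rule $f_e$ (and in the finite-memory generalization, an arbitrary memory rule $f_m$), and write
\begin{eqnarray*}
\E\!\left[\tfrac{1}{N}\sum_{i=1}^{N}\Lambda(V_i,\hat{V}_i(Y^i))\right]
= \tfrac{1}{N}\sum_{i=1}^{N}\E\!\left[\E\!\left[\Lambda(V_i,\hat{V}_i(Y^i))\,\big|\,Y^i\right]\right],
\end{eqnarray*}
using the tower property and the fact that, under the complete-memory assumption, $\hat{V}_i$ is a deterministic function of $Y^i$. The inner conditional expectation is, for each realization $y^i$, a function of $\hat{V}_i(y^i)\in\hat{\mathcal{V}}$ alone, and is minimized by the Bayes response to the posterior $P_{V_i\mid Y^i=y^i}$, by the very definition of the Bayes envelope in Definition 1. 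Thus setting $\hat{V}_i(y^i) = \hat{V}_{Bayes}(P_{V_i\mid Y^i=y^i})$ simultaneously minimizes each term in the sum, and hence minimizes the finite-horizon average; taking $\limsup_{N\to\infty}$ preserves optimality.

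The second step is to verify that this pointwise optimization is truly unconstrained, i.e.\ that choosing the Bayes decoder at time $i$ does not deteriorate performance at times $j\neq i$. Because the encoder only has access to $(U^{i+d},Y^{i-1})$ and the channel law $P_{Y|X}$ does not depend on past reconstructions, the joint distribution of $\{V_j,Y_j\}_{j\ge 1}$ is determined entirely by $(f_e,P_U,P_{Y|X})$, independently of the decoder. Thus the per-symbol minimizers can be selected jointly, yielding the claimed $\hat V_i^{opt}(Y^i)=\hat V_{Bayes}(P_{V_i\mid Y^i})$.

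There is no real obstacle here; the lemma is essentially the statement that, absent any downstream effect of the reconstructions, Bayes-optimal per-symbol estimation is globally optimal. The only point that requires a line of care is checking that the posterior $P_{V_i\mid Y^i}$ is well-defined and computable from the encoder's policy and the channel law (which it is, by a straightforward recursive update involving the Markov transition kernel $K$ for $\{V_i\}$ and the channel $P_{Y|X}$ composed with the encoder map); this also foreshadows the belief-state construction needed in the ACOE derivation that follows.
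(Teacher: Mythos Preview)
Your proof is correct and follows essentially the same approach as the paper: fix the encoding rule, condition on $Y^i$, observe that each term is minimized pointwise by the Bayes response to $P_{V_i|Y^i}$, and conclude via the tower property and taking the $\limsup$. Your additional remarks on why the decoder's choice has no downstream effect and on the computability of the posterior are helpful elaborations but not strictly needed for the lemma itself.
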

\begin{proof}
 Fix $n$ and the encoding rule. From the definition of \textit{Bayes} response,
\bea
\E\left[\tilde{\Lambda}(V_i,\hat{V}_i(y^i))\Big{|}Y^i=y^i\right]
&\ge&\E\left[\tilde{\Lambda}(V_i,\hat{V}_{Bayes}(P_{V_i|y^i}))\Big{
|}Y^i=y^i\right ] ,
\eea
which implies,
\bea
\E\left[\tilde{\Lambda}(V_i,\hat{V}_i(Y^i))\right]
&\ge&\E\left[\tilde{\Lambda}(V_i,\hat{V}_{Bayes}(P_{V_i|Y^i}))\right].
\eea
Thus we have the following lower bound on the expected average cost,
\bea
\limsup_{n\rightarrow\infty}\frac{1}{n}\E\left[\sum_{i=1}^{n}\tilde{\Lambda}(V_i
, \hat { V }
_i(Y^i))\right ] \ge\limsup_{n\rightarrow\infty}\frac { 1 } {
n}\E\left[\sum_{i=1}^{n}\tilde{\Lambda}(V_i,\hat{V}_{Bayes}(P_{V_i|Y^i} ))\right
],
\eea
which is attained by decoding rule
$\hat{V}^{opt}_{i}(Y^i)=\hat{V}_{Bayes}(P_{V_i|Y^{i}})$. Thus the optimal
decoding for the
original source, $\hat{U}^{opt}_{i}(Y^i)=\hat{U}_{Bayes}(P_{U_i|Y^{i}})$.
\end{proof}

Fix the decoding rule to be the optimal rule
$\{\hat{V}^{opt}_i\}_{i\in\mathbf{N}}$ as above. Consider the state sequence for
this problem,
$S_i=(V_i,P_{V_i|Y^i})\in\mathcal{S}$.
$P_{V_i|Y^i}$ denotes the belief of the encoder on the
source symbol given all the past and the present channel outputs. Let us denote
it by a $\card{\mathcal{V}}$-dimensional non-negative probability
(column) vector 
$\beta_{Y^i}$. As source symbols
takes values in a finite alphabet, the state space $\mathcal{S}$ is a compact
subset of
Borel space. Consider the disturbance to
be $W_i=(V_i,Y_i)$, which takes values in a finite set,
$\mathcal{V}\times\mathcal{Y}$. The action is history dependent,
$A_i=f_{e,i}(S_0,V^{i-1},Y^{i-1})=f_{e,i}(S_0,W^{i-1})$ (here $S_0$ is some
fixed
initial 
state). $P_{S}$ is some initial distribution. From now on we will use
$f_{e,i}(V^{i-1},Y^{i-1})$ interchangeably with
$f(S_0,V^{i-1},Y^{i-1})$ to denote $A_i$ as $S_0$ is fixed. The
action set is the set of mappings from $\mathcal{V}$ to $\mathcal{X}$, hence
$\card{\mathcal{A}}=\card{\mathcal{X}}^{\card{\mathcal{V}}}$, which is finite.
Note,
\bea
P(W_i|S^{i-1},A^i)&=&P(V_i,Y_i|V^{i-1},A^i,\beta_{Y_1},\cdots,\beta_{Y^{
i-1}})\\
&\stackrel{(\ast)}{=}&K(V_{i-1},V_i)P(Y_i|X_i=A_i(V_i))\label{W11}\\
&=&P_W(W_i|S_{i-1},A_i),\label{W1}
\eea
where ($\ast$) follows from the fact that $\{V_i\}_{i\in\mathbf{N}}$ is Markov,
and from the DMC property of the channel as in Eq. (\ref{DMC}). Hence $W_i\sim
P_W(\cdot|S_{i-1},A_i)$ where $P_W$ is given by Eqs. (\ref{W11}) and (\ref{W1}).

\begin{lemma}\label{lemma2}
Given knowledge of the entire past history of actions, states and disturbance,
the current state evolves according to a deterministic function of the past
state, current action and the current disturbance, i.e.,
\bea
S_i=F(S_{i-1},A_i,W_i).
\eea
\end{lemma}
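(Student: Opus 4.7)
My plan is to verify the claim componentwise, writing $S_i = (V_i, \beta_{Y^i})$ and exhibiting each coordinate as a deterministic function of $(S_{i-1}, A_i, W_i)$. The first coordinate is trivial since $W_i = (V_i, Y_i)$ already exposes $V_i$. The substance lies in showing that the belief $\beta_{Y^i} = P_{V_i | Y^i}$ is obtained from $\beta_{Y^{i-1}}$ by a one-step Bayesian filter whose parameters are determined entirely by the current action $A_i$ and current disturbance $W_i$.

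For the belief update, I would write Bayes' rule as
\begin{eqnarray*}
\beta_{Y^i}(v) \;\propto\; P(V_i = v,\, Y_i \mid Y^{i-1}, A_i) \;=\; \sum_{v' \in \mathcal{V}} P(Y_i \mid V_i = v,\, V_{i-1} = v',\, Y^{i-1}, A_i)\, K(v', v)\, \beta_{Y^{i-1}}(v'),
\end{eqnarray*}
where the Markov transition $K(v',v)$ comes from the definition of $S_M$. Then I would invoke the DMC property (\ref{DMC}) together with the fact that, once the encoding function $A_i$ is fixed, $X_i = A_i(V_i)$ is a deterministic function of $V_i$; this collapses $P(Y_i \mid V_i = v, V_{i-1}=v', Y^{i-1}, A_i)$ to $P_{Y|X}(Y_i \mid A_i(v))$, which depends on $v'$ and $Y^{i-1}$ only through $Y_i$ and the function $A_i$. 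Substituting back gives
\begin{eqnarray*}
\beta_{Y^i}(v) \;=\; \frac{P_{Y|X}(Y_i \mid A_i(v)) \sum_{v'} K(v',v)\, \beta_{Y^{i-1}}(v')}{\sum_{\tilde v} P_{Y|X}(Y_i \mid A_i(\tilde v)) \sum_{v'} K(v', \tilde v)\, \beta_{Y^{i-1}}(v')},
\end{eqnarray*}
an expression depending only on $\beta_{Y^{i-1}} \subset S_{i-1}$, on $A_i$, and on $Y_i \subset W_i$. Defining $F$ to output $V_i$ in the first coordinate (read from $W_i$) and the right-hand side above in the second coordinate completes the construction.

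The only delicate point is ensuring the conditioning event $\{Y^{i-1}, A_i\}$ in the Bayesian update can be replaced by the coarser statistic $\beta_{Y^{i-1}}$; this is the standard sufficient-statistic argument for belief states and follows because $A_i$, being the encoding rule for step $i$, is known once we fix the policy, and the Markov property of $\{V_i\}$ makes $\beta_{Y^{i-1}}$ a sufficient summary of $Y^{i-1}$ for predicting $(V_i, Y_i)$ under any action $A_i$. I expect this to be the main conceptual hurdle to articulate precisely, but there are no technical obstacles beyond bookkeeping: everything reduces to a denominator-nonvanishing issue on the support of $\beta_{Y^i}$, which is handled by restricting $F$'s definition to the set of $(S_{i-1}, A_i, W_i)$ triples reachable with positive probability.
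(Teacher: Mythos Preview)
Your proposal is correct and follows essentially the same route as the paper: both arguments compute $\beta_{Y^i}(v)=P(V_i=v\mid Y^i)$ via Bayes' rule, use the Markov property of $\{V_i\}$ to introduce $K(v',v)\beta_{Y^{i-1}}(v')$, invoke the DMC property to collapse the channel term to $P_{Y|X}(Y_i\mid A_i(v))$, and read off the same normalized update $G(\beta_{Y^{i-1}},A_i,Y_i)$; the paper simply writes this in matrix notation $\beta_{Y^{i-1}}^{T}\mathbf{K}(v)$ rather than as an explicit sum. Your closing remarks about $\beta_{Y^{i-1}}$ being a sufficient statistic and about positivity of the denominator are additional commentary that the paper omits, but they do not change the approach.
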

\begin{proof}
\bea
P_{V_i=v|Y^i}&=&\frac{P_{v,Y_i|Y^{i-1}}}{P_{Y_i|Y_{i-1}}}\\
&=&\frac{\sum_{V_{i-1}}P_{V_{i-1},v,Y_i|Y^{i-1}}}{\sum_{V_{i-1},V_i}P_{V_{i-1}
,V_i,Y_i|Y_{i-1}}}\\
&=&\frac{\sum_{V_{i-1}}P_{V_{i-1}|Y^{i-1}}
K(V_{i-1},v)P(Y_i|X_i=A_i(v))}{\sum_{V_{i-1},V_i}P_{V_{i-1}|Y^{i-1}}
K(V_{i-1},V_i)P(Y_i|X_i=A_i(V_i))}.
\eea 
Therefore,
\bea
\beta_{Y^i}&=&\left[P_{V_i=v|Y^i}\right]_{v\in\mathcal{V}}\\
&=&\left[\frac{\beta_{Y^{i-1}}^{T}\mathbf{K}(v)P(Y_i|A_i(v))}{\sum_{\tilde{v}
\in\mathcal{V}}\beta_{Y^{i-1}}^{T}\mathbf{K}(\tilde{v})P(Y_i|A_i(\tilde{v}))}
\right]_{v\in\mathcal{V}}\\
&=&G(\beta_{Y^{i-1}},A_i,Y_i)\label{G},
\eea
where $\mathbf{K}(v)=[K(\tilde{v},v)]_{\tilde{v}\in\mathcal{V}}$ is a column
vector. Since $W_i=(V_i,Y_i)$, $S_i=(V_i,\beta_{Y^i})$ Eq. (\ref{G}) implies,
\bea
S_i=(V_i,\beta_{Y^i})&=&(V_i,G(\beta_{Y^{i-1}},A_i,Y_i))\\
&=&G'(V_i,Y_i,A_i,\beta_{Y^{i-1}})\\
&=&F(S_{i-1},A_i,W_i)\label{F1}.
\eea
\end{proof}
Let,
\bea
g(S_i,A_{i+1})&=&c(S_i)\\
&=&c(\beta_{Y^i})\\
&=&-\E\left[\tilde{\Lambda}(V_i,\hat{V}_{Bayes}(\beta_{Y^i}))|Y^i\right].
\eea
Therefore,
\bea
\inf\limsup_{n\rightarrow\infty}\frac{1}{n}\E\left[\sum_{i=1}^{n}
\tilde{\Lambda}(V_i ,
\hat { V } ^ {
opt}_i(Y^i))\right]=-\sup\liminf_{n\rightarrow\infty}\frac{1}{n}\E\left[\sum_{
i=1 }^{n}g(S_{i-1},A_{i})\right].
\eea
Hence the tuple 
$\mathcal{T}=(\mathcal{S},\mathcal{A},\mathcal{W},F,P_S,P_W,g)$ forms a
controlled Markov process. The problem of finding the best channel encoder
(using the optimal decoder to be the Bayesian
$V_i(Y^i)=V^{opt}_i(P_{V_i|Y^i})$) in our problem of real time
communication is equivalent to the problem
of finding the optimal policy for the tuple $\mathcal{T}$ which maximizes the
average reward under the cost function $g$. The optimal reward is given by,
\bea
\lambda^{opt}_{\mathcal{T}}=\sup_{\pi}\liminf_{n\rightarrow\infty}\frac{1}{n}
\E\left[\sum_{i=1}^{n}g(S_{i-1},A_{i})\right].
\eea
Thus the ACOE for the controlled Markov process 
$\mathcal{T}=(\mathcal{S},\mathcal{A},\mathcal{W},F,P_S,P_W,g)$ which has the
generic form,
\bea
\lambda+h(s)&=&\sup_{a\in\mathcal{A}}\left[g(s,a)+\sum_{w\in\mathcal{W}}
P_W(w|s,a)h(F(s,a,w))\right], \mbox{ }\forall {s\in\mathcal{S}},
\eea
when specialized to our setting becomes, 
\bea
\lambda+h(v,\beta)&=&\sup_{a\in\mathcal{A}}\left[g(v,\beta,a)+\sum_{w\in\mathcal
 {W}}P_W(w|v,\beta,a)h(F(v,\beta,a,w))\right], \mbox{ }\forall
{v\in\mathcal{V},}\mbox{ }\beta\in\mathcal{P}(\mathcal{V}),
\eea
which becomes, upon substitution from Eq. (\ref{W11}), 
\bea
\lambda+h(v,\beta)&=&c(\beta)+\sup_{a\in\mathcal{A}}\left[\sum_{
(\tilde{v},\tilde{y})\in\mathcal
 {V}\times\mathcal{Y}}K(v,\tilde{v})P_{Y|X}(\tilde{y}|a(\tilde{v}))h(\tilde{v},
G(\beta,a,\tilde{y}))\right],\mbox{ }\forall
{v\in\mathcal{V},}\mbox{ }\beta\in\mathcal{P}(\mathcal{V}).
\eea
We will now transform back the setting from Markov source
$\{V_i\}_{i\in\mathbf{N}}$ to i.i.d. source $\{U_i\}_{i\in\mathcal{N}}$. Let
us denote $v=(u_1,u_2,\cdots,u_{d+1})$ and
$\beta = \beta(\hat{u}_1,\cdots,\hat{u}_{d+1})_{(\hat{u}_1,\cdots,\hat{u}_{d+1}
)\in\mathcal{U}^{d+1}}$. Note that,
\bea
\E\left[\tilde{\Lambda}(V_i,\hat{V}^{opt}(Y^i))\Big{|}Y^i\right]&=&\E\left[
\tilde{ \Lambda }
(V_i , \hat{V}_ { Bayes } (P_ { V_i|Y^i }
))\Big{|}Y^i\right]\\
&=&\E\left[\Lambda(U_i,\hat{U}^{opt}(Y^i))\Big{|}Y^i\right]\\
&=&\E\left[\Lambda(U_i,\hat{U}_{Bayes}(P_{
U_i|Y^i } ))\Big{|}Y^i\right]\\
&\stackrel{(\ast)}{=}&\min_{\hat{u}}\sum_{u\in\mathcal{U}}P_{U_i|Y^i}
(u)\Lambda(u,\hat{u}),
\eea
where $(\ast)$ follows from the Definition \ref{def1}. 
Hence, 
\bea\label{g1}
g(s,a)=c(\beta)=-\min_{\hat{u}}\sum_{u\in\mathcal{U}}\beta_1(u)\Lambda(u,\hat{
u } ),
\eea
where
$\beta_1(u)=\sum_{(\hat{u}_2,\cdots,\hat{u}_{d+1})\in\mathcal{U}^{d}}\beta(u,
\hat {u}_2,\cdots,\hat{ u } _ {d+1})$ is the marginal of $\beta$ on the first
component. Note that $g(\cdot)$ is continuous in $\beta$. Thus the transformed
ACOE to our original problem with i.i.d. source,

\begin{algorithm}
\bea
&&\lambda+h(u_1,\cdots,u_{d+1},\beta)\nonumber\\
&=&-\min_{\hat{u}}\sum_{u\in\mathcal{U}}\beta_1(u)\Lambda(u,\hat{
u } )+\max_ {                                         
a\in\mathcal{A
} } \left [ \sum_ {
(\tilde{u},\tilde{y})\in\mathcal
 {U}\times\mathcal{Y}}P_U(\tilde{u})P_{Y|X}(\tilde{y}|a(u_2,\cdots,u_{d+1}
,\tilde { u }
))h(u_2.\cdots,u_{d+1},\tilde{u},G(\beta,a,\tilde{y}
))\right],\nonumber\\&&\mbox{
}\forall
{(u_1,\cdots,u_{d+1})\in\mathcal{U}^{d+1},}\mbox{
}\beta\in\mathcal{P}(\mathcal{U}^{d+1}).\label{B1}
\eea
\end{algorithm}

\begin{note}[Structure of Optimal Policy]
\label{note3}
As $\mathcal{A}$ is finite, we have replaced the \textit{sup} with
\textit{max} in the Eq. (\ref{B1}). Specializing Theorem \ref{theorem1} for
the
above ACOE, if there exists, a constant, $\lambda^{\ast}$ and a measurable real
valued bounded 
function $h:\mathcal{S}\rightarrow\mathbf{R}$ such that equation Eq. (\ref{B1})
is satisfied for all
$(u_1,\cdots,u_{d+1})\in\mathcal{U}^{d+1},\beta\in\mathcal{P}(\mathcal{U}^{d+1}
)$ then
the minimum distortion $D(d)=-\lambda^{opt}=-\lambda^{\ast}$. Further, if there
exists
a
function, $\mu:\mathcal{S}\rightarrow\mathcal A$ such that the maximum for all
states in Eq. (\ref{B1}) is attained by
$\mu(s)=\mu(u^{d+1}_1,\beta)$, then the optimal encoding policy
is \textit{stationary} and depends on history only through the past state,
i.e., 
$\mu_i(\Phi_i)=\mu(\cdot,S_{i-1})=\mu(\cdot,U^{d}_{i-1},\beta_{Y^{i-1}})$, or
the input to the channel at $i^{th}$ time epoch, is
$X_i=A_i(U_i^{d+i})=\mu(U_i^{d+i},U^{d}_{i-1},\beta_{Y^{i-1}})=\mu(U_{i-1},
\cdots,U_ {d+1},\beta_{Y^{i-1}})$. Hence the optimal encoding in this case is a
stationary mapping into $\mathcal{X}$ which uses only $d+2$ source symbols
$U_{i-1}^{d+1}$ and the belief $\beta_{Y^{i-1}}$ that is updated by the Eq.
(\ref{G}).
\end{note}

\subsection{To Look or Not to Lookahead : Optimality of Symbol by Symbol
Policies}
\label{rtccompletelookahead}
In this section we derive conditions for \textit{stationary, symbol by
symbol} policies to be optimal. This means that we seek to identify situations
where the optimal encoding at time $i$ is given by, $X_i=\mu_{symbol}(U_i)$.

\begin{lemma}\label{lemma3}
When lookahead $d=0$, the minimum average
distortion is achieved by symbol-by-symbol encoding (and decoding) and given
by $D_{symbol}=\min_{\hat{X}:\mathcal{U}\rightarrow\mathcal{X}}
\E\left[ \Lambda\left(U , \hat { U } _ { Bayes } (P_ { U|Y } )\right)\right ]$
\end{lemma}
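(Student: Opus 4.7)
The plan is to prove matching achievability and converse bounds, reducing the problem at each time step to a single-letter Bayes estimation. For \emph{achievability}, I fix a deterministic map $\mu:\mathcal{U}\to\mathcal{X}$ and set $X_i=\mu(U_i)$. Because $\{U_i\}$ is i.i.d.\ and the channel is memoryless, $\{(U_i,X_i,Y_i)\}$ is i.i.d., so $U_i$ is independent of $Y^{i-1}$; the Bayes-optimal decoder supplied by Lemma~\ref{lemma1} therefore collapses into a function of $Y_i$ alone and produces per-symbol distortion $\E[\Lambda(U,\hat{U}_{Bayes}(P_{U|Y}))]$ under the joint law induced by $\mu$. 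Minimizing over $\mu$ attains $D_{symbol}$.

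For the \emph{converse}, fix any $d=0$ policy and combine it with the Bayes decoder $\hat{U}_i=\hat{U}_{Bayes}(P_{U_i|Y^i})$ from Lemma~\ref{lemma1}. Condition on $Y^{i-1}=y^{i-1}$. Since $U_i\sim P_U$ is independent of $(U^{i-1},Y^{i-1})$, averaging $X_i=f_{e,i}(U^{i-1},U_i,y^{i-1})$ over the posterior of $U^{i-1}$ given $y^{i-1}$ defines a stochastic symbol-by-symbol kernel
\[
Q_{y^{i-1}}(x\mid u) \;=\; \sum_{u^{i-1}}P(U^{i-1}=u^{i-1}\mid y^{i-1})\,\1_{\{f_{e,i}(u^{i-1},u,y^{i-1})=x\}}.
\]
Under this conditioning $(U_i,Y_i)$ has joint law $P_U(u)\,Q_{y^{i-1}}(x\mid u)\,P_{Y|X}(y\mid x)$, and the conditional expected Bayes distortion equals $D(Q_{y^{i-1}})$, where for a generic stochastic kernel $Q$,
\[
D(Q)\;=\;\min_{\hat{U}(\cdot)}\sum_{u,x,y}P_U(u)\,Q(x\mid u)\,P_{Y|X}(y\mid x)\,\Lambda(u,\hat{U}(y)).
\]

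For each fixed decoder $\hat{U}(\cdot)$ the inner summand is linear in $Q$, so $D(\cdot)$ is a pointwise minimum of linear functionals, hence concave on the polytope of stochastic kernels $\mathcal{U}\to\mathcal{X}$. A concave function on a polytope attains its minimum at an extreme point, and the extreme points here are precisely the deterministic maps $\mu:\mathcal{U}\to\mathcal{X}$. Hence $D(Q_{y^{i-1}})\ge D_{symbol}$ for every $y^{i-1}$, and taking expectation over $Y^{i-1}$ gives $\E[\Lambda(U_i,\hat{U}_i)]\ge D_{symbol}$ for every $i$, which forces the $\limsup$ average in the definition of $D(0)$ to be at least $D_{symbol}$, matching achievability. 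The main obstacle is the reduction from the history-dependent encoder to the memoryless randomized kernel $Q_{y^{i-1}}$: this relies crucially on the independence of $U_i$ from $(U^{i-1},Y^{i-1})$, which is specific to $d=0$ and an i.i.d.\ source, and would fail for $d\ge 1$ or non-i.i.d.\ sources. The concavity/extreme-point step is standard but must be applied carefully to the finite-dimensional polytope of channels so that feedback-induced randomization cannot beat a deterministic symbol-by-symbol map.
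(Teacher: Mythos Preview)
Your proof is correct, but the route differs from the paper's in the converse step. The paper conditions on the \emph{full} history $(U^{i-1},Y^{i-1})$ rather than just $Y^{i-1}$. Once $(U^{i-1},Y^{i-1})=(u^{i-1},y^{i-1})$ is fixed, the encoder $u\mapsto f_{e,i}(u^{i-1},u,y^{i-1})$ is already a deterministic map $\mathcal{U}\to\mathcal{X}$, and the decoder $y\mapsto f_{d,i}(y^{i-1},y)$ is a deterministic map $\mathcal{Y}\to\hat{\mathcal{U}}$; since $U_i\sim P_U$ is independent of the conditioning (i.i.d.\ source, $d=0$, memoryless channel), the conditional expected loss is immediately a single-letter cost bounded below by $D_{symbol}$, with no further argument needed. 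You instead condition only on $Y^{i-1}$, which leaves the encoder randomized through the posterior on $U^{i-1}$, and then invoke concavity of $Q\mapsto D(Q)$ together with the extreme-point structure of the channel polytope to eliminate randomization. Your approach is a bit heavier but makes explicit a point the paper leaves implicit: feedback and past source symbols can only induce randomization of the encoder, and randomization cannot beat the best deterministic symbol-by-symbol map. The paper's conditioning on $(U^{i-1},Y^{i-1})$ sidesteps this entirely and is the more elementary argument.
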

\begin{proof}
Consider the communication system in Fig. \ref{feedbackmemoryrtclookahead} with
lookahead, $d=0$. Thus this corresponds to a communication system with
memoryless source and memoryless channel, and causal encoding and causal
decoding with
unit delay feedback. We will first use standard information theoretic methods
to prove,
\bea\label{dsymbol}
D_{symbol}=\min_{\hat{X}:\mathcal{U}\rightarrow\mathcal{X},\hat{U}:\mathcal{Y}
\rightarrow\hat{\mathcal{U}}}
\E\left[ \Lambda\left(U , \hat { U }(Y )\right)\right ].
\eea
\textit{Achievability} :\\
Let $D_{min}$ denote the minimum distortion. Clearly $D_{symbol}$ is achievable
by encoding, $X(\cdot)$ and decoding, $\hat{U}(\cdot)$ which attain the minimum
in Eq. (\ref{dsymbol}). Hence $D_{min}\le D_{symbol}$.\\
\textit{Converse} :\\
Consider the chain of inequalities to prove $D_{min}\ge D_{symbol}$. Let $D$ be
the distortion achieved by any causal encoding and causal decoding.
Also note that minimizing over functions of the form, $f_{e,i}(U^i)$ and
$f_{d,i}(Y^i)$ is equivalent to minimizing over vector valued mappings of the
form, $f_{e,i}(\cdot,U^{i-1}):\mathcal{U}\rightarrow\mathcal{X}$ and 
$f_{d,i}(\cdot,Y^{i-1}):\mathcal{Y}\rightarrow\hat{\mathcal{U}}$ 
\bea
D&=&\limsup_{n\rightarrow\infty}\frac{1}{n}\sum_{i=1}^{n}\E\left[\Lambda(U_i,
\hat { U } _i)\right ] \\
&=&\limsup_{n\rightarrow\infty}\frac{1}{n}\sum_{i=1}^{n}\E\left[\E\left[
\Lambda(U_i,\hat{U}_i)|U^{i-1}, Y^{i-1}\right  ] \right]\\
&=&\limsup_{n\rightarrow\infty}\frac{1}{n}\sum_{i=1}^{n}\E\left[\E\left[
\Lambda(U_i,\hat{U}_i)|f_{e,i},f_{d, i}, U^ { i-1 } ,
Y^{i-1}\right  ] \right].
\eea
Note that,
\bea
\E\left[\Lambda(U_i,\hat{U}_i)|f_{e,i},f_{d,
i}, U^ { i-1 } ,
Y^{i-1}\right  ]
&=&\sum_{u\in\mathcal{U},y\in\mathcal{Y}}P_U(u)P_{
Y|X } (y|f_ { e , i } (u))\Lambda(u,f_{d,i}(y))\\
&\ge&\min_{X,\hat{U}}\sum_{u\in\mathcal{U},y\in\mathcal{Y}}P_U(u)P_{
Y|X } (y|X(u))\Lambda(u,\hat{U}(y))\\
&=&D_{symbol},
\eea
which implies, $D\ge D_{symbol}$ for all possible achievable distortions, which
implies, $D_{min}\ge D_{symbol}$. What is left is to show that,
\bea
D_{symbol}=\min_{\hat{X}:\mathcal{U}\rightarrow\mathcal{X},\hat{U}
:\mathcal{Y}
\rightarrow\hat{\mathcal{U}}}
\E\left[ \Lambda(U , \hat { U }(Y) )\right ]
=\min_{\hat{X}:\mathcal{U}\rightarrow\mathcal{X}}
\E\left[ \Lambda\left(U , \hat { U } _ { Bayes } (P_ { U|Y } )\right)\right ],
\eea
which is equivalent to showing that for any encoding rule the optimal decoding
rule, $\hat{U}$ is the \textit{Bayes} response $\hat{U}_{Bayes}(P_{U|Y})$ which
follows from the definition of the \textit{Bayes} response.
\end{proof}
The above proof shows that if \textit{stationary symbol by symbol} policy is
optimal for controlled Markov process of Section \ref{rtccompleteacoe}, then
the optimal reward is given by,
\bea\label{lsymbol}
\lambda_{symbol}=-\min_{\hat{X}:\mathcal{U}\rightarrow\mathcal{X}}
\E\left[ \Lambda\left(U , \hat { U } _ { Bayes } (P_ { U|Y } )\right)\right ].
\eea
Note that the joint distribution of (U,Y) on the right hand side of
Eq. (\ref{lsymbol}) and hence the expected loss is dependent on the encoding
rule $\hat{X}$. To simplify the notation we denote,
$\Lambda(U_i,\hat{U}_{Bayes}(\cdot))$
by $\Lambda(U_i,\cdot)$, the \textit{Bayes} response is implied in this
notation. Also, for a given source, $P_U$, channel $P_{Y|X}$, and a
symbol by symbol encoding policy $\mu_s$, i.e., $X_i=\mu_s(U_i)$, let
$f_{\mu_s}\left[P_U,Y\right]$ denote the posterior, $P_{U|Y}$, when source is
distributed as $P_U$, and encoding policy is $\mu_s$ through the channel
$P_{Y|X}$. Note for brevity we omit indicating $P_{Y|X}$ in the argument of
$f_{\mu_s}(\cdot)$ though the posterior depends on channel also.  
Hence if $\mu_s$ is the minimizer in Eq. (\ref{lsymbol}), then
$\lambda_{symbol}$ is given by
$-\E\left[\Lambda\left(U,f_{\mu_s}\left[P_U,
Y\right]\right)\right]$. To state our next result, pertaining to
the optimality of symbol by symbol coding, we introduce another bit of notation.
The evolution of the posterior is through the function $G(\beta, a, y)$, i.e.,
\bea
\tilde{\beta}&=&G(\beta, a, y)\\
&=&\left[\frac{\beta^{T}\mathbf{K}(v)P(y|a(v))}{
\sum_{\tilde{v}
\in\mathcal{V}}\beta^{T}\mathbf{K}(\tilde{v})P(y|a(\tilde{v}))}
\right]_{v\in\mathcal{V}}.
\eea
Also let for a distribution $\beta$, 
$B(\beta)=\min_{\hat{u}\in\mathcal{U}}\sum_{\overline{u}\in\mathcal{U}}
\beta(\overline{u})\Lambda(\overline{u},\hat{u}) $, which is the Bayes Envelope
for the given loss function.
\begin{theorem}\label{theorem3}
Denote the encoding function which achieves the minimum in Eq. (\ref{dsymbol}) 
by $\mu_s$. For the problem setup depicted in Fig.
\ref{feedbackmemoryrtclookahead} with the ACOE in Eq. (\ref{B1}) for a given
positive lookahead $d \geq 1$, \textit{stationary symbol by symbol} policy is
optimal
if the following holds : 
\bea
&&\sum_{(u,y)\in\mathcal{U}\times\mathcal{Y}}P_U(u)P_{Y|X}
(y|\mu_s(u))B(f_{\mu_s}[P_U,y])+\sum_{k=2}^{d+1}\left[\sum_{
y\in\mathcal{Y}} P_ { Y|X }
(y|\mu_s(u_{k}))B(f_{\mu_s}[\beta_k,y])\right]
\nonumber\\
&=&\sum_ {
(\tilde{u},\tilde{y})\in\mathcal{U}\times\mathcal{Y}}P_U(\tilde{u})P_{Y|X}
(\tilde { y}|a^{\ast}(u_2^{d+1},\tilde{
u}))B(\beta_1)\nonumber\\
&+&\sum_ {
(\tilde{u},\tilde{y})\in\mathcal{U}\times\mathcal{Y}}P_U(\tilde{u})P_{Y|X}
(\tilde { y}|a^{\ast}(u_2^{d+1},\tilde{
u}))\sum_{k=2}^{d}\left(\sum_{\hat{y}\in\mathcal{Y}}
P_{Y|X}(\hat{y}|\mu_s(u_{k+1}))B(f_{\mu_s}[\tilde{\beta}_k,\hat{y}
])\right)\nonumber\\
&+&\sum_ {
(\tilde{u},\tilde{y})\in\mathcal{U}\times\mathcal{Y}}P_U(\tilde{u})P_{Y|X}
(\tilde { y}|a^{\ast}(u_2^{d+1},\tilde{
u}))\left(\sum_{\hat{y}\in\mathcal{Y}}
P_{Y|X}(\hat{y}|\mu_s(\tilde{u}))B(f_{\mu_s}[\tilde{\beta}_{d+1},\hat{y}
])\right)
\nonumber
\eea 
\bea
\forall
{(u_1,\cdots,u_{d+1})\in\mathcal{U}^{d+1},}\mbox{
}\beta\in\mathcal{P}(\mathcal{U}^{d+1}),\label{eq1} 
\eea
where $a^{\ast}(\cdot)$ is the minimizer of
the right hand side of the above equation, $\tilde{\beta}=G(\beta,a,\tilde{y})$
and $\beta_k$
and $\tilde{\beta}_k$ denote the marginal of the $k^{th}$ component of $\beta$
and $\tilde{\beta}$, respectively.
\end{theorem}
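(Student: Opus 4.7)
The plan is to invoke Theorem \ref{theorem1} by exhibiting a constant $\lambda$ and a bounded function $h: \mathcal{S} \to \mathbf{R}$ for which the ACOE (\ref{B1}) is satisfied with $\lambda = \lambda_{symbol}$ from (\ref{lsymbol}). Since $\mu_s$ is always a valid encoder in the lookahead setting (it simply ignores the lookahead) and achieves per-symbol distortion $D_{symbol} = -\lambda_{symbol}$, once Theorem \ref{theorem1} yields $\lambda^{opt} = \lambda_{symbol}$ we immediately conclude $D(d) = D_{symbol}$ and hence the stationary optimality of $\mu_s$.

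I take the bounded candidate
\begin{equation*}
h(v, \beta) \;=\; d\cdot D_{symbol} \;-\; \sum_{k=2}^{d+1} \sum_{y\in\mathcal{Y}} P_{Y|X}\bigl(y\,\big|\,\mu_s(u_k)\bigr)\, B\bigl(f_{\mu_s}[\beta_k, y]\bigr),
\end{equation*}
where $v = (u_1, \ldots, u_{d+1})$ and $\beta_k$ denotes the $k$-th marginal of $\beta$; this is bounded since $0 \leq B \leq \Lambda_{max}$. The motivation is that $h$ is the negative differential cost-to-go of running $\mu_s$ from state $(v, \beta)$: under symbol-by-symbol encoding, the channel output at time $i+j$ depends on the source only through $U_{i+j}$, so the receiver refines its predictive marginal $\beta_{j+1}$ on that coordinate by a single Bayesian step, producing posterior $f_{\mu_s}[\beta_{j+1}, Y_{i+j}]$ with expected Bayes envelope $\sum_y P_{Y|X}(y|\mu_s(u_{j+1})) B(f_{\mu_s}[\beta_{j+1}, y])$. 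Summing over the $d$ transient lookahead steps (after which the process becomes stationary at per-step cost $D_{symbol}$) gives the displayed expression.

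Substituting $(\lambda_{symbol}, h)$ into (\ref{B1}) with $g(v, \beta, a) = -B(\beta_1)$, I expand $h(s')$ at the post-transition state $s' = (V', G(\beta, a, \tilde y))$, with $V' = (u_2, \ldots, u_{d+1}, \tilde u)$, and take the expectation over $(\tilde u, \tilde y) \sim P_U(\tilde u) P_{Y|X}(\tilde y | a(u_2, \ldots, u_{d+1}, \tilde u))$. The $k$-th summand of $h(s')$ splits into contributions for $k = 2, \ldots, d$ (where $V'^{(k)} = u_{k+1}$ is a carry-over lookahead coordinate, producing $\sum_y P_{Y|X}(y|\mu_s(u_{k+1})) B(f_{\mu_s}[\tilde \beta_k, y])$) and $k = d+1$ (where $V'^{(d+1)} = \tilde u$ is the fresh i.i.d. coordinate, giving $\sum_y P_{Y|X}(y|\mu_s(\tilde u)) B(f_{\mu_s}[\tilde \beta_{d+1}, y])$), with $\tilde \beta_k$ the $k$-th marginal of $G(\beta, a, \tilde y)$ via (\ref{G}). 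Using $\max_a \E[h(s')]_a = d\cdot D_{symbol} - \min_a(\mbox{expected sum})$ and rearranging the ACOE identity $\lambda_{symbol} + h(v, \beta) + B(\beta_1) = \max_a \E[h(s')]_a$, one arrives precisely at the equality asserted by (\ref{eq1}) with $a^*$ the minimizer. Hence the hypothesis (\ref{eq1}) is exactly the statement that the ACOE holds for our $(\lambda_{symbol}, h)$; Theorem \ref{theorem1} then yields $\lambda^{opt} = \lambda_{symbol}$ and the desired optimality.

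The main technical step is the algebraic matching in the preceding paragraph; two correspondences are worth special care. First, the constant term $B(\beta_1)$ on the RHS of (\ref{eq1}) (independent of $\tilde u, \tilde y$) arises from $g = -B(\beta_1)$ being moved across the ACOE, with the outer normalization $\sum_{\tilde u, \tilde y} P_U(\tilde u) P_{Y|X}(\tilde y|a^*(\cdot)) = 1$ collapsing the sum. Second, the quantities $\tilde \beta_k$ in the theorem must be identified, via (\ref{G}), with the $k$-th marginals of $G(\beta, a^*, \tilde y)$ — a direct but notationally heavy marginalization that is the main bookkeeping hurdle.
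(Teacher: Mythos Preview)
Your proof is correct and follows the same overall strategy as the paper: exhibit a candidate $(\lambda_{symbol},h)$ for which the ACOE (\ref{B1}) reduces precisely to condition (\ref{eq1}), then invoke Theorem~\ref{theorem1}. The one substantive difference is the bias function you choose. The paper takes
\[
h_{\mathrm{paper}}(u_1^{d+1},\beta)=-B(\beta_1)-\sum_{k=2}^{d+1}\sum_{\hat y}P_{Y|X}(\hat y\mid \mu_s(u_k))\,B\!\left(f_{\mu_s}[\beta_k,\hat y]\right),
\]
and first \emph{verifies the fixed-policy Bellman equation under $\mu_s$} (Eq.~(\ref{hsolve})) by explicitly computing the marginals $\tilde\beta_k$ under $a=\mu_s$; this is what produces the LHS of (\ref{eq1}) and simultaneously shows that $\mu_s$ is the maximizer in the ACOE. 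Your $h$ drops the $-B(\beta_1)$ term (and adds the harmless constant $d\,D_{symbol}$). Because of this, plugging into (\ref{B1}) and rearranging yields (\ref{eq1}) \emph{directly}: the $B(\beta_1)$ on the RHS of (\ref{eq1}) comes from $g=-B(\beta_1)$ rather than from $h(s')$, exactly as you note, and the remaining terms are the expected $\Sigma'$ of your $h$ at the post-transition state. You then conclude optimality of $\mu_s$ not from identifying it as the ACOE maximizer, but from $D(d)=D_{symbol}$ together with the fact that $\mu_s$ already attains $D_{symbol}$; this is a clean shortcut. In short: same proof architecture, slightly different $h$, and you trade the paper's explicit marginal computation for a more direct algebraic match with (\ref{eq1}).
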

\begin{proof}\\
We will first assume that symbol by symbol policy is optimal or the optimal
encoding is
$X_i=\mu_i(U_i^{d+i},\Phi_i)=\mu_{s}(U_i^{d+i},S_{i-1})=\mu_{s}
(U_i^{d+i})=\mu_{s}(U_i)$. Hence we can solve for an $h$ that
satisfies the                                      
following equation,
\bea\label{hsolve}
\lambda_{symbol}+h(u_1,\cdots,u_{d+1},\beta)&=&-\min_{\hat{u}}\sum_{u\in\mathcal
{U}}\beta_1(u)\Lambda(u,\hat{
u } )+\sum_ {
(\tilde{u},\tilde{y})\in\mathcal
 {U}\times\mathcal{Y}}P_U(\tilde{u})P_{Y|X}(\tilde{y}|\mu_{s}
(u_2))h(u_2.\cdots ,
u_{d+1},\tilde{u},\tilde{\beta})\nonumber\\&&\mbox{
}\forall
{(u_1,\cdots,u_{d+1})\in\mathcal{U}^{d+1},}\mbox{
}\beta\in\mathcal{P}(\mathcal{U}^{d+1}),\tilde{\beta}=G(\beta,\mu_s,\tilde{y}
).                                                        
\eea

We claim that for a given lookahead, $d$, Eq.
(\ref{hsolve})
is satisfied with,
\bea\label{H1}
h(u_1,\cdots,u_{d+1},\beta)=-B(\beta_1)-\sum_{k=2}^{d+1}\left[\sum_{
\hat{y}\in\mathcal{Y}}P_{Y|X}(\hat{y}|\mu_s(u_k))B(f_{\mu_s}[\beta_k,\hat{y}])
\right ] ,
\eea
where $\beta_k$ is the marginal
$\beta_k(u_k)=\sum_{(u_1^{k-1},u_{k+1}^{d+1})}\beta(u_1,\cdots,u_{d+1})$.
To prove the claim, consider the L.H.S. of Eq. (\ref{hsolve}),
\bea
\mbox{LHS}&=&\lambda_{symbol}+h(u_1,\cdots,u_{d+1},\beta)\\
&=&-\E\left[\Lambda\left(U,f_{\mu_s}\left[P_U,
Y\right]\right)\right]-B(\beta_1)-\sum_{k=2}^{d+1}\left[\sum_{
\hat{y}\in\mathcal{Y}}P_{Y|X}(\hat{y}|\mu_s(u_k))B(f_{\mu_s}[\beta_k,\hat{y}])
\right ] \\
&=&-\sum_{(u,y)\in\mathcal{U}\times\mathcal{Y}}P_U(u)P_{Y|X}(y|\mu_s(u))B(f_{
\mu_ s}[P_U,y])\nonumber\\
&&-B(\beta_1)-\sum_{k=2}^{d+1}\left[\sum_{
\hat{y}\in\mathcal{Y}}P_{Y|X}(\hat{y}|\mu_s(u_k))B(f_{\mu_s}[\beta_k,\hat{y}])
\right ].
\eea
Before evaluating the right hand side of Eq. (\ref{hsolve}), we evaluate the
marginals
$\tilde{\beta}_k$. 
\bea
\tilde{\beta}(\hat{u}_{1}^{d+1})&=&G(\beta,\mu_s,\tilde{y})(\hat{u}_{1}^{d+1})\\
&=&\frac{\beta^{T}\mathbf{K}(\hat{u}_{1}^{d+1})P_{Y|X}(\tilde{y}|\mu_{s}(\hat{u}
_ { 1 } ^ {
d+1}))}{ \sum_ {
\tilde{u}_{1}^{d+1}}\beta^{T}\mathbf{K}(\tilde{u}_{1}^{d+1})P_{Y|X}(\tilde{
y}|\mu_ {s}(\tilde { u }_{1}^{d+1} ))}\\
&=&\frac{\sum_{u}\beta(u,\hat{u}^d_1)P_U(\hat{u}_{d+1})P_{Y|X}(\tilde{y}|\mu_{s}
(\hat{u}^{d+1}_1))}{\sum_{\hat{u}_1^{d+1}}\sum_{u}\beta(u,\hat{u}^d_1)P_U(\hat{u
}_{d+1})P_{Y|X}(\tilde{y}|\mu_{s}(\hat{u}^{d+1}_1))}\\
&=&\frac{\sum_{u}\beta(u,\hat{u}^d_1)P_U(\hat{u}_{d+1})P_{Y|X}(\tilde{y}|
\mu_s(\hat{u}_1))}{\sum_{\hat{u}_1^{d+1}}\sum_{u}\beta(u,\hat{u}^d_1)P_U(\hat{u}
_ { d+1 } )P_{Y|X}(\tilde{y}|\mu_s({u}_{1}))}.
\eea
Hence the marginals,  
\bea 
\tilde{\beta}_k(\hat{u}_k)=\sum_{(\hat{u}_1^{k-1},\hat{u}_{k+1}^{d+1})}\tilde{
\beta}(\hat{u}_{1}^{d+1})&=&f_{\mu_s}[\beta_2,\tilde{y}](\hat{u}_1),
\mbox{ }k=1\\
&=&\beta_{k+1}(\hat { u } _k) , \mbox { } 2\le k\le d,\\
&=&P_U(\hat{u}_{d+1}),\mbox{  }k=d+1.
\eea
Thus RHS of the Eq. (\ref{hsolve}),
\bea
&=&-B(
\beta_1)+\sum_ {
(\tilde{u},\tilde{y})\in\mathcal
 {U}\times\mathcal{Y}}P_U(\tilde{u})P_{Y|X}(\tilde{y}|\mu_s(u_2))h(u_2.\cdots,
u_{d+1},\tilde{u},\tilde{\beta})\\
&=&-B(\beta_1)-\sum_ {
(\tilde{u},\tilde{y})\in\mathcal
 {U}\times\mathcal{Y}}P_U(\tilde{u})P_{Y|X}(\tilde{y}|\mu_s(u_2))B(\tilde{\beta}
_1)\nonumber\\
&&-\sum_ {
(\tilde{u},\tilde{y})\in\mathcal
 {U}\times\mathcal{Y}}P_U(\tilde{u})P_{Y|X}(\tilde{y}|\mu_s(u_2))\sum_{k=2}^{d}
\left[\sum_{
\hat{y}\in\mathcal { Y } } P_ { Y|X }
(\hat{y}|\mu_s(u_{k+1}))B(f_{\mu_s}[\tilde{\beta}_k,\hat{y}])
)\right ]
\nonumber\\   
&&-\sum_ {
(\tilde{u},\tilde{y})\in\mathcal
 {U}\times\mathcal{Y}}P_U(\tilde{u})P_{Y|X}(\tilde{y}
|\mu_s(u_2)\sum_{\hat{y}\in\mathcal{Y}}P_ { Y|X }
(\hat{y}|\mu_s(\tilde{u})B(f_{\mu_s}[\tilde{\beta}_{d+1},\hat{y}])\\
&=&-B(\beta_1)-\sum_ {
(\tilde{u},\tilde{y})\in\mathcal
 {U}\times\mathcal{Y}}P_U(\tilde{u})P_{Y|X}(\tilde{y}|\mu_s(u_2))B(f_{\mu_s}[
\beta_2,\tilde{y}])\nonumber\\
&&-\sum_{k=2}^{d}\left[\sum_ {
(\tilde{u},\tilde{y})\in\mathcal
 {U}\times\mathcal{Y}}P_U(\tilde{u})P_{Y|X}(\tilde{y}
|\mu_s(u_2))\sum_{\hat{y}\in\mathcal{Y}}P_ {
Y|X }
(\hat{y}|\mu_s(u_{k+1}))B(f_{\mu_s}[
\beta_{k+1},\hat{y}])\right
] \nonumber\\
&&-\sum_ {
(\tilde{u},\tilde{y})\in\mathcal
 {U}\times\mathcal{Y}}P_U(\tilde{u})P_{Y|X}(\tilde{y}
|\mu_s(u_2)\sum_{\hat{y}\in\mathcal{Y}}P_ { Y|X }
(\hat{y}|\mu_s(\tilde{u})B(f_{\mu_s}[P_U,\hat{y}])
\\
&=&-B(\beta_1)-\sum_ {
\tilde{y}\in\mathcal{Y}}P_{Y|X}(\tilde{y}|\mu_s(u_2))B(f_{\mu_s}[
\beta_2,\tilde{y}])-\sum_{k=2}^{d}\left[\sum_{\hat{y}\in\mathcal{Y}}P_ {
Y|X }
(\hat{y}|\mu_s(u_{k+1}))B(f_{\mu_s}[
\beta_{k+1},\hat{y}])\right
] \nonumber\\
&&-\sum_ {
(\tilde{u},\hat{y})\in\mathcal
 {U}\times\mathcal{Y}}P_U(\tilde{u})P_ { Y|X }
(\hat{y}|\mu_s(\tilde{u}))B(f_{\mu_s}[P_U,\hat{y}])\\
&=&\mbox{ LHS}.
\eea
Thus Eq. (\ref{eq1}) and Eq. (\ref{B1}) imply that $\exists$ a bounded
function, $h$ given by Eq. (\ref{H1}) such that we have,
\bea
\lambda_{symbol}+h(u_1,\cdots,u_{d+1},\beta)&=&-B(
\beta_1)+\max_{a\in\mathcal{A}}\left[\sum_ {
(\tilde{u},\tilde{y})\in\mathcal
 {U}\times\mathcal{Y}}P_U(\tilde{u})P_{Y|X}(\tilde{y}|a(u_2^{d+1},
\tilde{u}))h(u_2.\cdots ,
u_{d+1},\tilde{u},\tilde{\beta})\right]\nonumber\\&&\mbox{
}\forall
{(u_1,\cdots,u_{d+1})\in\mathcal{U}^{d+1},}\mbox{
}\beta\in\mathcal{P}(\mathcal{U}^{d+1}),\tilde{\beta}=G(\beta,\mu_s,\tilde{y}),
\eea
and the maximization is attained by the
policy, $\mu_{s}(\cdot,s)=\mu_{s}(\cdot)$ such that
$\mu_{s}(u_1,\cdots,u_{d+1})=\mu_{s}(u_1)$. This implies by Theorem
\ref{theorem1} that the symbol by symbol coding policy is optimal.\\ 
\end{proof}

\begin{corollary}\label{corollary1}
 Given $\mathcal{X}\supseteq\mathcal{U}$, \textit{uncoded symbol by
symbol} policy, i.e., $\mu_s(U_i)=\mu_{c}(U_i)=U_i$ is optimal if,
\bea
&&\sum_{(u,y)\in\mathcal{U}\times\mathcal{Y}}P_U(u)P_{Y|X}
(y|u)B(f_{\mu_{c}}[P_U,y])+\sum_{k=2}^{d+1}\left[\sum_{
y\in\mathcal{Y}} P_ { Y|X }
(y|u_{k})B(f_{\mu_c}[\beta_k,y])\right]
\nonumber\\
&=&\sum_ {
(\tilde{u},\tilde{y})\in\mathcal{U}\times\mathcal{Y}}P_U(\tilde{u})P_{Y|X}
(\tilde { y}|a^{\ast}(u_2^{d+1},\tilde{
u}))B(\beta_1)\nonumber\\
&+&\sum_ {
(\tilde{u},\tilde{y})\in\mathcal{U}\times\mathcal{Y}}P_U(\tilde{u})P_{Y|X}
(\tilde { y}|a^{\ast}(u_2^{d+1},\tilde{
u}))\sum_{k=2}^{d}\left(\sum_{\hat{y}\in\mathcal{Y}}
P_{Y|X}(\hat{y}|u_{k+1}))B(f_{\mu_c}[\tilde{\beta}_k,\hat{y}
])\right)\nonumber\\
&+&\sum_ {
(\tilde{u},\tilde{y})\in\mathcal{U}\times\mathcal{Y}}P_U(\tilde{u})P_{Y|X}
(\tilde { y}|a^{\ast}(u_2^{d+1},\tilde{
u}))\left(\sum_{\hat{y}\in\mathcal{Y}}
P_{Y|X}(\hat{y}|\tilde{u})B(f_{\mu_c}[\tilde{\beta}_{d+1},\hat{y}
])\right)
\nonumber
\eea 
\bea
\forall
{(u_1,\cdots,u_{d+1})\in\mathcal{U}^{d+1},}\mbox{
}\beta\in\mathcal{P}(\mathcal{U}^{d+1}), 
\eea
where $a^{\ast}(\cdot)$ is the minimizer of
the right hand side of the above equation, $\tilde{\beta}=G(\beta,a,\tilde{y})$
and $\beta_k$
and $\tilde{\beta}_k$ denote the marginal of the $k^{th}$ component of $\beta$
and $\tilde{\beta}$, respectively.
\end{corollary}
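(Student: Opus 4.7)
The plan is to specialize Theorem \ref{theorem3} to the case $\mu_s=\mu_c$, where $\mu_c(u)=u$ is the identity encoding. The hypothesis $\mathcal{X}\supseteq \mathcal{U}$ is precisely what is needed for the identity to be a well-defined channel-input function, so this substitution is legitimate.

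First, I would repeat verbatim the construction in the proof of Theorem \ref{theorem3}, but with every occurrence of $\mu_s(u_k)$ replaced by $u_k$. The candidate bounded function (the analog of Eq.~(\ref{H1})) becomes
\[
h(u_1,\ldots,u_{d+1},\beta) \;=\; -B(\beta_1) \;-\; \sum_{k=2}^{d+1}\sum_{\hat y\in \mathcal{Y}} P_{Y|X}(\hat y\,|\,u_k)\, B\!\left(f_{\mu_c}[\beta_k,\hat y]\right),
\]
and the candidate average reward is $\lambda_{\mathrm{uncoded}} = -\sum_{u,y} P_U(u)P_{Y|X}(y|u)B(f_{\mu_c}[P_U,y])$, the expected per-symbol loss under uncoded transmission with Bayes decoding.

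Next, I would verify that the ACOE (\ref{B1}) holds for this $(h,\lambda)$ pair with the encoding $a(u_2^{d+1},\tilde u)=u_1$. The computation is exactly the LHS$=$RHS verification in the proof of Theorem \ref{theorem3}, with only the cosmetic substitution $\mu_s(\cdot)\mapsto \cdot$. The marginal identities for $\tilde\beta_k$ (namely $\tilde\beta_1 = f_{\mu_c}[\beta_2,\tilde y]$, $\tilde\beta_k=\beta_{k+1}$ for $2\le k\le d$, and $\tilde\beta_{d+1}=P_U$) carry over because the derivation uses only the structure of the posterior update $G(\beta,a,\tilde y)$ evaluated at $a=\mu_c$. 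The stated hypothesis of the corollary is then literally the statement from Theorem \ref{theorem3} under $\mu_s\mapsto \mu_c$, so it guarantees that the maximum on the RHS of (\ref{B1}) is attained by $a^*(u_2^{d+1},\tilde u)=u_1$, i.e.\ by uncoded symbol-by-symbol encoding. Applying Theorem \ref{theorem1} to the controlled Markov process $\mathcal{T}$ then yields $\lambda^{opt}_{\mathcal{T}}=\lambda_{\mathrm{uncoded}}$ with the uncoded stationary policy achieving the optimum.

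The main (and rather minor) obstacle is a bookkeeping issue. Theorem \ref{theorem3} is phrased in terms of the $\mu_s$ that minimizes Eq.~(\ref{dsymbol}), but its proof in fact uses $\mu_s$ only as a \emph{fixed} symbol-by-symbol encoder whose optimality is certified via the ACOE. Therefore, substituting $\mu_c$ (which need not be the minimizer of (\ref{dsymbol}) among all deterministic maps $\mathcal{U}\to\mathcal{X}$) is permissible, provided one interprets the scalar in the ACOE as the distortion achieved by $\mu_c$ rather than by the optimal symbol-by-symbol encoder. Under this reading, the corollary follows from Theorem \ref{theorem3} by direct substitution, and no additional argument is required.
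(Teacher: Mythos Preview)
Your approach is essentially identical to the paper's, which proves the corollary in one line: ``Substitute in Theorem \ref{theorem3}, $\mu_s=\mu_{c}$.'' Your expanded justification---in particular the observation that the proof of Theorem \ref{theorem3} never actually uses that $\mu_s$ minimizes Eq.~(\ref{dsymbol}), only that it is some fixed symbol-by-symbol map---is a valid clarification that the paper leaves implicit.

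One small slip: under uncoded transmission the action evaluated in the ACOE satisfies $a(u_2,\ldots,u_{d+1},\tilde u)=u_2$, not $u_1$ (compare with $\mu_s(u_2)$ in the proof of Theorem \ref{theorem3}). This is just an indexing typo and does not affect your argument.
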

\begin{proof}
Substitute in Theorem \ref{theorem3}, $\mu_s=\mu_{c}$.
\end{proof}

\section{Real-Time Coding with Limited Lookahead : Finite Memory}
\label{rtcfinite}
\subsection{Average Cost Optimality Equation}
\label{rtcfiniteacoe}
In Section \ref{rtccomplete}, we considered the scenario where the
decoder has access to the entire past channel output sequence or, equivalently 
memory is unbounded. In this section, we will develop controlled Markov process
formulation for the case where the memory alphabet is finite and does not grow
with time i.e, 
 the memory space ($\mathcal{M}$) is time-independent
with $\card{\mathcal{M}}<\infty$. We make two assumptions on our coding systems
for this setting :
\begin{itemize}
\item[A1] There is a fixed time-independent memory update function, i.e., there
exists a function $f_m$ such that $Z_i = f_m(Z_{i-1}, Y_i)$ for all $i$. This
assumption is not very restrictive as real systems such as quantizers or finite
window storage devices store only the past few channel output symbols and evolve
in a time invariant way, eg. $Z_i=f_{m}(Z_{i-1},Y_i)=Y_i$,
implies the reconstruction is given by, $\hat{U}_i=f_{d,i}(Y_i,Y_{i-1})$.
\item[A2] We fix the optimal decoding
rule to be,
$f_{d,i}(Y_i,Z_{i-1})=\hat{U}^{opt}(Y_i,Z_{i-1})$, that is the decoding 
is restricted to optimal policies among the stationary (time invariant) ones.
Note that though we assume stationary decoding, optimal encoding may in general
not be stationary.
\end{itemize}
Hence the optimal expected average distortion depends on 
$d,f_m,\mathcal{M}$ and hence we denote it by,
$D(d,f_m,\mathcal{M})$ to distinguish it from $D(d)$ of Section
\ref{rtccomplete}.
\par
 Here too we begin by formulating a controlled 
Markov process for the modified source, $\{V_i\}_{i\in\mathbf{N}}$ and then
substitute for the original source. By the assumption A2, the optimal decoding
is stationary, $\hat{V}^{opt}(\cdot)$. Consider the state
sequence, $S_i=(V_i,Z_i)\in\mathcal{S}$($=\mathcal{V}\times\mathcal{Z}$) and the
disturbance sequence,
$W_i=(V_i,Y_i)\in\mathcal{W}$($=\mathcal{V}\times\mathcal{Y}$). The actions can
be history dependent, $A_i=f_{e,i}(S_0,W^{i-1})$, $S_0$ is a fixed initial
state (with some distribution $P_S$). The disturbance
depends on the past sequence of disturbances, states, actions and the current
action only through the past state and current action, i.e.,
\bea
P(W_i|W^{i-1},S^{i-1},A^i)&=&P(V_i,Y_i|V^{i-1},Y^{i-1},A^i)\\
&=&K(V_{i-1},V_i)P_{Y|X}(Y_i|A_i(V_i))\\
&=&P_W(W_i|S_{i-1},A_i),\label{W2}
\eea
and hence  
\bea\label{F2}
S_i=(V_i,Z_i)=(V_i,f_{m}(Y_i,Z_{i-1}))=F(S_{i-1},A_i,W_i).
\eea
Note that for our transformation (as in 
Section \ref{rtccomplete}) the modified cost function is given by,
\bea
\tilde{\Lambda}(V_i,\hat{V}^{opt}(Y_i,Z_{i-1}))=\Lambda(U_i,\hat{U}^{opt}(Y_i,Z_
{ i-1
})).
\eea
Therefore consider,
\bea
&&\E\left[\tilde{\Lambda}(V_i,\hat{V}^{opt}(Y_i,Z_{i-1}))|V^{i-1},Y^{i-1},Z^{i-1
} ,
A^i\right]\\
&=&\sum_{v\in\mathcal{V},y\in\mathcal{Y}}P(V_i=v,Y_i=y|V^{i-1},Y^{i-1},Z^{i-1},
A^i)\tilde{\Lambda}(v,\hat{V}^{opt}(y,Z_{i-1}))\\
&=&\sum_{v\in\mathcal{V},y\in\mathcal{Y}}K(V_{i-1},v)P(y|A_i(v))\tilde{\Lambda}
(v ,\hat{V}^{opt}(y,Z_{i-1}))\\
&=&-g(S_{i-1},A_i).
\eea
Thus,
\bea
&&\frac{1}{n}\sum_{i=1}^{n}\E\left[\tilde{\Lambda}(V_i,\hat{V}^{opt}(Y_i,Z_{i-1}
))\right ] \\
&=&\frac{1}{n}\sum_{i=1}^{n}\E\left[\E\left[\tilde{\Lambda}(V_i,\hat{V}^{opt}
(Y_i , Z_
{ i-1 } ))|V^{i-1},Y^{i-1},Z^{i-1},A^i\right]\right]\\
&=&-\frac{1}{n}\sum_{i=1}^{n}\E\left[g(S_{i-1},A_i)\right],
\eea
and consequently,
\bea
\inf\limsup_{n\rightarrow\infty}\frac{1}{n}\E\left[\sum_{i=1}^{n}\tilde{
\Lambda}(V_i ,
\hat { V } ^ {
opt}(Y^i))\right]=-\sup\liminf_{n\rightarrow\infty}\frac{1}{n}\E\left[\sum_{
i=1 }^{n}g(S_{i-1},A_{i})\right].
\eea
Thus the problem is to find the optimal policy for the controlled markov
process, $\mathcal{T}=(\mathcal{S},\mathcal{A},\mathcal{W},F,P_S,P_W,g)$, which
maximizes the
average reward under the cost function $g$. The optimal reward is given by,
\bea
\lambda^{opt}_{\mathcal{T}}=\sup_{\pi}\liminf_{n\rightarrow\infty}\frac{1}{n}
\E\left[\sum_{i=1}^{n}g(S_{i-1},A_{i})\right].
\eea
Thus the ACOE for the controlled Markov process
$\mathcal{T}=(\mathcal{S},\mathcal{A},\mathcal{W},F,P_S,P_W,g)$, is given by, 
\bea
\lambda+h(s)&=&\sup_{a\in\mathcal{A}}\left[g(s,a)+\sum_{w\in\mathcal{W}}
P_W(w|s,a)h(F(s,a,w))\right], \mbox{ }\forall {s\in\mathcal{S}}\\
\lambda+h(v,z)&=&\sup_{a\in\mathcal{A}}\left[g(v,z,a)+\sum_{w\in\mathcal
 {W}}P_W(w|v,z,a)h(F(v,z,a,w))\right], \mbox{ }\forall
{v\in\mathcal{V},}\mbox{ }z\in\mathcal{Z}.
\eea
We will now transform back the setting from Markov source
$\{V_i\}_{i\in\mathbf{N}}$ to i.i.d. source $\{U_i\}_{i\in\mathcal{U}}$. Let
us denote $v=(u_1,u_2,\cdots,u_{d+1})$. Since
$\tilde{\Lambda}(V_i,\hat{V}^{opt}(Y_i,Z_{i-1}))=\Lambda(U_i,\hat{U}^{opt}(Y_i,
Z_ { i-1 } ))$ ,
we have,
\bea
g(s,a)&=&g(u_1^{d+1},z,a)\nonumber\\
&=&-\sum_{\tilde{u}
\in\mathcal{U}, \tilde{y}
\in\mathcal{Y}}P_U(\tilde{u})P_{Y|X}
(\tilde { y } |a(u_2,\cdots,u_{d+1},\tilde { u } ))\Lambda(u_2 ,
\hat{U}^{opt}(\tilde{y},z)).\label{g2}
\eea
Thus the resulting ACOE for 
our problem of an i.i.d. source with lookahead $d$ (replacing again $\sup$ by
$\max$ due
to finiteness of the action set),
\begin{algorithm}
\bea
&&\lambda+h(u_1,\cdots,u_{d+1},z)\nonumber\\&=&\max_{a\in\mathcal{A
} } \left [ \sum_ {
(\tilde{u},\tilde{y})\in\mathcal
 {U}\times\mathcal{Y}}P_U(\tilde{u})P_{Y|X}(\tilde{y}|a(u_2,\cdots,u_{d+1}
,\tilde { u }
))\left\lbrace h(u_2.\cdots,u_{d+1},\tilde{u},f_{m}(\tilde{y},z))-\Lambda(u_2 ,
\hat{U}^{opt}(\tilde{y},z))\right\rbrace\right],\nonumber\\&&\mbox{
}\forall
{(u_1,\cdots,u_{d+1})\in\mathcal{U}^{d+1},}\mbox{
}z\in\mathcal{Z}.\label{B2}
\eea
\end{algorithm}
\\
Here again, invoking Theorem \ref{theorem1} implies that if the ACOE in Eq.
(\ref{B2}) is
solved by a real $\lambda^{\ast}$ and a bounded $h(\cdot)$, then
$D(d,f_m,\mathcal{M})=-\lambda^{\ast}$.
\begin{note}\label{note4}
The reason for making assumptions A1 and A2, is that the modification of the
cost
function 
$\Lambda(\cdot,\cdot)$ results in $g(\cdot,\cdot)$ which is time invariant 
and a state sequence evolving through a function
$F(\cdot,\cdot,\cdot)$ which is also time invariant.
\end{note}

\begin{theorem}[Optimality of Stationary Policy]
\label{theorem4}
The ACOE Eq. (\ref{B2}) admits a stationary optimal policy.
\end{theorem}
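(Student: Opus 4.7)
The plan is to recognize that under assumptions A1--A2 the problem in Eq. (B2) is a finite-state, finite-action, bounded-cost average-reward Markov decision process, produce a bounded solution $(\lambda^{\ast},h^{\ast})$ of the ACOE via a vanishing-discount argument, and then read off the stationary optimal policy from Theorem \ref{theorem1}.

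First I would verify that all hypotheses needed to place the problem in the finite MDP framework of Section \ref{controlledmarkov} are met. The state space $\mathcal{S}=\mathcal{U}^{d+1}\times\mathcal{Z}$ is finite because $\mathcal{U}$ is finite by assumption and $\mathcal{Z}$ is finite by A1. The action space $\mathcal{A}$, consisting of maps $\mathcal{U}^{d+1}\to\mathcal{X}$, has cardinality $\card{\mathcal{X}}^{\card{\mathcal{U}}^{d+1}}<\infty$. By Eqs. (\ref{W2}) and (\ref{F2}) the disturbance law depends on history only through $(S_{i-1},A_i)$ and the state evolves deterministically from $(S_{i-1},A_i,W_i)$, so this is a bona fide controlled Markov process. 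Finally, Eq. (\ref{g2}) shows $g$ is uniformly bounded by $\Lambda_{\max}$, so the semi-continuous-model conditions mentioned after Theorem \ref{theorem1} hold trivially.

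Next I would execute the vanishing-discount construction. For each $\alpha\in(0,1)$, the finite discounted MDP admits a stationary deterministic optimal policy with bounded value function $V_\alpha$ satisfying the discounted Bellman equation
\bea
V_\alpha(s)=\max_{a\in\mathcal{A}}\left[g(s,a)+\alpha\sum_{w\in\mathcal{W}} P_W(w|s,a)\,V_\alpha(F(s,a,w))\right].
\eea
Fix a reference state $s_0\in\mathcal{S}$ and set $h_\alpha(s)=V_\alpha(s)-V_\alpha(s_0)$. Using the finiteness of $\mathcal{S}$ and $\mathcal{A}$ together with the bounded-span criterion for finite MDPs (see Section 6 of \cite{Borkar_survey}), $\sup_{\alpha,s}\card{h_\alpha(s)}\le M<\infty$. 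By sequential compactness one extracts a subsequence $\alpha_n\uparrow 1$ along which $(1-\alpha_n)V_{\alpha_n}(s_0)\to\lambda^{\ast}$ and $h_{\alpha_n}(s)\to h^{\ast}(s)$ for every $s\in\mathcal{S}$. Passing to the limit in the discounted Bellman equation yields a bounded $h^{\ast}$ and a scalar $\lambda^{\ast}$ satisfying Eq. (\ref{B2}). Theorem \ref{theorem1} then gives $-\lambda^{\ast}=D(d,f_m,\mathcal{M})$, and any measurable selector $\mu^{\ast}(s)$ of the maximizer in Eq. (\ref{B2}) defines the desired stationary deterministic optimal policy.

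The main obstacle is the uniform boundedness of the relative value functions $h_\alpha$; without it the vanishing-discount limit can blow up. In the finite-MDP setting considered here this reduces to a chain-structural condition, which is typically handled through the Weak Accessibility Condition or by restricting to the recurrent class induced by the optimal stationary policy. A self-contained alternative that circumvents this issue entirely is the linear-programming characterization of finite average-reward MDPs: the primal/dual LPs are finite-dimensional, the dual always has an optimal basic feasible solution whose support is carried by a single stationary deterministic policy, and complementary slackness produces the pair $(\lambda^{\ast},h^{\ast})$ solving Eq. (\ref{B2}) directly. Either route then closes the argument by a single appeal to Theorem \ref{theorem1}.
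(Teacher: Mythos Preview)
Your proposal is correct and takes the same route as the paper: recognize that under A1--A2 the problem is a finite-state, finite-action, bounded-cost average-reward MDP and invoke the corresponding existence theory. The paper's own proof is a one-line citation of Theorem~4.3 in \cite{Borkar_survey} (finiteness of $\mathcal{S}$ and $\mathcal{A}$ suffices), so your vanishing-discount sketch---with the honest caveat about span boundedness and the LP alternative---simply unpacks the machinery behind what the paper is citing.
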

\begin{proof}
This follows from Theorem 4.3 in \cite{Borkar_survey} as for a fixed lookahead,
$d$, the state space and action space are finite. Hence the optimal encoding
is, $X_i=\mu(U_{i-1}^{d+1},Z_{i-1})$.
\end{proof}

\subsection{Computing $D(d,f_m,\mathcal{M})$, Bounding $D(d)$}
\label{rtcfinitecomputation}
In this section, we explicitly compute $D(d,f_m,\mathcal{M})$. Note
that in the setting of complete memory in Section \ref{rtccomplete} the average
cost optimality equation can also be solved approximately. As the state space
is compact, this admits discretization of the space and running value or policy
iteration to obtain approximations to the optimal distortion.
References \cite{Dpapprox1}
and \cite{DPapprox2} provide an extensive treatment along with
prescriptions of error bounds
and the trade off
between quantization resolution and the precision of the approximated optimal
reward for discounted cost
problems. However, the computational point of view we take in this section does
not follow the path of discretization and then approximation of the average
reward.
Rather, we compute exactly $D(d,f_m,\card{\mathcal{M}})$ which provide
non-trivial upper bounds on $D(d)$. This is illustrated by the following
example. \\
We assume $\mathcal{U}=\mathcal{X}=\mathcal{Y}=\hat\mathcal{U}=\{0,1\}$.  The
source is $Bern(p), p\in[0,0.5]$ and the channel is $BSC(\delta),
\delta\in[0,0.5]$ and loss function is hamming distortion. The memory is of $m$
bits and retains the last $m$ channel
outputs and hence, $\hat{U}_i=\hat{U}^{opt}(Y^i_{i-m}$). We will denote the
optimal expected average distortion by $D(d,m)$ in this case. We observe the
following,
\begin{itemize}
 \item $D(0,m)=D(0)=D_{symbol}=\min_{\hat{X}:\mathcal{U}\rightarrow\mathcal{X}}
\E\left[ \Lambda(U , \hat { U } _ { Bayes } (P_ { U|Y } )\right ]= \min \{p,
\delta \}$, $\forall$
$m$.
\item $D(d)\le D(d,m_1)\le D(d,m_2)\le D(0)$ $\forall$ $m_1\ge m_2$.
\item $D(d)\le D(d_1,m)\le D(d_2,m)\le D(0)$ $\forall$ $d_1\ge d_2$.
\item $D(\infty)\le D(d) \le D(d,m)\le D(0)$ $\forall$ $m,d$.
\end{itemize}

\begin{lemma}
 \label{lemma4}
$D(\infty)=D_{min}$ where $D_{min}$ is the minimum achievable distortion of the
joint source-channel communication problem.
\end{lemma}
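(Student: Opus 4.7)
The plan is to prove $D(\infty)=D_{min}$ via a sandwich, treating the two inequalities by very different methods.

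The easy direction, $D(\infty)\ge D_{min}$, is immediate: for every finite $d$, any tuple $(f_e,f_m,f_d)$ in the real-time setup constitutes a valid (though constrained) joint source--channel coding scheme for the i.i.d.\ source over the DMC, so its expected per-symbol distortion cannot beat the Shannon limit $D_{min}$. Hence $D(d)\ge D_{min}$ for every finite $d$, and the same therefore holds at $d=\infty$.

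For the nontrivial direction, $D(\infty)\le D_{min}$, I would simulate an optimal block joint source--channel code inside the real-time framework using a large lookahead and a deliberate one-block encoding lead. Fix $n$ and pick a length-$n$ code $(\phi_n,\psi_n)$ with $\phi_n:\mathcal{U}^n\to\mathcal{X}^n$ and $\psi_n:\mathcal{Y}^n\to\hat\mathcal{U}^n$ whose per-symbol distortion $D_n$ satisfies $D_n\to D_{min}$ by Shannon's joint source--channel coding theorem. Partition the time axis into super-blocks of length $n$. During super-block $k$ (times $kn+1,\ldots,(k+1)n$) the encoder transmits the symbols of $\phi_n\!\left(U_{(k+1)n+1}^{(k+2)n}\right)$, i.e., the block encoding of the \emph{next} source super-block; this is permitted since it uses only lookahead $d\ge 2n-1$, which is within budget in the $d\to\infty$ regime. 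By time $(k+1)n$ the complete-memory decoder has observed $Y_{kn+1}^{(k+1)n}$, so it can evaluate $\psi_n$ on those outputs, recover the entire reconstruction $\hat{U}_{(k+1)n+1}^{(k+2)n}$, and release one coordinate at each of the $n$ subsequent time steps, fully consistent with the causal decoding rule $\hat{U}_i=f_{d,i}(Y_i,Z_{i-1})$ under $Z_{i-1}=Y^{i-1}$.

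The very first super-block contributes at most $n\Lambda_{max}$ (since no block reconstruction is available yet and $\hat{U}_1,\ldots,\hat{U}_n$ are filled in arbitrarily), which vanishes in the Ces\`aro average. Hence $D(2n-1)\le D_n$ for every $n$, and letting $n\to\infty$ yields $D(\infty)\le D_{min}$. The main point to verify carefully is the decoder-causality bookkeeping: the deliberate one-super-block \emph{lead} in what is encoded (as opposed to a lag in what is decoded) is precisely what guarantees that the channel observations $\psi_n$ needs are already in the decoder's memory at the moment each $\hat{U}_i$ must be emitted. Modulo this bookkeeping and the standard achievability of $D_{min}$ by block codes, the argument is a routine block-in-streaming wrapper.
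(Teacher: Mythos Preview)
Your proposal is correct and follows essentially the same approach as the paper: both argue $D(\infty)\ge D_{min}$ by noting that any real-time scheme is a particular joint source--channel scheme, and both argue $D(\infty)\le D_{min}$ by simulating a block code with a one-block encoding lead so that the decoder has the needed channel outputs in memory by the time each reconstruction is due. Your bookkeeping (lookahead $2n-1$, infinite horizon with a single wasted first block) is slightly tighter than the paper's (lookahead $2n$, finite horizon of $B$ blocks with $B\to\infty$), but the underlying construction is identical.
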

\begin{proof}
Any sequential or limited delay 
encoding and decoding scheme can obviously be embedded into and emulated
arbitrarily closely by a sequence of block codes. Hence 
$D_{min}\le D(\infty)$. We will now prove $D_{min}\ge D(\infty)$. This is
equivalent to
proving that any sequential scheme with infinite lookahead can be used to
construct a block coding scheme which can then
attain arbitrarily close to $D_{min}$ hence, $D(\infty)\le D_{min}$. The
argument is based on block Markov coding, (cf. Section on
\textit{Coherent Multihop Lower Bound}, Chapter 17, \cite{GamalKim}) except that
instead of
coding
via looking
at the past block as in block Markov coding, here we look at the future block.
Fix an arbitrarily small $\epsilon>0$, an arbitrarily large $B$, and an $n$
sufficiently large that there exists a block coding scheme of blocklength $n$
achieving per symbol distortion no larger than $D_{min}+\epsilon$. Let $f_e$
and $f_d$ denote the encoding and decoding mappings of that $\epsilon$-achieving
scheme. We now construct a sequential scheme with lookahead $d=2n$ as follows : 
\begin{itemize}
 \item \textit{Encoding} : We code in the present block using the source
symbols of the future block, i.e, $X^n(b)=f_e(U_{bn+1}^{(b+1)n})$ for
$b=1,\cdots,B-1$, and some dummy coding for the last block known to the
decoder. 
\item \textit{Decoding} : For the block one decoder has some predefined
construction. For block $b$, decoder constructs source symbols as,
$\hat{U}^n(b)=f_d(Y^{(b-1)n}_{(b-2)n+1})$, $b=2,\cdots,B$, thus decoding is in
blocks using the past block.
\end{itemize}
The per-symbol distortion achieved by this scheme is clearly upper bounded by
$\frac{\Lambda_{max} + (B-1) (D_{min} + \epsilon)}{B}$, which can be made
arbitrarily
close to $D_{\min}$ for sufficiently small $\epsilon$ and sufficiently large
$B$. 
\end{proof}

\begin{figure}[htbp]
\begin{center}
\includegraphics[scale=0.35]{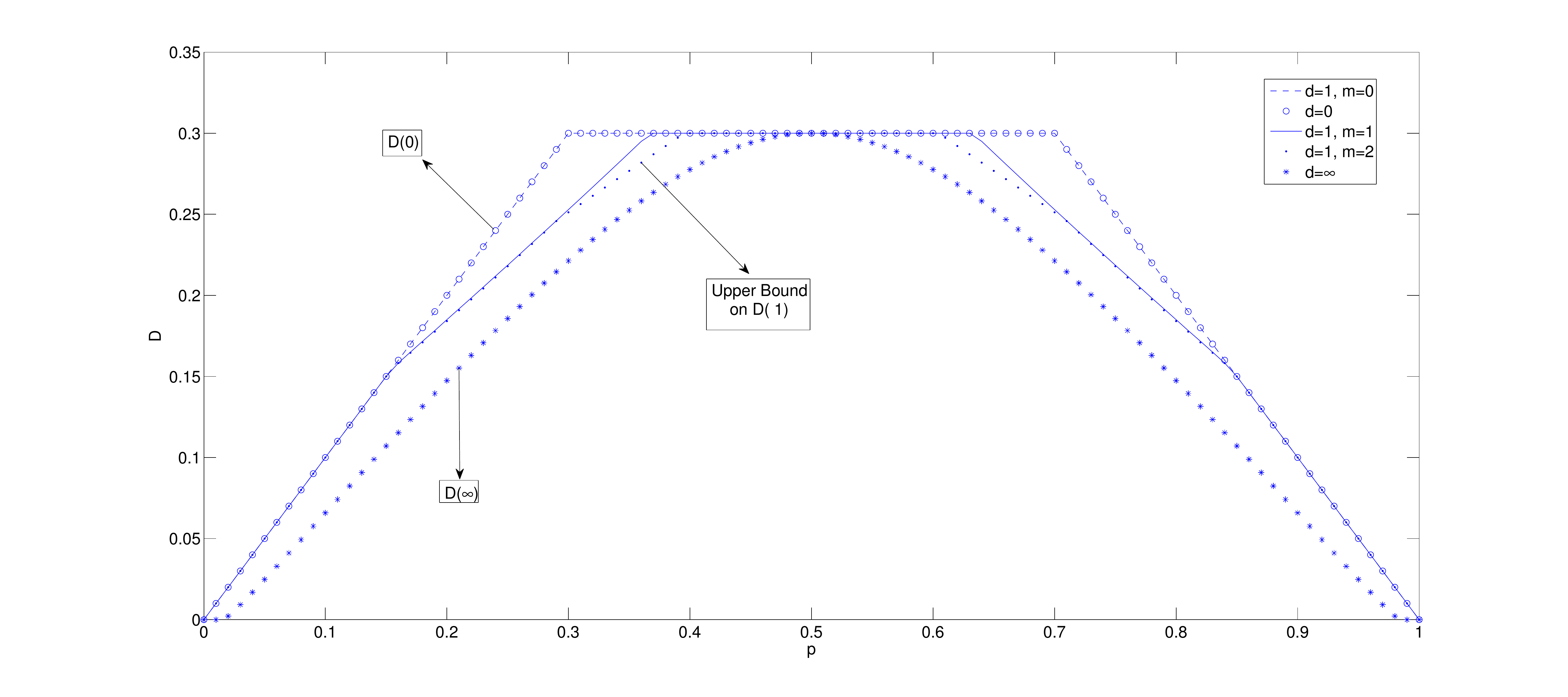}
\caption{Computing and contrasting $D(d,m)$ and $D(d)$ for a
Bernoulli source and binary symmetric channel and Hamming loss. We fix channel
cross over
probability $\delta=0.3$ and vary source probability in $[0,0.5]$. For $d=1$, we
have plotted values for increasing memory, $m=0,1,2$, which
yield series of non-trivial non-increasing upper bounds on $D(1)$. $D(0)$ is
achieved by symbol symbol by
symbol coding, while $D(\infty) = D_{min}$ of Shannon's joint source channel
coding (achieved by separation).}\label{values_p}
\end{center}
\end{figure}

We have run relative value iteration to compute $D(d,m)$ for $d=1$ and some
values of $m$, yielding some interesting upper bounds on $D(1)$. Note that the
values obtained are exact and do not approximate distortion as the relative
value iteration converges in a few iterations. This is because the
state space and action space is finite, and it is easy to check that the
weak accessibility condition (Definition 4.2.2 \cite{DP2}) is satisfied.
This implies by Proposition 4.3.1 of \cite{DP2}, that relative value
iteration converges. Fig.
\ref{values_p} shows the distortion values as a function of source
distribution when the cross over probability is fixed, $\delta=0.3$. Fig.
\ref{values_delta} shows the distortion values as a function of channel cross
over probabilities when the source distribution is fixed, $p=0.3$. 

\begin{figure}[htbp]
\begin{center}
\includegraphics[scale=0.35]{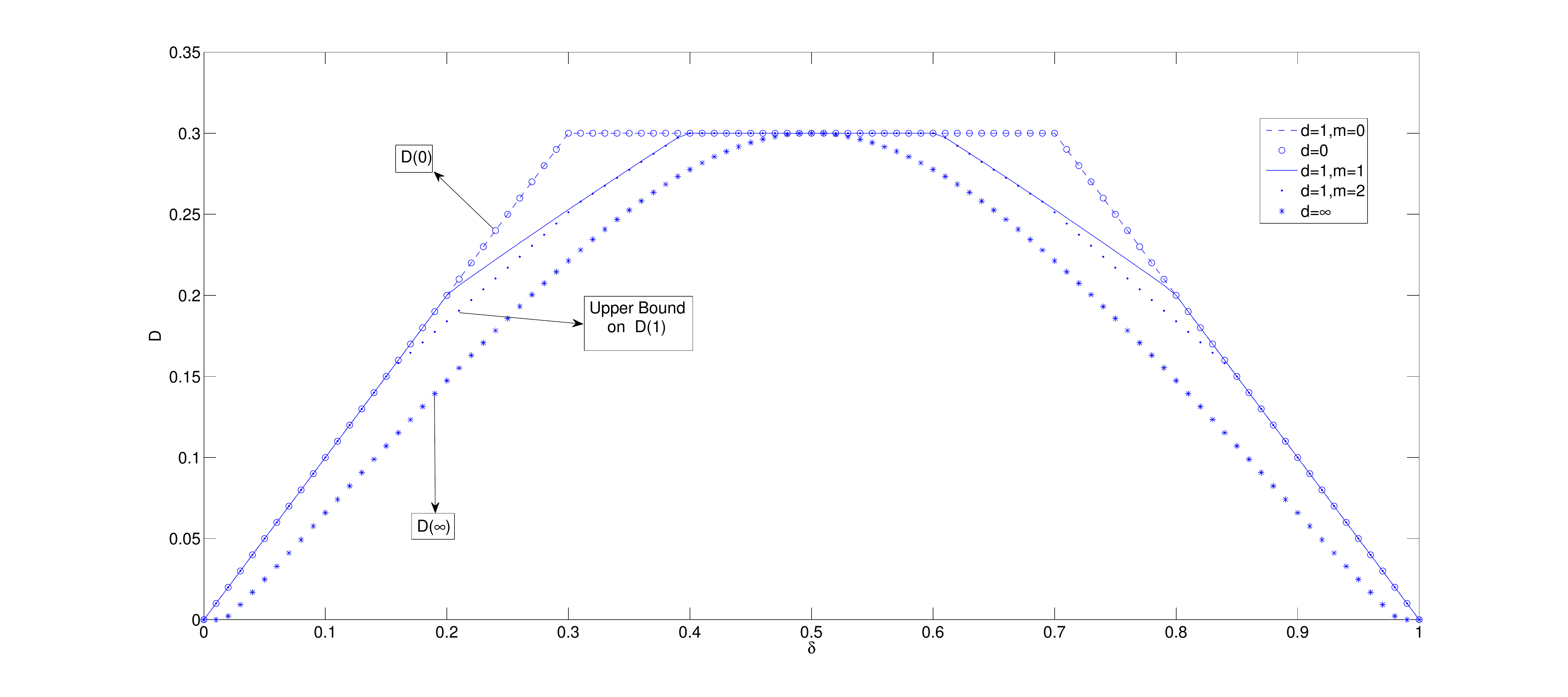}
\caption{ Computing and contrasting $D(d,m)$ and $D(d)$ for a
Bernoulli source and binary symmetric channel and Hamming loss. We fix source
probability $p=0.3$ and vary the channel cross-over probability in
$[0,0.5]$. For $d=1$, we
have plotted values for increasing memory, $m=0,1,2$, which
yield series of non-trivial non-increasing upper bounds on $D(1)$. $D(0)$ is
achieved by symbol symbol by
symbol coding, while $D(\infty) = D_{min}$ of Shannon's joint source channel
coding (achieved by separation).
}\label{values_delta}
\end{center}
\end{figure}

These plots provide insight into the structure of
optimal policies in the setting of Section \ref{rtccomplete}, given that we are
considering $Bern(p)$ source, $BSC(\delta)$ channel under Hamming loss and
$\card{\mathcal{X}}=\card{\mathcal{U}}=2$.  Since $D(1,2)$ is an upper bound on
$D(1)$ and hence on $D(d), d\ge 1$, it is clear that for source
distributions and channel cross over probabilities where $D(1,2) < D(0)$, symbol
by symbol is not
optimal. We evaluate this region and show it in Fig. \ref{eyebddrefined}.
Note when $d=\infty$, since separation is optimal, the region of
suboptimality of symbol by symbol is the the complete square
$(p\in(0,0.5),\delta\in(0,0.5))$ except the boundary where symbol by symbol
is optimal. Also note, as is consistent with the plots, that in the zero
lookahead case, we have
$D(0)=\min \{ p, \delta \}$. Hence, for any lookahead, $d$, for a fixed cross
over probability $\delta$, if symbol by symbol encoding-decoding achieves $D(d)$
for
$p=p_0<0.5$, then it is also optimal for $p\in(p_0,1-p_0]$. Similarly, for a
fixed source probability $p$, if symbol by symbol encoding-decoding is optimal
for $\delta=\delta_0<0.5$, then it is also optimal for
$\delta\in(\delta_0,1-\delta_0]$. 

\begin{figure}[htbp]
\begin{center}
\includegraphics[scale=0.35]{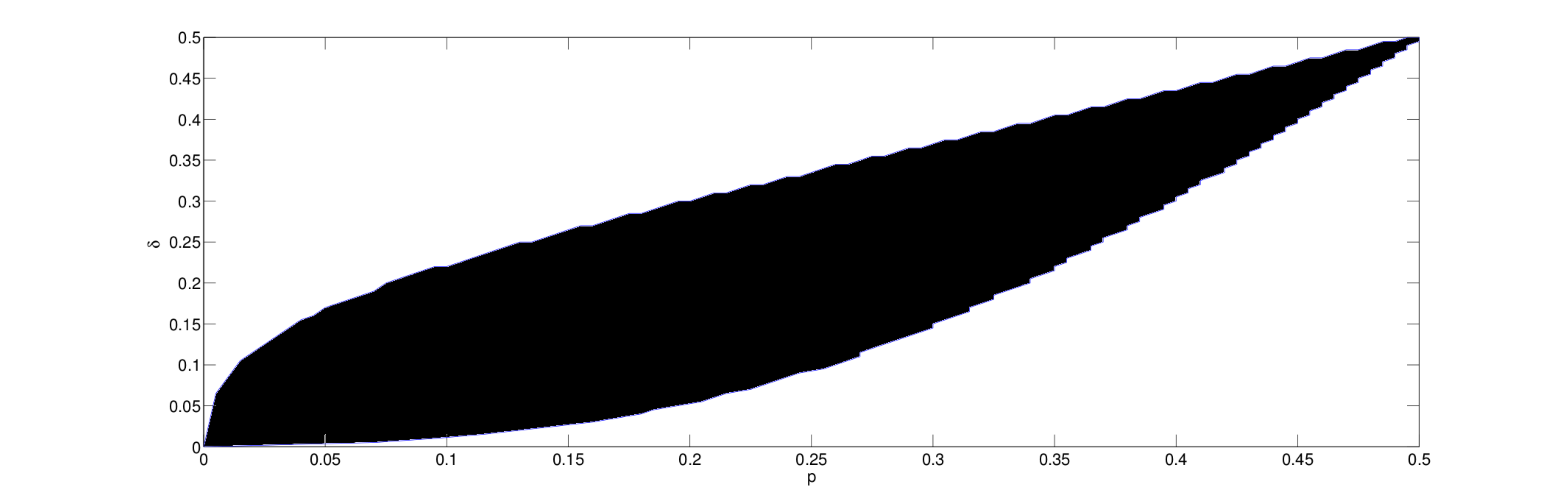}
\caption{The plot shows a region in the source-channel plane where symbol by
symbol coding is suboptimal among schemes with lookahead $d=1$, i.e., it does
not achieve $D(1)$. This is the shaded region which is bounded inside by the two
curves.
}\label{eyebddrefined}
\end{center}
\end{figure}
\newpage 
\section{Real Time Coding with Limited Lookahead : In the Absence of Feedback}
\label{nofeedbackrtc}
In the previous sections we assumed the availability of perfect unit delay
feedback from the decoder to the encoder. We now consider the same setting as
that depicted  in Fig. \ref{feedbackmemoryrtclookahead}, but without feedback,
and we formulate the problem as a
controlled Markov process. Decoders have finite state (see Note \ref{note5}) and
the assumptions A1 and A2 are presumed for similar reasons as in Note
\ref{note4} in Section \ref{rtcfinite}, i.e., memory is finite and decoding is
stationary. Here again, we first study the
system
with modified source, $\{V_i\}_{i\in\mathbf{N}}$. The state space for this 
problem is, $S_i=(V_i,P_{Z_{i}|V^i})\in\mathcal{S}$ and the disturbance
$W_i=V_i$. The actions are thus history dependent,
$A_i=f_{e,i}(S_0,V^{i-1})=f_{e,i}(S_0,W^{i-1})$, $S_0$ is some fixed initial
state
with distribution $P_S$.
Due to Markovity
of the source we have,
\bea
P(W_i|W^{i-1},S^{i-1},A^i)&=&K(V_{i-1},V_i)
\\&=&P_W(W_i|S_{i-1},A_i).\label{W3}
\eea
Denoting $\{P_{Z_i|V^i}\}_{Z_i\in\mathcal{Z}}$ by $\beta_i$,
\bea
\beta_i(z)&=&\frac{P_{V_i,z|V^{i-1}}}{P_{V_i|V_{i-1}}}\\
&=&\frac{\sum_{Z_{i-1}}P_{Z_{i-1},V_i,z|V^{i-1}}}{\sum_{Z_{i-1},Z_i}P_{Z_{i-1}
,V_i,Z_i|V^{i-1}}}\\
&=&\frac{\sum_{Z_{i-1},Y_i}P_{Z_{i-1}|V^{i-1}}
K(V_{i-1},V_i)P(Y_i|A_i(V_i))\1_{\{z=f_m(Y_i,Z_{i-1})\}}}
{\sum_{Z_{i-1},Y_i,Z_i}P_{Z_{i-1}|V^{i-1}}
K(V_{i-1},V_i)P(Y_i|A_i(V_i))\1_{\{Z_i=f_m(Y_i,Z_{i-1})\}}}\\
&=&\frac{\sum_{Z_{i-1},Y_i}\beta_{i-1}(Z_{i-1})
K(V_{i-1},V_i)P(Y_i|A_i(V_i))\1_{\{z=f_m(Y_i,Z_{i-1})\}}}{\sum_{Z_{i-1},Y_i,Z_i}\beta_{i-1}(Z_{i-1})
K(V_{i-1},V_i)P(Y_i|A_i(V_i))\1_{\{Z_i=f_m(Y_i,Z_{i-1})\}}}.
\eea 
This implies,
\bea\label{F3} 
\beta_i=\xi(\beta_{i-1},V_i,V_{i-1},A_i)=\xi(S_{i-1},A_i,W_i), \mbox{
}S_i=F(S_{i-1},A_i,W_i).
\eea
 We now have for average cost, with modified cost function
$\tilde{\Lambda}(V_i,\hat{V}_i)=\Lambda(U_i,\hat{U}_i)$ and stationary decoding
$\hat{V}^{opt}(\cdot)$,
\bea
&&\E\left[\tilde{\Lambda}(V_i,\hat{V}^{opt}(Y_i,Z_{i-1}))|V^{i-1},\beta^{i-1},
A^i\right ]
\\
&=&\sum_{\tilde{z}\in\mathcal{Z},\tilde{v}\in\mathcal{V},\tilde{y}
\in\mathcal{Y}}P(Z_{i-1}=\tilde{z},V_i=\tilde{v},Y_i=\tilde{y} |V^ {i-1} ,
\beta^{i-1}, A^i )\tilde{\Lambda}(\tilde{v} ,
\hat{V}^{opt}(\tilde{y},\tilde{z})))\\
&=&\sum_{\tilde{z}\in\mathcal{Z},\tilde{v}\in\mathcal{V},\tilde{y}
\in\mathcal{Y}}
\beta_{i-1}(\tilde{z})K(V_{i-1},\tilde{v})P(\tilde{y}|A_i(\tilde{v}))\tilde{
\Lambda}(\tilde{v} , \hat{V}^{opt}(\tilde{y},\tilde{z})))\\
&=&-g(V_{i-1},\beta_{i-1},A_i)\\
&=&-g(S_{i-1},A_i).
\eea
Thus,
\bea
&&\frac{1}{n}\sum_{i=1}^{n}\E\left[\tilde{\Lambda}(V_i,\hat{V}^{opt}(Y_i,Z_{i-1}
))\right ] \\
&=&\frac{1}{n}\sum_{i=1}^{n}\E\left[\E\left[\tilde{\Lambda}(V_i,\hat{V}^{opt}(Y_i,Z_
{ i-1 }
))|V^{i-1},\beta^{i-1},A^i\right]\right]\\
&=&-\frac{1}{n}\sum_{i=1}^{n}\E\left[g(S_{i-1},A_i)\right].
\eea
We finally write down the ACOE after transforming to the original source
(similar to the previous sections),
\begin{algorithm}
\bea
\lambda+h(u_1,\cdots,u_{d+1},\beta)&=&\max_{a\in\mathcal{A
} } \left [ g(u_1^{d+1},\beta,a)+\sum_ {
(\tilde{u},\tilde{y})\in\mathcal
 {U}\times\mathcal{Y}}P_U(\tilde{u})P_{Y|X}(\tilde{y}|a(u_2,\cdots,u_{d+1}
,\tilde { u }
))h(u_2.\cdots,u_{d+1},\tilde{u},\tilde{\beta})\right],\nonumber\\&&\mbox{
}\forall
{(u_1,\cdots,u_{d+1})\in\mathcal{U}^{d+1},}\mbox{
}\beta\in\mathcal{P}(\mathcal{Z}).\label {B3}
\eea
\end{algorithm}
\\ where
\bea\label{g3}
g(s,a)=g(u_1^{d+1},\beta,a)
&=&-\sum_{\tilde{z}\in\mathcal{Z},\tilde{u}\in\mathcal{U},\tilde{y}
\in\mathcal{Y}}
\beta(\tilde{z})P_U(\tilde{u})P(\tilde{y}|a(u_2,\cdots,
 u_{d+1},\tilde{u}))\Lambda(u_2,\hat{U}^{opt}(\tilde{y},\tilde{z})),
\eea
and
\bea
\tilde{\beta}&=&\xi((u_1,\cdots,u_{d+1},\beta),a,(u_2,\cdots,u_{d+1},\tilde{u}))\\
&=&\left\lbrace\frac{\sum_{\tilde{z}\in\mathcal{Z},\tilde{y}\in\mathcal{Y}}\beta(\tilde{z})
P(\tilde{u})P(\tilde{y}|a(u_2,\cdots,u_{d+1},\tilde{u}))\1_{\{z=f_m(\tilde{y},\tilde{z})\}}}{\sum_{\tilde{z}\in\mathcal{Z},\tilde{y}\in\mathcal{Y},z\in\mathcal{Z}}\beta(\tilde{z})
P(\tilde{u})P(\tilde{y}|a(u_2,\cdots,u_{d+1},\tilde{u}))\1_{\{z=f_m(\tilde{y},
\tilde{z})\}}}\right\rbrace_{z\in\mathcal{Z}}.
\eea
Theorem \ref{theorem1} implies that if ACOE Eq. (\ref{B3}) is
solved by a real $\lambda^{\ast}$ and a bounded $h(\cdot)$, then
$D(d,f_m,\mathcal{M})=-\lambda^{\ast}$. The results on the structure of optimal
policies parallel those outline in Note \ref{note3} and hence are omitted.
\begin{note}
 \label{note5}
For the setting considered in this section when no feedback is
present, we have restricted our attention to finite state decoders only, unlike 
the previous section where feedback was present and we also considered the case
where decoding used complete memory. This is because in the absence of
feedback when decoding uses complete memory, the state space is one on the
simplex of distributions on alphabets that grows exponentially with the time
index and hence the results of the theory presented in Section
\ref{controlledmarkov} are not as directly applicable.
\end{note}

\section{Sequential Source Coding with A Side Information \textquotedblleft
Vending Machine\textquotedblright}
\label{rtcaction}
In previous sections we considered the problem of real time source-channel
communication when the encoder generates channel input symbol sequentially with
a lookahead, with or without unit delay noise-free feedback, and the decoder
generates the estimate of
the source given the channel output and the memory. In this section we consider
a rate-distortion problem, where encoding is sequential with lookahead. In
addition to it, the decoder can take cost constrained actions, also in a
sequential
fashion, which affect the quality of the side information correlated with the
source
symbol it attempts to reconstruct. We consider two classes of such models : 
one where the encoder has
access to the past side information symbols through unit delay noise-free
feedback (Section \ref{sirtcaction}) and the other when it does not (Section
\ref{nosirtcaction}). The findings of this section are similar in spirit to
those of previous sections and assert the universality of the methodology
invoked
in the paper. We defer the proofs in this section to the Appendix.
\begin{figure}[htbp]
\begin{center}
\scalebox{0.65}{\input{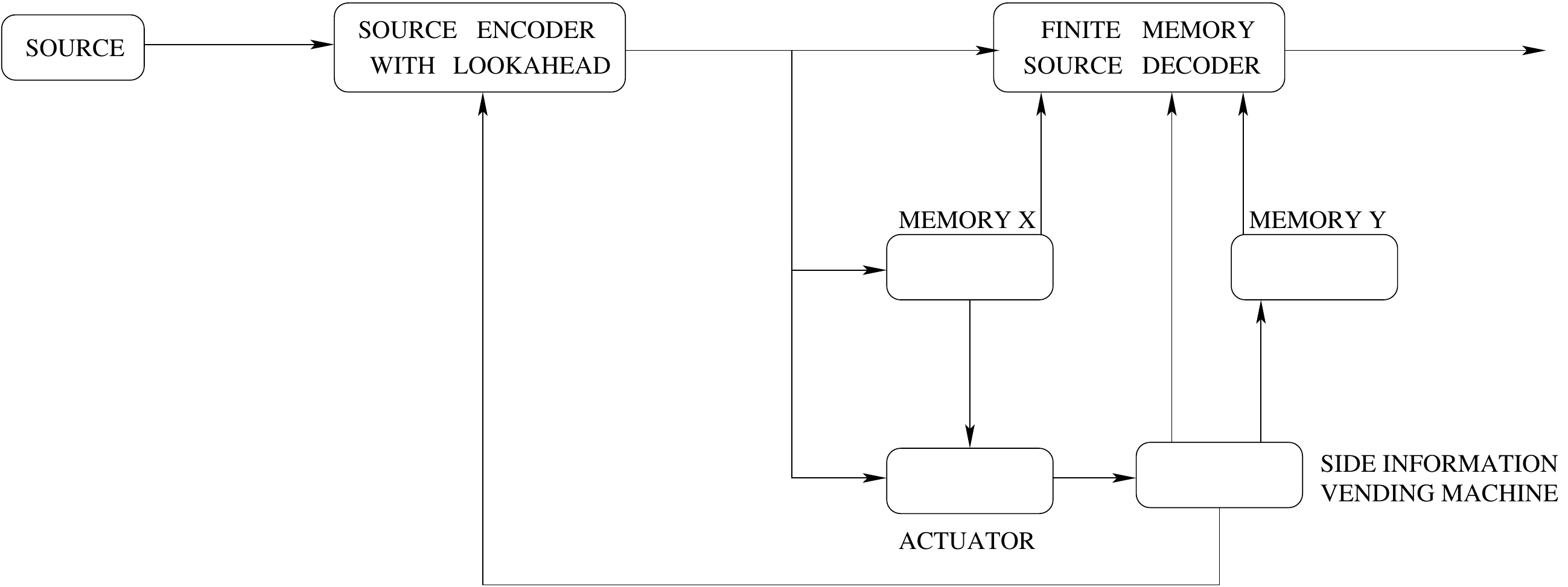_t}}
\caption{The setting of sequential source coding with source lookahead at the
encoder and side
information vending machine at the decoder. The encoder also knows the past
side information symbols through a unit delay noise-free feedback from the
decoder.}
\label{actionfeedbackmemoryrtclookahead}
\end{center}
\end{figure}

\subsection{Encoder has access to Side Information} 
\label{sirtcaction}
The setting depicted in Fig. \ref{actionfeedbackmemoryrtclookahead} consists of
the following
blocks : 
\begin{itemize}
\item \textit{Source Encoder} : The encoder has access to source symbols upto a
lookahead, 
$d$  and to the past side information symbols, i.e,
$X_i=f_{e,i}(U^{i+d},Y^{i-1})$, where $f_{e,i}$ is the encoding
function, $f_{e,i}:\mathcal{U}^{i+d}\times \mathcal{Y}^{i-1}\rightarrow
\mathcal{X}$, $i\in\mathbf{N}$. 
\item \textit{Memory X} : The decoder might not be able to use all of the
encoded symbols upto current time due to memory constraints. Memory X is updated
as a
function of the past state of the memory and the current encoder output,
i.e.,
$M_i=f_{m,i}(M_{i-1},X_i)$, where the $f_{m,i}$ is the memory update
function, $f_{m,i}:\mathcal{M}_{i-1}\times\mathcal{X}\rightarrow\mathcal{M}_{i}$,
$i\in\mathbf{N}$. Note that the alphabet $\mathcal{M}_i$ can grow with $i$,
hence this includes the special case of complete memory, i.e.,
$M_i=X^i$.
\item \textit{Actuator}  : Actuator uses the past Memory X, and 
the current encoded symbol to generate an action, i.e.,
$A_{v,i}=f_{v,i}(M_{i-1},X_i)$, where
$f_{v,i}:\mathcal{M}_{i-1}\times\mathcal{X}_i\rightarrow\mathcal{A}_v$. The
action sequence should satisfy the following cost constraint,
\bea
\limsup_{N\rightarrow\infty}\E\left[\frac{1}{N}\sum_{i=1}^{N}C(A_{v,i})\right]
\le \Gamma,
\eea
where $C(\cdot)$ is the cost function and $\Gamma$ is the cost constraint.
\item \textit{Side Information \textquotedblleft Vending
Machine\textquotedblright} : The side information is 
generated according to $P_{Y|U,A_v}$, i.e.,
\bea
P(y_i|u_1^{\infty},x^i,a_v^i)=P_{Y|U, A_v}(y_i|u_i,a_{v,i}).
\eea
\item \textit{Memory Y} : The decoder  may be limited in its ability to remember
all the side
information upto current time due to memory constraints. Memory Y is
updated as a function of the past state of the memory and the current side
information, i.e., $N_i=f_{n,i}(N_{i-1},Y_i)$, where the $f_{n,i}$ is the memory
update function,
$f_{n,i}:\mathcal{N}_{i-1}\times\mathcal{X}\rightarrow\mathcal{N}_{i}$,
$i\in\mathbf{N}$. Here also the alphabet $\mathcal{N}_i$ can grow with $i$,
hence also includes the special case of complete memory, i.e.,
$N_i=Y^i$.
\item \textit{Source Decoder} : Source decoder uses the current encoded symbol, 
current  side information and the past memory states, to construct its estimate
of the source
symbol, i.e., $\hat{U}_i=f_{d,i}(X_i,Y_i,M_{i-1},N_{i-1})$, the decoding rule is
the map, $f_{d,i} :
\mathcal{X}\times\mathcal{Y}\times\mathcal{M}_{i-1}\times\mathcal{N}_{i-1}
\rightarrow\hat{\mathcal{U}}$.
The complete memory case corresponds to the decoding, $\hat{U}_i(X^i,Y^i)$.
\end{itemize}
The alphabets $\mathcal{U}, \mathcal{X}, \mathcal{A}_v, \mathcal{Y},
\mathcal{M}, \mathcal{N}$ are assumed to be finite. Note that the finiteness of
the alphabets implies
we may assume, without loss of generality, that $0\le\Lambda(\cdot)\le
\Lambda_{max}<\infty$ and
$0\le C(\cdot)\le \Gamma_{max}<\infty$. We make the further assumption that
there exists $a\in\mathcal{A}_v$ such
 that $C(a)=0$. 
Thus it makes sense to
consider cost constraints, $\Gamma\in[0,\Gamma_{max}]$.\par
Our approach to construction of the ACOE is similar
to that taken in previous sections, we first consider  the system with modified
source,
$\{V_i=U_i^{i+d}\}_{i\in\mathbf{N}}$ and it is equivalent to consider
source and action encoding rules as mappings,
$\{f_{e,i}(v,V^{i-1},Y^{i-1})\}_{v\in\mathcal{V}}$ and
$\{f_{v,i}(x,X^{i-1})\}_{x\in\mathcal{X}}$. Hence the modified vending machine
is,
\bea
P(Y_i|V_i,A_{v,i}(X_i))=P(Y_i|U_i,A_{v,i}(X_i)),
\eea
and the modified cost function is,
\bea
\tilde{\Lambda}(V_i,\hat{V}_i)=\Lambda(U_i,\hat{U}_i).
\eea
We study two scenarios under this setting :

\subsubsection{Complete Memory}
\label{sirtcactioncomplete}
Here $M_i=X^i$ and $N_i=Y^i$. Note that we can restrict our attention to optimal
decoders of the form, 
$\hat{U}_i(X^i,Y^i)=\hat{U}_{Bayes}(X^i,Y^i)$ (cf. Lemma \ref{lemma1}). Let us
denote the minimum expected average distortion achieved to be
$D^{FB}_a(d)$. Here $FB$ superscript indicating we have side information
available as a feedback to the encoder, $a$ subscript denotes presence of
actions
and $d$ stands for lookahead. We have the average cost optimality equation as,
\begin{algorithm}
\bea
&&\rho^{\lambda}(u_1,\cdots,u_{d+1},\beta)+h^{\lambda}(u_1,\cdots,u_{d+1},
\beta)\nonumber\\
&=&\max_{(a_e,a_v)}\left[g^{\lambda}(u_1,\cdots,u_{d+1},\beta,a_e,
a_v)+\sum_{ \tilde { u }\in\mathcal{U} , \tilde { x }\in\mathcal{X} ,
\tilde{y}\in\mathcal{Y}}P(\tilde{u})\1_{\{\tilde{x}=a_e(u_2,\cdots,u_{d+1},
\tilde { u } ) \}}
P(\tilde{y}|u_2,a_v(\tilde{x}))h(u_d,\cdots,u_{d+1},\tilde{u},\tilde{
\beta})\right
]\nonumber\\
&& \mbox{ }\forall\mbox{ }(u_1,\cdots,u_{d+1})\in\mathcal{U}^{d+1},
\beta\in\mathcal{P}(\mathcal{U}^{d+1}),\label{B4}
\eea
\end{algorithm}
\newpage
where $(a_e,a_v)\in\mathcal{A}_e\times\mathcal{A}_v$ and 
$\tilde{\beta}=G(\beta,a_e,a_v,(u_2^{d+1},\tilde{u},\tilde{x},\tilde{y}))$ (cf.
Appendix \ref{appendixA})
is the updated belief and $g^{\lambda}(\cdot)$ is the Lagrangian augmented cost,
\bea
&&g^{\lambda}(u_1,\cdots,u_{d+1},\beta,a_e,a_v)\nonumber\\
&=&g(u_1,\cdots,u_{d+1},\beta,
a_e,a_v)+\lambda\left(\Gamma-l(u_1,\cdots,u_{d+1},\beta,
a_e,a_v)\right)\\
&=&-\min_{\hat{u}\in\mathcal{U}}\sum_{\overline{u}\in\mathcal{U}}
\beta_1(\overline{u})\Lambda(\overline{u},\hat{u}) +\lambda\left(\Gamma-\sum_ {
\tilde
{ u }
\in\mathcal
{U},\tilde{x}\in\mathcal{X}}
P(\tilde{u})\1_{\{\tilde{x}=a_e(u_2,\cdots,u_{d+1},
\tilde { u} ) \}}C(A_v(\tilde{x})) \right).\label{g4}
\eea
We have now the
following theorem, with proof in Appendix \ref{appendixA}.
\begin{theorem}\label{theorem5}
For a fixed lookahead, $d$, let $(\rho^{\lambda}(\cdot),h^{\lambda}(\cdot))$
solves the ACOE Eq. (\ref{B4}). Then
the optimal average distortion is given by,
\bea
D^{FB}_a(d)=-\inf_{\lambda\ge
0}\sup_{x\in\mathcal{U}^{d+1}\times\mathcal{P}(\mathcal{U}^{d+1})}\rho^{\lambda}
(x).
\eea
\end{theorem}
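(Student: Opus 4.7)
The plan is to formulate the setting as a constrained controlled Markov process and invoke the Lagrangian saddle-point theory of Theorem \ref{theorem2}. First I would pass (as in Section \ref{rtccomplete}) to the equivalent Markov source $V_i=U_i^{i+d}$, identify the state as $S_i=(V_i,\beta_i)$ with $\beta_i=P_{V_i|X^i,Y^i}$ (well-defined under complete memory because feedback makes the encoder and decoder share $\beta_i$), the disturbance as $W_i=(V_i,X_i,Y_i)$, and the action as $A_i=(a_{e,i},a_{v,i})\in\mathcal{A}_e\times\mathcal{A}_v$, consisting of a channel-input map and an action-selection map. Markovity of $\{V_i\}$ together with the vending-machine kernel $P_{Y|U,A_v}$ yield a disturbance law $P_W(\cdot\mid S_{i-1},A_i)$ and a deterministic state update $S_i=F(S_{i-1},A_i,W_i)$ through a Bayes recursion $\beta_i=G(\beta_{i-1},a_e,a_v,(v,x,y))$, mimicking Lemmas \ref{lemma1} and \ref{lemma2}. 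The distortion reward is $g=-\tilde\Lambda$ and the action-cost is $l$, both averaged over $W_i\mid (S_{i-1},A_i)$, so that $g^{\lambda}=g+\lambda(\Gamma-l)$ is the Lagrangian-augmented reward of Eq.~(\ref{g4}).

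Next I would verify the hypotheses of Theorem \ref{theorem2}. Compactness of $\mathcal{S}$ and $\mathcal{K}$ (C1) follows from finiteness of $\mathcal{V}$ and compactness of the simplex $\mathcal{P}(\mathcal{V})$; boundedness and continuity of $g^{\lambda}$ and $l$ (C2) are inherited from boundedness of $\Lambda$ and $C$ and from finiteness of the action set; weak continuity of the transition kernel (C3) reduces to continuity of the Bayes update $\beta\mapsto G(\beta,a_e,a_v,w)$, which holds on the open set where the normalization is nonzero and is extended by continuity elsewhere; Slater's condition (C4) holds by taking any $a_0\in\mathcal{A}_v$ with $C(a_0)=0$ and using $A_{v,i}\equiv a_0$, which produces zero average action cost, strictly below $\Gamma>0$.

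Theorem \ref{theorem2} then yields
\bea
L^{\ast}\;=\;\inf_{\lambda\ge 0}\;\sup_{(\nu,\pi)\in\mathcal{P}(\mathcal{S})\times\Pi_{HD}}L((\nu,\pi),\lambda).
\eea
For fixed $\lambda\ge 0$ the inner supremum is the optimal average reward of the \emph{unconstrained} controlled Markov process $(\mathcal{S},\mathcal{A},\mathcal{W},F,P_S,P_W,g^{\lambda})$, whose dynamic programming equation is precisely Eq.~(\ref{B4}). Applying Theorem \ref{theorem1} state by state identifies $\rho^{\lambda}(s)$ as the optimal gain starting from $s$; since the initial distribution is free, $\sup_{(\nu,\pi)}L((\nu,\pi),\lambda)=\sup_{s}\rho^{\lambda}(s)$. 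Combining, $L^{\ast}=\inf_{\lambda\ge 0}\sup_{x}\rho^{\lambda}(x)$, and since $L^{\ast}$ is the maximum of $J=-\E[\tilde\Lambda]$ subject to the cost constraint, negation gives $D_a^{FB}(d)=-L^{\ast}=-\inf_{\lambda\ge 0}\sup_{x}\rho^{\lambda}(x)$.

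The main obstacle is justifying the \emph{state-dependent} gain $\rho^{\lambda}(\cdot)$ appearing in Eq.~(\ref{B4}): Theorem \ref{theorem1} was stated with a scalar gain, implicitly under a unichain assumption that forces the optimal gain to be constant. Here distinct initial beliefs can seed distinct recurrent classes of the belief process, so I would invoke the multichain version of the ACOE, in which $\rho^{\lambda}$ is constant on each ergodic class but may differ across classes; the outer $\sup_{x}\rho^{\lambda}(x)$ then captures the best such class, matching the operational freedom to choose the best initial state for $D(d)$ as in Section \ref{unconstrainedcontrol}. A secondary technical point is the continuous extension of the Bayes update at zero-probability evidence, which is handled in the standard way and does not affect the value.
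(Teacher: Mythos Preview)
Your proposal is correct and follows essentially the same route as the paper's proof in Appendix~\ref{appendixA}: the paper defines the identical state, disturbance, and action sequences, verifies the controlled Markov structure and conditions C1--C4 of Theorem~\ref{theorem2} in the same way (compactness via the simplex, Slater via the zero-cost action, weak continuity via continuity of the belief update $F$), and then concludes via the same chain $-\rho^\ast=-\sup_{(\nu,\pi)}\inf_\lambda L=-\inf_\lambda\sup_{(\nu,\pi)}L=-\inf_\lambda\sup_x\rho^\lambda(x)$. The technical concern you raise about the state-dependent gain $\rho^\lambda(\cdot)$ versus the scalar $\lambda$ of Theorem~\ref{theorem1} is genuine and is not explicitly addressed in the paper either; the paper simply assumes $(\rho^\lambda(\cdot),h^\lambda(\cdot))$ solve Eq.~(\ref{B4}) and invokes it in step (c), so your multichain remark is, if anything, more careful than the original.
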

\begin{note}
\label{notel}
Note that for a
fixed finite lookahead, $d\ge 1$, $D^{FB}_a(d)$ contrasts the minimum
distortion at, $d=0$, where symbol by symbol encoding, action-encoding and
decoding are optimal, i.e,
\bea\label{asymbol}
D^{FB}_a(0)=\min_{X:\mathcal{U}\rightarrow \mathcal{X},
A:\mathcal{X}\rightarrow\mathcal{A}}\E\left[\Lambda\left(U,\hat{U}_{Bayes}(P_{
U|X,Y})\right)\right ],
\eea
while at infinite lookahead, $d=\infty$, the minimum distortion is given by
the distortion rate function at unit rate by results from
\cite{HaimTsachyVendor}, i.e,
\bea
D^{FB}_a(\infty)&=&\min_{P_{A_v,W|U},\hat{U}^{opt}:\mathcal{W}\times\mathcal{Y}
\rightarrow\hat{\mathcal{U}} } \E\left [ \Lambda\left(U , \hat { U } ^ { opt }
(W,Y)\right)\right ]\nonumber\\
\mbox{such that}&&I(U;W,A_v)\le \log_{2}\card{\mathcal{X}}\nonumber\\
&&\card{\mathcal{W}}\le\card{\mathcal{U}}\card{\mathcal{A}_v}+2\nonumber\\
&&\E\left[C(A_v)\right]\le\Gamma,\label{ashannon}
\eea
where $I(\cdot;\cdot)$ is the mutual information (cf. \cite{CovThom}). The
above distortion is basically the distortion rate function (cf. Theorem 3
\cite{HaimTsachyVendor}) evaluated at the rate equal to the cardinality of the
alphabet $\card{\mathcal{X}}$. The proofs for Equations (\ref{asymbol}) and
(\ref{ashannon}) are similar to those of Lemma \ref{lemma3} and
Lemma \ref{lemma4}.
\end{note}

\subsubsection{Finite Memory}
\label{sirtcactionfinite}
In this section, all memories are finite (not growing with time). With the
object of minimizing the
expected distortion, we cast this problem as a constrained Markov decision
process. To be able to do that, for
reasons discussed in Section \ref{rtcfinite}, we assume, $f_{m,i}=f_m$,
$f_{n,i}=f_n$,
$\mathcal{M}_i=\mathcal{M}$ and $\mathcal{N}_i=\mathcal{N}$ for all
$i\in\mathbf{N}$, the alphabets $\mathcal{M}$, $\mathcal{N}$ being
finite. We further assume stationary optimal decoding and actuator policies,
i.e., $f_{d,i}(\cdot,\cdot,\cdot,\cdot)=U^{opt}(\cdot,\cdot,\cdot,\cdot)$, and
$f_{a,i}(\cdot,\cdot)=A_v^{opt}(\cdot,\cdot)$ for all $i\in\mathbf{N}$. \\
Fix a lookahead $d$. Now for fixed $\lambda\ge 0$, the average cost optimality
equation is,
\begin{algorithm}
\bea
&&\rho^{\lambda}(u_1,\cdots,u_{d+1},m,n)+h^{\lambda}(u_1,\cdots,u_{d+1},m,
n)\nonumber\\
&=&\max_{a\in\mathcal{A}}\left[g^{\lambda}(u_1,\cdots,u_{d+1},m,n,a)+\sum_{
\tilde { u }\in\mathcal{U} , \tilde { x }\in\mathcal{X} ,
\tilde{y}\in\mathcal{Y}}P(\tilde{u})\1_{\{\tilde{x}=a(u_2,\cdots,u_{d+1},
\tilde { u } ) \}}
P(\tilde{y}|u_2,A^{opt}_v(\tilde{x}))h(u_d,\cdots,u_{d+1},\tilde{u},
\tilde{m},\tilde{n}))\right
]\nonumber\\
&& \mbox{ }\forall\mbox{ }(u_1,\cdots,u_{d+1})\in\mathcal{U}^{d+1},
m\in \mathcal{M},n\in\mathcal{N},\label{B5}
\eea
\end{algorithm}
\\
where $\tilde{m}=f_m(m,\tilde{x})$ and $\tilde{n}=f_n(n,\tilde{y})$ are memory
updates and $g^{\lambda}(\cdot)$ is the Lagrangian augmented cost,
\bea
&&g^{\lambda}(u_1,\cdots,u_{d+1},m,n,a)\nonumber\\
&=&g(u_1,\cdots,u_{d+1},m,n,
a)+\lambda\left(\Gamma-l(u_1,\cdots,u_{d+1},m,n,
a)\right)\\
&=&-\sum_{\tilde{u}\in\mathcal{U},\tilde{x}\in\mathcal{X},\tilde{y}\in\mathcal{Y
}}P(\tilde{u})\1_{\{\tilde{x}=a(u_2,\cdots,u_{d+1},
\tilde { u } ) \}}
P(\tilde{y}|u_2,A^{opt}_v(\tilde{x}))\Lambda(u_2,\hat{U}_{Bayes}(\tilde{x},
\tilde{y},m,n))\nonumber\\
&+&\lambda\left(\Gamma-\sum_{\tilde{u}\in\mathcal{U},\tilde{x}\in\mathcal{X}}
P(\tilde{u})\1_{\{\tilde{x}=a_e(u_2,\cdots,u_{d+1},\tilde{u})\}}C(A_v^{opt}
(\tilde{x})) \right).\label{g5}
\eea
Let us denote the optimal distortion by $D^{FB}_a(d,\mathcal{M},\mathcal{N})$.
We
have now the following theorem, with proof in Appendix \ref{appendixB}.
\begin{theorem}\label{theorem6}
For a fixed lookahead, $d$, let $(\rho^{\lambda}(\cdot),h^{\lambda}(\cdot))$
solves the ACOE Eq. (\ref{B5}). Then
the optimal average distortion is given by,
\bea
D^{FB}_a(d,\mathcal{M},\mathcal{N})=-\inf_{\lambda\ge
0}\sup_{x\in\mathcal{U}^{d+1}\times\mathcal{M}\times\mathcal{N}}\rho^{\lambda}
(x).
\eea
\end{theorem}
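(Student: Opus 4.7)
The plan parallels the proof of Theorem \ref{theorem5}, with the substantial simplifications afforded by the fact that all memory alphabets are finite. First I would cast the problem as a constrained Markov decision process by defining the augmented state $S_i = (V_i, M_i, N_i)$ taking values in the finite set $\mathcal{V} \times \mathcal{M} \times \mathcal{N}$, disturbance $W_i = (V_i, X_i, Y_i)$, and actions $A_i$ ranging over the (finite) set of mappings $\mathcal{V} \to \mathcal{X}$; the decoder $\hat{U}^{opt}$ and actuator $A_v^{opt}$ are absorbed into the dynamics since they are fixed by assumption to be stationary. Using the Markov structure of $V$ inherited from the transformation of the i.i.d.\ source and the memoryless vending machine $P_{Y|U,A_v}$, one checks that $P(W_i \mid W^{i-1}, S^{i-1}, A^i) = K(V_{i-1},V_i)\,\I{X_i=a_i(V_i)}\,P_{Y|U,A_v}(Y_i \mid U_i, A_v^{opt}(M_{i-1},X_i))$, so the disturbance depends on history only through $(S_{i-1}, A_i)$, and the memory-update rules then yield a deterministic evolution $S_i = F(S_{i-1}, A_i, W_i)$.

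Next I would introduce the one-stage reward $g$ and one-stage cost $l$ so that $J(\nu,\pi)$ is the negative expected average distortion and $J^c(\nu,\pi)$ is the expected average action cost; this gives precisely the constrained controlled Markov process of Section \ref{constrainedcontrol} with Lagrangian $L((\nu,\pi),\lambda) = J(\nu,\pi) + \lambda(\Gamma - J^c(\nu,\pi))$ whose one-stage reward is $g^{\lambda}$ of Eq.\ (\ref{g5}). I would then verify the hypotheses C1--C4 of Theorem \ref{theorem2}: C1 is immediate because $\mathcal{S}$ and $\mathcal{K}$ are finite; C2 holds because $g$ and $l$ are bounded functions on a finite domain; C3 is trivial since the transition kernel $Q(\cdot\mid s, a)$ lives on a discrete space and varies continuously (in fact, the convergence in C3 is vacuous once $s_n \to s$ and $a_n \to a$ force eventual equality); Slater's condition C4 is established by exhibiting the stationary policy that always selects an action leading to $A_v$ with $C(A_v)=0$, which exists by the standing assumption and strictly satisfies the cost constraint for any $\Gamma > 0$.

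With C1--C4 verified, Theorem \ref{theorem2} gives the saddle-point identity $L^{\ast} = \inf_{\lambda \ge 0} \sup_{(\nu,\pi)} L((\nu,\pi),\lambda)$, so it suffices to evaluate, for each fixed $\lambda \ge 0$, the inner supremum. This inner supremum is an \emph{unconstrained} average-reward controlled Markov process with stage reward $g^{\lambda}$ on the finite state-action space, for which Theorem \ref{theorem1} applies: any bounded solution $(\rho^{\lambda}, h^{\lambda})$ of the Bellman equation (\ref{B5}) satisfies $\rho^{\lambda} = J^{opt}(\lambda, \cdot)$, and the optimum over the initial distribution $\nu$ is attained by concentrating on the best initial state, giving $\sup_{(\nu,\pi)} L((\nu,\pi), \lambda) = \sup_{x} \rho^{\lambda}(x)$. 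Combining these, $L^{\ast} = \inf_{\lambda \ge 0} \sup_{x} \rho^{\lambda}(x)$, and since $L^{\ast}$ is the negative of the constrained minimum distortion, the claim $D^{FB}_a(d, \mathcal{M}, \mathcal{N}) = -\inf_{\lambda \ge 0} \sup_{x \in \mathcal{U}^{d+1} \times \mathcal{M} \times \mathcal{N}} \rho^{\lambda}(x)$ follows.

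The main technical subtlety, rather than a genuine obstacle, is the passage from the original per-symbol distortion $\E[\Lambda(U_i, \hat{U}^{opt}(X_i, Y_i, M_{i-1}, N_{i-1}))]$ to a stage reward $g(S_{i-1}, A_i)$ depending only on the previous state and current action; this is handled, as in Sections \ref{rtccomplete} and \ref{rtcfinite}, by conditioning on the history and averaging out $W_i$ using the factorization displayed above. Once this reduction is performed, the finiteness of the state--action space trivializes the measure-theoretic conditions that required more care in Theorem \ref{theorem5}, where the belief state ranges over the continuous simplex $\mathcal{P}(\mathcal{U}^{d+1})$ and one must appeal to weak continuity of the belief update $G$.
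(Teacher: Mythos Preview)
Your proposal is correct and follows essentially the same route as the paper's proof in Appendix \ref{appendixB}: define $S_i=(V_i,M_i,N_i)$, $W_i=(V_i,X_i,Y_i)$, verify the Markov disturbance law and deterministic state update, reduce the per-symbol distortion and action cost to stage functions $g$ and $l$, and then invoke Theorem \ref{theorem2} exactly as in the proof of Theorem \ref{theorem5}. The paper is in fact terser than you are---it simply says ``rest of the proof is similar to the proof of Theorem \ref{theorem5} by invoking Theorem \ref{theorem2}''---so your explicit verification of C1--C4 (which, as you note, becomes trivial on a finite state--action space) is a welcome elaboration rather than a departure.
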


\subsection{Encoder does not have access to Side Information}
\label{nosirtcaction}
Here encoder does not recieve any knowledge about side information. In this
section also, we make assumptions A1 and A2 and further assume finite state
decoders (for reasons similar to those outlined in Note \ref{note5}). For a
fixed lookahead $d$, $\lambda\ge 0$,
we have the average cost optimality equation,
\begin{algorithm}
\bea
&&\rho^{\lambda}(u_1,\cdots,u_{d+1},\beta,\gamma)+h^{\lambda}(u_1,\cdots,u_{d+1}
,\beta, \gamma)\nonumber\\
&=&\max_{a\in\mathcal{A}}\left[g^{\lambda}(u_1,\cdots,u_{d+1},\beta,\gamma,
a)+\sum_ {
\tilde { u }\in\mathcal{U} , \tilde { x }\in\mathcal{X} ,
\tilde{y}\in\mathcal{Y}}P(\tilde{u})\1_{\{\tilde{x}=a(u_2,\cdots,u_{d+1},
\tilde { u } ) \}}
P(\tilde{y}|u_2,A^{opt}_v(\tilde{x}))h(u_d,\cdots,u_{d+1},\tilde{u},
\tilde{\beta},\tilde{\gamma}))\right
]\nonumber\\
&& \mbox{ }\forall\mbox{ }(u_1,\cdots,u_{d+1})\in\mathcal{U}^{d+1},
\beta\in \mathcal{P}(\mathcal{M}),\gamma\in\mathcal{P}(\mathcal{N}),\label{B6}
\eea
\end{algorithm}
\\
where $\tilde{\beta}=\xi_m(\beta,u_1^{d+1},\tilde{u},a)$ and
$\tilde{\gamma}=\xi_n(\gamma,u_1^{d+1},\tilde{u},a)$ are
belief updates (cf. Appendix \ref{appendixC}) and $g^{\lambda}(\cdot)$ is the
Lagrangian augmented cost,
\bea
&&g^{\lambda}(u_1,\cdots,u_{d+1},\beta,\gamma,a)\nonumber\\
&=&g(u_1,\cdots,u_{d+1},\beta,\gamma,
a)+\lambda\left(\Gamma-l(u_1,\cdots,u_{d+1},\beta,\gamma,
a)\right)\\
&=&-\sum_{\tilde{m}\in\mathcal{M},\tilde{n}\in\mathcal{N},\tilde{u}\in\mathcal{U
} , \tilde { x } \in\mathcal { X } , \tilde { y } \in\mathcal { Y
}}\beta(\tilde{m})\gamma(\tilde{n})P(\tilde{u})\1_{\{\tilde{x}=a(u_2,\cdots,u_{
d+1 } ,
\tilde { u } ) \}}
P(\tilde{y}|u_2,A^{opt}_v(\tilde{x}))\Lambda(u_2,\hat{U}_{Bayes}(\tilde{x},
\tilde{y},\tilde{m},\tilde{n}))\nonumber\\
&+&\lambda\left(\Gamma-\sum_{\tilde{u}\in\mathcal{U},\tilde{x}
\in\mathcal { X }
}P(\tilde{u})\1_{\{\tilde{x}=a_e(u_2,\cdots,u_{d+1},\tilde{u})\}} C(A_v^ { opt }
(\tilde{x})) \right).\label{g6}
\eea
Let us denote the optimal distortion by $D^{NF}_a(d,\mathcal{M},\mathcal{N})$
(NF standing for no feedback of side information symbols).
We can now state the following theorem whose proof is defered to Appendix
\ref{appendixC}.
\begin{theorem}\label{theorem7}
For a fixed lookahead, $d$, suppose that 
$(\rho^{\lambda}(\cdot),h^{\lambda}(\cdot))$
solves the ACOE Eq. (\ref{B6}). Then
the optimal average distortion is given by,
\bea
D^{NF}_a(d,\mathcal{M},\mathcal{N})=-\inf_{\lambda\ge
0}\sup_{x\in\mathcal{U}^{d+1}\times\mathcal{P}(\mathcal{M})\times\mathcal{P}
(\mathcal{N})} \rho^{\lambda}(x).
\eea
\end{theorem}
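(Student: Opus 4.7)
The strategy mirrors the approach used for Theorems \ref{theorem5} and \ref{theorem6}: cast the problem as a constrained controlled Markov process and apply the Lagrangian duality result of Theorem \ref{theorem2}, recognizing Eq. (\ref{B6}) as the average-cost optimality equation for the Lagrangian-augmented unconstrained problem at each fixed $\lambda\ge 0$.

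First I would identify the components of the constrained MDP. Following the Markov-source reduction of Section \ref{rtccomplete}, take $S_i = (V_i,\beta_i,\gamma_i)$ with $\beta_i=P_{M_i|V^i}$ and $\gamma_i=P_{N_i|V^i}$, so that the state space $\mathcal{S} = \mathcal{U}^{d+1}\times\mathcal{P}(\mathcal{M})\times\mathcal{P}(\mathcal{N})$ is compact. The action $A_i:\mathcal{U}^{d+1}\to\mathcal{X}$ lives in a finite set, so $\mathcal{K}$ is compact. The disturbance $W_i=(V_i,X_i,Y_i)$ has conditional law depending on $(S_{i-1},A_i)$ only, and the belief updates $\xi_m,\xi_n$ of Appendix \ref{appendixC} are deterministic functions of $(S_{i-1},A_i,W_i)$; the stage reward $g^{\lambda}$ and constraint function $l$ are the two summands of Eq. (\ref{g6}). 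A Lemma \ref{lemma1}-style argument fixes the optimal decoder to the stationary Bayes response, legitimating the form of $g$.

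Next I would verify conditions C1--C4 of Theorem \ref{theorem2}. Compactness (C1) follows from the above; both $g$ and $l$ are affine in the belief vectors, hence continuous and bounded on $\mathcal{K}$ (C2). Weak continuity of the kernel (C3) reduces to continuity of $\xi_m$ and $\xi_n$, which are ratios of linear forms in $(\beta,\gamma)$ whose denominators are strictly positive on the interior of the simplex; extending by continuity to the boundary handles degenerate beliefs. Slater's condition (C4) is guaranteed by the standing assumption that some $a\in\mathcal{A}_v$ has $C(a)=0$: the stationary policy that always selects this action meets the cost constraint strictly for any $\Gamma>0$.

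Given C1--C4, Theorem \ref{theorem2} delivers a saddle point $(\nu^{\ast},\pi^{\ast},\lambda^{\ast})$ together with the minmax identity $L^{\ast}=\inf_{\lambda\ge 0}\sup_{(\nu,\pi)}L((\nu,\pi),\lambda)$, where by construction $L^{\ast}=-D^{NF}_a(d,\mathcal{M},\mathcal{N})$. For each fixed $\lambda\ge 0$ the inner supremum is the optimal average reward of an unconstrained controlled Markov process with one-stage reward $g^{\lambda}$, and by Theorem \ref{theorem1} this supremum equals $\sup_x \rho^{\lambda}(x)$ whenever $(\rho^{\lambda},h^{\lambda})$ solves Eq. (\ref{B6}). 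Taking the infimum over $\lambda\ge 0$ and negating yields the claimed expression. I expect the main obstacle to be verifying weak continuity of the belief-update kernel at the boundary of the simplex and ensuring that the supremum over the initial distribution commutes with the infimum over $\lambda$; both are handled, exactly as in Appendices \ref{appendixA} and \ref{appendixB}, by a continuous extension of $\xi_m,\xi_n$ and by the fact that once an ACOE solution exists, the optimal average reward is independent of the initial state.
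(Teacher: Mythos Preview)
Your overall strategy---cast as a constrained MDP, verify C1--C4, and apply Theorem \ref{theorem2} with Eq.~(\ref{B6}) as the ACOE for the $\lambda$-augmented unconstrained problem---matches the paper's proof in Appendix \ref{appendixC} exactly. However, there is one substantive slip: you take the disturbance to be $W_i=(V_i,X_i,Y_i)$, but in this no-feedback setting the encoder does \emph{not} observe the side information, so admissible encoding policies are functions of $(S_0,V^{i-1})$ only. In the framework of Section \ref{controlledmarkov} the history $\Phi_t=(s_0,w_0,\ldots,w_{t-1})$ is precisely what the controller sees; including $Y_i$ in $W_i$ would allow the encoder to use past side information and would set up an MDP whose optimal value is $D^{FB}_a$ rather than $D^{NF}_a$. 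The paper takes $W_i=V_i$ alone (Eq.~(\ref{W6}); compare the parallel no-feedback treatment in Section \ref{nofeedbackrtc}, Eq.~(\ref{W3})), and the belief updates $\xi_m,\xi_n$ are then functions of $(\beta_{i-1},\gamma_{i-1},V_{i-1},V_i,A_i)$ with $Y_i$ marginalized out. This is exactly why the state here must carry the \emph{distributions} $P_{M_i|V^i},P_{N_i|V^i}$ rather than the realizations $(M_i,N_i)$ as in the feedback case of Theorem \ref{theorem6}.

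A smaller point: the stationary decoder in this setting is not obtained by a Lemma \ref{lemma1}-style Bayes argument; it is imposed by assumption A2 (and the stationary actuator $A_v^{opt}$ is likewise assumed), as stated at the opening of Section \ref{nosirtcaction}. Lemma \ref{lemma1} applies only in the complete-memory case where the decoder has access to all of $(X^i,Y^i)$.
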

\begin{note}
 Note that the ACOE in this section on sequential
source coding with lookahead and a side information vending machine is
amenable
to
computational solutions as in Section
\ref{rtcfinitecomputation}. Here also
$D^{FB}_a(d,\mathcal{M},\mathcal{N})$ can be computed for
increasing
memories exactly, and yield non trivial bounds on $D^{FB}_a(d)$.
\end{note}

\section{Summary of the Results}
\label{summary}
In this section we provide a summary of the
various settings considered in this paper on real time communication with fixed
finite lookahead at the
encoder, and the transformations performed to cast the problem as (constrained
or unconstrained) Markov decision process. The methodology is to
construct an average
cost optimality equation (ACOE), and seek its solution. We have considered two
classes of problems in this paper : 
\begin{enumerate}
 \item\textit{Real Time Communication, Fig.
\ref{feedbackmemoryrtclookahead}}. The problem is characterized by tuple
 $(\mathcal{S}, \mathcal{A}, \mathcal{W},F,P_{S},P_{W},g)$, the meaning of
various symbols being explained in Section \ref{controlledmarkov}. The
general ACOE is,
\bea
\lambda(s) + h(s) =
\max_{a\in\mathcal{A}}\left[g(s,a)+\sum_{w\in\mathcal{W}}P_W(w|s,a)h(F(s,a,
w))\right ] \mbox {
}\forall\mbox{ }s\in \mathcal{S}.
\eea
Note in all the settings we considered, $\sup$ is replaced by $\max$ as the
set of actions is finite. If $\exists\
\lambda^{\ast}\in\mathbf{R}$ and a
bounded $h(\cdot)$, satisfying the above equation, then using Theorem
\ref{theorem1}, the minimum distortion is $-\lambda^{\ast}$. The following table
exhibits the transformations, along with pointers to the equations in the paper
that cast the problem of Fig. \ref{feedbackmemoryrtclookahead} as an 
unconstrained
Markov decision process : 
\newline
\begin{center}
  \begin{tabular}{ | p{3.2cm} | p{3.5cm} | p{3.5cm} | p{3.5cm} | }
    \hline
    Real-Time Communication, Fig. \ref{feedbackmemoryrtclookahead}, Lookahead,
$d$ & Noise-Free Feedback, Complete Memory Decoding & Noise-free
Feedback, Finite Memory ($\mathcal{M}$) Decoder &  No Feedback, Finite
Memory ($\mathcal{M}$) Decoder \\ \hline    \hline
$\mathcal{S}$, state space & $\mathcal{U}^{d+1}\times
\mathcal{P}(\mathcal{U}^{d+1})$ & $\mathcal{U}^{d+1}\times \mathcal{M} $ &
$\mathcal{U}^{d+1}\times \mathcal{P}(\mathcal{M})$ \\
\hline
$\mathcal{A}$, action space & $\mbox{Mappings}:\mathcal{U}^{d+1}\rightarrow
\mathcal{X}$ & $\mbox{Mappings}:\mathcal{U}^{d+1}\rightarrow
\mathcal{X}$ & $\mbox{Mappings}:\mathcal{U}^{d+1}\rightarrow
\mathcal{X}$\\ \hline
$\mathcal{W}$, disturbance & $\mathcal{U}^{d+1}\times\mathcal{Y}$
&$\mathcal{U}^{d+1}\times\mathcal{Y}$  & $\mathcal{U}^{d+1}$\\
\hline
$F(\cdot)$ & Eq. (\ref{F1}) & Eq. (\ref{F2}) & Eq. (\ref{F3}) \\ \hline 
$P_W(\cdot|S,A)$ & Eq. (\ref{W1}) & Eq. (\ref{W2}) & Eq. (\ref{W3}) \\ \hline 
$g(S,A)$, reward &
Eq. (\ref{g1})&
Eq. (\ref{g2}) & Eq. (\ref{g3})\\ \hline
ACOE &
Eq. (\ref{B1})&
Eq. (\ref{B2}) & Eq. (\ref{B3})\\ \hline
  \end{tabular}
\newline
\end{center}

\item \textit{Source Coding with a Side Information Vending Machine, Fig.
\ref{actionfeedbackmemoryrtclookahead}} : The problem is characterized by tuple
 $(\mathcal{S},
\mathcal{A},\mathcal{W},F,P_{s},P_{w},g,l,\Gamma)$ explained
in Section \ref{unconstrainedcontrol}. Here also the
general ACOE is,
\bea
\rho^{\lambda}(s) + h^{\lambda}(s) =
\sup_{a\in\mathcal{A}}\left[g^{\lambda}(s,a)+\sum_{w\in\mathcal{W}}P(w|s,a)h(F(s
, a,
w))\right ] \mbox {
}\forall\mbox{ }s\in \mathcal{S},
\eea
where $\lambda$ is the Lagrangian parameter. Minimum distortion is given by the
Theorems \ref{theorem5}, \ref{theorem6} and \ref{theorem7}, respectively, for
the
cases tabulated below.\\
\begin{center}
  \begin{tabular}{ | p{3.2cm} | p{3.5cm} | p{3.5cm} | p{3.5cm} | }
    \hline
    Source Coding With SI \textquotedblleft Vendor\textquotedblright, Fig.
\ref{actionfeedbackmemoryrtclookahead}, Lookahead,
$d$ & Noise-Free Feedback, Complete Memory Decoding & Noise-free
Feedback, Finite Memory ($\mathcal{M},\mathcal{N}$) Decoder &  No Feedback,
Finite
Memory ($\mathcal{M},\mathcal{N}$) Decoder \\ \hline    \hline
$\mathcal{S}$, state space & $\mathcal{U}^{d+1}\times
\mathcal{P}(\mathcal{U}^{d+1})$ & $\mathcal{U}^{d+1}\times
\mathcal{M}\times \mathcal{N} $ &
$\mathcal{U}^{d+1}\times \mathcal{P}(\mathcal{M})\times
\mathcal{P}(\mathcal{N})$ \\
\hline
$\mathcal{A}$, action space & $\mbox{Mappings}:\mathcal{U}^{d+1}\times
\mathcal{X}\rightarrow
\mathcal{X}\times\mathcal{A}_v$&
$\mbox{Mappings}:\mathcal{U}^{d+1}\rightarrow
\mathcal{X}$  & $\mbox{Mappings}:\mathcal{U}^{d+1}\rightarrow
\mathcal{X}$\\ \hline
$\mathcal{W}$, disturbance &
$\mathcal{U}^{d+1}\times\mathcal{X}\times\mathcal{Y}$
&$\mathcal{U}^{d+1}\times\mathcal{X}\times\mathcal{Y}$  &
$\mathcal{U}^{d+1}$\\
\hline
$F(\cdot)$ & Eq. (\ref{F4}) (Appendix \ref{appendixA}) & Eq. (\ref{F5})
(Appendix \ref{appendixB})& Eq. (\ref{F6}) (Appendix \ref{appendixC})\\ \hline 
$P_w(\cdot|S,A)$ & Eq. (\ref{W4}) (Appendix \ref{appendixA})& Eq. (\ref{W5})
(Appendix \ref{appendixB})& Eq. (\ref{W6}) (Appendix \ref{appendixC})\\ \hline 
$g^{\lambda}(S,A)$, Lagrangian augmented reward & Eq. (\ref{g4}) (Appendix
\ref{appendixA}) &
Eq. (\ref{g5}) (Appendix \ref{appendixB})& Eq. (\ref{g6}) (Appendix
\ref{appendixC})\\ \hline
ACOE &
Eq. (\ref{B4})&
Eq. (\ref{B5}) & Eq. (\ref{B6})\\ \hline
  \end{tabular}
\end{center}
\end{enumerate}

\section{Conclusion}
\label{conclusion}
In this paper, we consider an important class of problems in real time
coding :  a memoryless source is to be communicated over a memoryless channel,
with sequential encoding and decoding and with a fixed finite lookahead of
future symbols available at the encoder. Unit delay feedback may or may
not be present, and decoding is based on the channel output symbols without
delay, with or without a memory constraint. In all these scenarios, under the
objective of minimizing
the per-symbol distortion, we obtain average cost
optimality equations whose solution yields the minimum achievable distortion, as
well as sufficient
conditions for the optimality of
stationary policies. We contrast the minimum distortion at a fixed lookahead,
with the best achievable with zero lookahead, where symbol by symbol
encoding-decoding is optimal,
and with the infinite lookahead case, for which the minimum achievable per
symbol distortion is shown to coincide with that for the classical joint source
channel coding problem, where separation is optimal. For the Bernoulli
source and binary symmetric channel under Hamming loss, in case of finite state
decoders, we compute
exactly the minimum distortion values for various memory sizes, and study the 
upper bounds that they yield on the minimum distortion for a fixed lookahead in
the absence of memory constraints. Answering
the question
\textquotedblleft to look or not to lookahead\textquotedblright, we 
characterize general conditions on the source and channel such that symbol by
symbol encoding-decoding is optimal  within the class of schemes of a given
lookahead. We obtain and plot the region for source and
channel parameters in case of Bernoulli source, binary symmetric channel
and Hamming distortion, where the symbol by symbol policy is strictly
suboptimal. We then demonstrate that this framework of casting real time
coding problems as Markov decision problems with average
cost criteria can be useful in various other frameworks by applying this same
methodology in source coding problem with a side information vending
machine, where encoder encodes the source sequentially, with a possible
lookahead, decoder takes cost constrained actions to receive the side
information about the source. This setting is cast as a
constrained Markov decision problem and it is shown that a stationary
randomized policy can attain the minimum per-symbol distortion which is
characterized as the
solution to a saddle point equation. 
\section*{Acknowledgment}
The authors would like to thanks Benjamin Van Roy for enlightening
discussions. This work is supported by The Scott A. and Geraldine D. Macomber
Stanford Graduate Fellowship and NSF Grants CCF-1049413 and 4101-38047. The
authors also acknowledge the support of Center for Science of Information
(CSoI), an NSF Science and Technology Center, under grant agreement CCF-0939370.
\bibliography{rtc}
\bibliographystyle{IEEEtran}

\appendices

\section{} 
\label{appendix0}
In Section \ref{problem} the minimum expected distortion is defined as, 
\bea
D(d)=\inf_{\{f_e,f_m,f_d\}}\limsup_{N\rightarrow\infty}\E\left[\frac{1}{N}
\sum_{i=1}^{N}\Lambda(U_i,\hat{U}_i)\right].
\eea
Note that $\inf$ in the above definition can be replaced by $\min$ over the
class of $(f_e,f_m,f_d)$-policies as outlined in Section \ref{problem}. This is
argued by constructing an $(f_e,f_m,f_d)$-policy that achieves $D(d)$. Fix
lookahead
$d$. As $D(d)$ always exists (also it is finite due to our assumption that
$\Lambda(\cdot,\cdot)\le\Lambda_{max}<\infty$), for a positive non-increasing
vanishing sequence $\{\epsilon_m\}_{m\ge 1}$, we can construct a sequence of
policies, i.e., $\{\mu_m\}_{m\ge 1}$ with expected average
distortion $D_{\mu_m}(d)$, i.e.,
\bea
D_{\mu_m}(d)&=&\limsup_{N\rightarrow\infty}\E_{\mu_m}\left [ \frac { 1 }
{ N }
\sum_{i=1}^{N}\Lambda(U_i,\hat{U}_i)\right]\\
&=&\limsup_{N\rightarrow\infty}D^N_{
\mu_m}(d),
\eea
($\E_{\mu_m}$ is the expectation with respect to the joint probability
distribution induced
when the policy used is $\mu_m$), such that $D_{\mu_m}(d)$ is a monotone
non-increasing sequence converging to $D(d)$. By the definition of $\limsup$,
for
every $m\ge 1$, $\exists\ N_m(\epsilon_m)$ such that $\forall\ N\ge
N_m$ ($N_m$ being function of $\epsilon_m$ is implied henceforth),
\bea
D^{N}_{\mu_m}(d)\le D_{\mu_m}(d)+\epsilon_m.
\eea 
Now define a block-length sequence $\{l_m\in\mathbf{N}\}_{m\ge 1}$, satisfying
the following requirements,
\begin{itemize}
 \item \textit{R1} : $l_m>\sum_{i=1}^{m-1}l_i$ and that 
$\frac{\sum_{i=1}^{m-1}l_i}{l_m}\rightarrow 0$ as $m\rightarrow\infty$.
\item \textit{R2} : $l_m>\max\{N_m,N_{m+1}\}$ and that
$\frac{\max\{N_m,N_{m+1}\}}{l_m}\rightarrow 0$ as $m\rightarrow\infty$.
\end{itemize}
Note that we can always choose such a sequence, for eg.
$l_m=\max\{N_m,N_{m+1}\}(\sum_{i=1}^{m-1}l_i)$. We define a block-coding scheme
$\mu^{\ast}$ which operates with block length $l_i$ in
$i^{th}$ block with scheme $\mu_i$. Operating this scheme for time
$N\in(\sum_{i=1}^{m}l_i,\sum_{i=1}^{m+1}l_i]$ for some $m=m(N)$, (note
$m(N)\rightarrow\infty$ as $N\rightarrow\infty$) we can bound the normalized
distortion as:
\begin{itemize}
 \item \textit{(Case 1)} $N-\sum_{i=1}^{m}l_i< N_{m+1}$

\bea
D^N_{\mu^{\ast}}(d)&\stackrel{(a)}{\le}&
\left(\frac{\sum_{i=1}^{m-1}l_i}{N}\right)\Lambda_{max}+\frac{l_m}{N}
\left(D_{\mu_{m}}
(d)+\epsilon_ {
m}\right)+\left(\frac{N-\sum_{i=1}^{m}l_i}{N}\right)\Lambda_{max}\\
&=&\left(\frac{N-l_m}{N}\right)\Lambda_{max}+\frac{l_m}{N}
\left(D_{\mu_{m}}
(d)+\epsilon_ {
m}\right)\\
&\le&\left(\frac{\sum_{i=1}^{m-1}l_i+N_{m+1}}{l_m}
\right)\Lambda_ { max } +D_{\mu_{m}}
(d)+\epsilon_ {
m},
\eea
where (a) is due to the fact that
$l_m>N_m$ (requirement
\textit{R2}) and hence distortion in block $m$ is bounded above by
$D_{\mu_m}(d)+\epsilon_m$. 

\item \textit{(Case 2)} $N-\sum_{i=1}^{m}l_i\ge N_{m+1}$
\bea
D^{N}_{\mu^{\ast}}(d)&\stackrel{(b)}{\le}&
\left(\frac{\sum_{i=1}^{m-1}l_i}{N}\right)\Lambda_{max}+\frac{l_m}{N}
\left(D_{\mu_{m}}
(d)+\epsilon_ {
m}\right)+\left(\frac{N-\sum_{i=1}^{m}l_i}{N}\right)\left(D_{\mu_{m+1}}
(d)+\epsilon_ { m+ 1}\right)\\
&\stackrel{(c)}{\le}&\left(\frac{\sum_{i=1}^{m-1}l_i}{N}\right)\Lambda_{max}
+\frac { l_m } { N }
\left(D_{\mu_{m}}
(d)+\epsilon_ {
m}\right)+\left(\frac{N-l_m}{N}\right)\left(D_{\mu_{m}}
(d)+\epsilon_ { m}\right)\\
&=&\left(\frac{\sum_{i=1}^{m-1}l_i}{N}\right)\Lambda_{max}+D_{\mu_{m}}
(d)+\epsilon_m\\
&\le&\left(\frac{\sum_{i=1}^{m-1}l_i+N_{m+1}}{l_m}
\right)\Lambda_ { max } +D_{\mu_{m}}
(d)+\epsilon_ {
m},
\eea
where (b) follows from bounding the distortion in block $m$ as in (a) and
similarly as $N-\sum_{i=1}^{m}l_i\ge N_{m+1}$, bounding distortion in block
$m+1$ by $D_{\mu_{m+1}}(d)+\epsilon_{m+1}$ and (c) follows from the fact that
both $D_{\mu_m}(d)$ and $\epsilon_m$ are non-increasing sequences. 
\end{itemize}
Thus we see in both the above cases, for any time $N$, the normalized
distortion is bounded above as,
\bea
D^N_{\mu^{\ast}}(d)&\le&\left(\frac{\sum_{i=1}^{m-1}l_i+N_{m+1}}{l_m}
\right)\Lambda_ { max } +D_{\mu_{m}}
(d)+\epsilon_ {
m},
\eea
which implies that the expected average distortion under policy
$\mu^{\ast}$ is,
\bea
D_{\mu^{\ast}}(d)&\le&
\limsup_{N\rightarrow\infty}\left[\left(\frac{\sum_{i=1}^{m-1}l_i+N_{m+1}}{l_m}
\right)\Lambda_{max}+D_{\mu_{m}}
(d)+\epsilon_m\right]\\
&\le&
\limsup_{N\rightarrow\infty}\left(\frac{\sum_{i=1}^{m-1}l_i}{l_m}
\right)\Lambda_{max}+\limsup_{N\rightarrow\infty}\left(\frac{N_{m+1}}{l_m}
\right)\Lambda_{max}+\limsup_{ N\rightarrow\infty } \left(D_{\mu_{m}}
(d)+\epsilon_m\right)\\
&\stackrel{(d)}{=}&D(d),
\eea
where (d) follows from the fact that 
$\frac{\sum_{i=1}^{m-1}l_i}{l_m}\rightarrow 0$ and
$\frac{N_{m+1}}{l_m}\rightarrow 0$ by requirements
\textit{R1} and \textit{R2} respectively, since $m(N)\rightarrow\infty$ as
$N\rightarrow\infty$.
Thus we have a scheme $\mu^{\ast}$ with minimum expected distortion
$D_{\mu^{\ast}}(d)\le D(d)$, but we know for any scheme $\mu^{\ast}$, $D(d)\le
D_{\mu^{\ast}}(d)$, implying $D_{\mu^{\ast}}(d)=D(d)$.

\section{Proof of Theorem \ref{theorem5}}\label{appendixA}
We will first obtain the ACOE Eq. (\ref{B4}). Define
the state sequence, $S_i=(V_i,\{P_{V_i=v|X^i,Y^i}\}_{v\in\mathcal{V}})$,
disturbance sequence, $W_i=(V_i,X_i,Y_i)$, action sequence,
$A_i=\{f_{e,i}(v,V^{i-1},Y^{i-1}),f_{v,i}(x,X^{i-1})\}_{v\in\mathcal{V},
x\in\mathcal{X}}$ is clearly a history dependent action, i.e. function of
$W^{i-1}=(V^{i-1},X^{i-1},Y^{i-1})$. We will now verify the conditions for the
defined state sequence, disturbance and action sequence to form a controlled
markov process. With some abuse of notation, we denote,
\bea
\beta_i&=&\{P_{V_i=v|X^i,Y^i}\}_{v\in\mathcal{V}}\\
A_{e,i}&=&f_{e,i}(\cdot,V^{i-1},Y^{i-1})\\
A_{v,i}&=&f_{v,i}(\cdot,X^{i-1}).
\eea
Now,
\bea
P(W_i|W^{i-1},S^{i-1},A^i)&=&P(V_i,X_i,Y_i|V^{i-1},X^{i-1},Y^{i-1},A_v^i,A_e^i,
\beta^i)\\
&=&K(V_{i-1},V_i)\1_{\{X_i=A_{e,i}(V_i)\}}P(Y_i|V_i,A_{v,i}(X_i))\\
&=&P_W(W_i|S_{i-1},A_i)\label{W4}
\eea
\bea
\beta_i&=&\left\lbrace\frac{\sum_{V_{i-1}}\beta_{i-1}(V_{i-1})
K(V_{i-1},v)\1_{\{X_i=A_{e,i}(v)\}}P(Y_i|v,A_{v,i}(X_i))}{\sum_{V_{i-1},V_i}
\beta_{i-1}(V_{i-1})
K(V_{i-1},V_i)\1_{\{X_i=A_{e,i}(V_i)\}}P(Y_i|V_i,A_{v,i}(X_i))}\right\rbrace_{
v\in\mathcal{V}} \\
&=&G(\beta_{i-1},A_{e,i},A_{v,i},V_i,X_i,Y_i)\\
&=&G(\beta_{i-1},A_i,W_i),
\eea
which implies,
\bea\label{F4}
S_i=F(S_{i-1},A_i,W_i).
\eea
The optimal decoding is $\hat{V}_{Bayes}(P_{V_i|X^i,Y^i})$. Also let
$g(S_i,A_{i+1})=-\E\left[\tilde{\Lambda}(V_i,\hat{V}_{Bayes}(P_{V_i|X^i,Y^i}
))\Big{|}Y^i\right]$ so
that we  have
\bea
\inf\limsup_{N\rightarrow\infty}\E\left[\frac{1}{N}\sum_{i=1}^{N}\tilde{
\Lambda }(V_{i},
\hat{V}_{Bayes}(X^i,Y^i)\right
]&=&-\sup\liminf_{N\rightarrow\infty}\E\left[\frac{1}{N}\sum_{i=1}^{N}g(S_{i-1},
A_i)\right ].
\eea
Also for the cost constraint on action,
\bea
\E\left[{C}(A_{v,i}(X_i))\Big{|}V^{i-1},X^
{i-1},Y^{i-1},A^i\right]&=&\sum_{\tilde{v}\in\mathcal{V},\tilde{x}\in\mathcal{X}
} K(V_{i-1},\tilde{v}
)\1_{\{\tilde{x}=A_{e,i}(\tilde{v})\}}{C}(A_{v,i}(\tilde{x}))\\
&=&l(S_{i-1},A_i),
\eea
which imply,
\bea
\limsup_{N\rightarrow\infty}\E\left[\frac{1}{N}\sum_{i=1}^{N}C(A_{v,i})\right ]
=\limsup_{N\rightarrow\infty}\E\left[\frac{1}{N}\sum_{i=1}^{N}l(S_{i-1},
A_i)\right].
\eea
Thus the problem of minimizing the average distortion subject to constraints on
the vending action is equivalent to a constrained Markov decision process,
$(\mathcal{S},\mathcal{A},\mathcal{W},F,P_{S},P_{W},g,l,
\Gamma)$ (note here the number of constraints is $k=1$). Fix a lookahead $d$.
Let $\beta_1$ denote the
marginal of belief $\beta$ with respect to the first argument. Now for fixed
$\lambda\ge 0$,
we have the average cost optimality equation as,
\bea
&&\rho^{\lambda}(u_1,\cdots,u_{d+1},\beta)+h^{\lambda}(u_1,\cdots,u_{d+1},
\beta)\nonumber\\
&=&\max_{
(a_e,a_v)}\left[g^{\lambda}(u_1,\cdots,u_{d+1},\beta,a_e,
a_v)+\sum_{ \tilde { u }\in\mathcal{U} , \tilde { x }\in\mathcal{X} ,
\tilde{y}\in\mathcal{Y}}P(\tilde{u})\1_{\{\tilde{x}=a_e(u_2,\cdots,u_{d+1},
\tilde { u } ) \}}
P(\tilde{y}|u_2,a_v(\tilde{x}))h(u_d,\cdots,u_{d+1},\tilde{u},\tilde{
\beta})\right
]\nonumber\\
&& \mbox{ }\forall\mbox{ }(u_1,\cdots,u_{d+1})\in\mathcal{U}^{d+1},
\beta\in\mathcal{P}(\mathcal{U}^{d+1}),
\eea
where $\tilde{\beta}=G(\beta,a_e,a_v,(u_2^{d+1},\tilde{u},\tilde{x},\tilde{y}))$
is the updated belief, $(a_e,a_v)\in\mathcal
{A}_e\times\mathcal{A}_v$ and $g^{\lambda}(\cdot)$ is the Lagrangian
augmented cost,
\bea
&&g^{\lambda}(u_1,\cdots,u_{d+1},\beta,a_e,a_v)\nonumber\\
&=&g(u_1,\cdots,u_{d+1},\beta,
a_e,a_v)+\lambda\left(\Gamma-l(u_1,\cdots,u_{d+1},\beta,
a_e,a_v)\right)\\
&=&-\min_{\hat{u}\in\mathcal{U}}\sum_{\overline{u}\in\mathcal{U}}
\beta_1(\overline{u})\Lambda(\overline{u},\hat{u}) +\lambda\left(\Gamma-\sum_ {
\tilde
{ u }
\in\mathcal
{U},\tilde{x}\in\mathcal{X}}
P(\tilde{u})\1_{\{\tilde{x}=a_e(u_2,\cdots,u_{d+1},
\tilde { u} ) \}}C(a_v(\tilde{x})) \right).\label{g4}
\eea
Now having obtained the ACOE, the proof is an application of Theorem
\ref{theorem2} stated in Section
\ref{unconstrainedcontrol}. We need merely verify that the
conditions :  
\begin{itemize}
 \item[C1] holds as the state space and actions space both are compact subsets
of
Borel spaces.
 \item[C2] holds because of our definitions of $g(\cdot)$, $l(\cdot)$ and
assumptions on cost and distortion constraints.
 \item[C3] Denoting the state by
$s=(u_1,\cdots,u_{d+1},\beta)$ and
action $a=(a_e,a_v)$, we have the stochastic kernel,
\bea
Q(\tilde{s}|s,a)&=&\sum_{\tilde { x
}\in\mathcal{X} ,
\tilde{y}\in\mathcal{Y}}P(\tilde{u})\1_{\{\tilde{x}=a_e(u_2,\cdots,u_{d+1},
\tilde { u } ) \}}
P(\tilde{y}|u_2,a_v(\tilde{x}))\1_{\{\tilde{\beta}=G(\beta,a,(u_2^{d+1},\tilde{
u},\tilde{x},\tilde{y})\}}
\nonumber\\&&\mbox{ if
}\tilde{s}=(u_2,\cdots,u_{d+1},\tilde{u},\tilde{\beta}),\\
&=& 0 \mbox{ otherwise }.
\eea
Fix tuple $(u_1^{d+1},a)$ which takes values in a finite set. Consider a
sequence
$\beta_n\rightarrow\beta$. Let $\mu_n$ and $\mu$ be the measure on
$\mathcal{B}(\mathcal{S})$ induced by $Q(\cdot|u_1^{d+1},\beta_n,a)$ and
$Q(\cdot|u_1^{d+1},\beta,a)$ respectively. Proving C3
is equivalent to proving that $\forall\ h\in C_b(\mathcal{S})$, we have
$\mu_n(h)\rightarrow\mu(h)$, i.e.
\bea
&&\sum_{ \tilde { u }\in\mathcal{U} , \tilde { x
}\in\mathcal{X} ,
\tilde{y}\in\mathcal{Y}}P(\tilde{u})\1_{\{\tilde{x}=a_e(u_2,\cdots,u_{d+1},
\tilde { u } ) \}}
P(\tilde{y}|u_2,a_v(\tilde{x}))h(F(\beta_n,a,(u_2^{d+1},\tilde{
u},\tilde{x},\tilde{y}))\nonumber\\
&&\rightarrow \sum_{ \tilde { u }\in\mathcal{U} , \tilde { x
}\in\mathcal{X} ,
\tilde{y}\in\mathcal{Y}}P(\tilde{u})\1_{\{\tilde{x}=a_e(u_2,\cdots,u_{d+1},
\tilde { u } ) \}}
P(\tilde{y}|u_2,a_v(\tilde{x}))h(F(\beta,a,(u_2^{d+1},\tilde{
u},\tilde{x},\tilde{y})),
\eea
which is true as $F(\cdot)$ (by its definition Eq. (\ref{F4})) is continuous
in its arguments.
\item[C4] (Slater's Condition) We need to show there exists a policy such that
the constraint on the vending action are strictly satisfied, but this is
trivially true as we can select a policy with $A_{v,i}(\cdot)$ such that
$C(A_{v,i})=0$, $\forall\ i$ which satisfies the slater's condition. Thus
C1-C4 being true, this
implies
that the optimal distortion, $D^{FB}_a(d)$ is,
\bea
D^{FB}_a(d)=-\rho^{\ast}&\stackrel{(a)}{=}&-\sup_{(\nu,\pi)\in
P(\mathcal{S})\times\Pi_{HD}}\inf_{\lambda\ge 0}L((\nu,\pi),\lambda)\\
&\stackrel{(b)}{=}&-\inf_{\lambda\ge 0}\sup_{(\nu,\pi)\in
P(\mathcal{S})\times\Pi_{HD}}L((\nu,\pi),\lambda)\\
&\stackrel{(c)}{=}&-\inf_{\lambda\ge
0}\sup_{x\in\mathcal{U}^{d+1}\times\mathcal{M}(\mathcal{U}^{d+1})}\rho^{\lambda}
(x),
\eea
where (a) follows from the definition of $\rho^{\ast}$, (b) follows from
Theorem \ref{theorem2} (note assumptions C1-C4 are satisfied here as proved
above) while (c) follows as $(\rho^{\lambda}(\cdot),h^{\lambda}(\cdot))$ solve
the ACOE.
\end{itemize}

\section{Proof of Theorem \ref{theorem6}}\label{appendixB}
Define the state sequence as, $S_i=(V_i,M_i,N_i)$ and the disturbance sequence,
$W_i=(V_i,X_i,Y_i)$. We will first derive the ACOE Eq. (\ref{B5}). The action
(encoder's control) sequence is history
dependent,
$A_{i}=f_{e,i}(\cdot,W^{i-1})$. (Note here $A_i$ is the encoding action while
$A_v^{opt}$ is the action taken by the decoder to observe side information). 
It can  be easily established as in previous sections that,
\bea
P(W_i|W^{i-1},S^{i-1},A^i)=P_W(W_i|S_{i-1},A_{i})=K(V_{i-1},V_i)\1_{\{X_i=A_{i
} (V_i)\}}P(Y_i|V_i,A^{opt}_{v}(X_i))\label{W5}.
\eea
We have,
 \bea\label{F5}
 X_i=A_{i}(V_i)\mbox{, }M_i=f_m(M_{i-1},X_i)\mbox{,
}N_i=f_n(N_{i-1},Y_i)\mbox{, and hence }S_i=F(S_{i-1},A_{i},W_i).
\eea
By the assumptions in the Section \ref{sirtcactionfinite}, the decoding is
stationary, hence we have, 
\bea
&&\E\left[\tilde{\Lambda}(V_i,\hat{V}^{opt})|V^{i-1},M^{i-1},N^{i-1},Y^{i-1},
A^i\right ] \\
&=&\sum_{\tilde{v}\in\mathcal{V},\tilde{x}\in\mathcal{X},\tilde{y}\in\mathcal{Y}
} K(V_ { i-1 } , \tilde { v }
)\1_{\{\tilde{x}=A_i(\tilde{v})\}}P(\tilde{y}|\tilde{v},A_{v}^{opt}(\tilde{x}
))\Lambda(\tilde { v } , \hat { U } ^ { opt }
(\tilde{x},\tilde{y},M_{i-1},N_{i-1}))\\
&=&-g(S_{i-1},A_i).
\eea
For the cost constraints we have,
\bea
&&\E\left[C(A_v^{opt}(V_i)|V^{i-1},M^{i-1},N^{i-1},Y^{i-1},
A^i\right ] \\
&=&\sum_{\tilde{v}\in\mathcal{V},\tilde{x}\in\mathcal{X}} K(V_ { i-1 } , \tilde
{ v }
)\1_{\{\tilde{x}=A_i(\tilde{v})\}}C(A_v^{opt}(\tilde{x}))\\
&=&l(S_{i-1},A_i).
\eea
Fix a lookahead $d$. Now for fixed $\lambda\ge 0$, the average cost optimality
equation is,
\bea
&&\rho^{\lambda}(u_1,\cdots,u_{d+1},m,n)+h^{\lambda}(u_1,\cdots,u_{d+1},m,
n)\nonumber\\
&=&\max_{a\in\mathcal{A}}\left[g^{\lambda}(u_1,\cdots,u_{d+1},m,n,a)+\sum_{
\tilde { u }\in\mathcal{U} , \tilde { x }\in\mathcal{X} ,
\tilde{y}\in\mathcal{Y}}P(\tilde{u})\1_{\{\tilde{x}=a(u_2,\cdots,u_{d+1},
\tilde { u } ) \}}
P(\tilde{y}|u_2,A^{opt}_v(\tilde{x}))h(u_d,\cdots,u_{d+1},\tilde{u},
\tilde{m},\tilde{n}))\right
]\nonumber\\
&& \mbox{ }\forall\mbox{ }(u_1,\cdots,u_{d+1})\in\mathcal{U}^{d+1},
m\in \mathcal{M},n\in\mathcal{N},
\eea
where $\tilde{m}=f_m(m,\tilde{x})$ and $\tilde{n}=f_n(n,\tilde{y})$ are memory
updates and $g^{\lambda}(\cdot)$ is the Lagrangian augmented cost,
\bea
&&g^{\lambda}(u_1,\cdots,u_{d+1},m,n,a)\nonumber\\
&=&g(u_1,\cdots,u_{d+1},m,n,
a)+\lambda\left(\Gamma-l(u_1,\cdots,u_{d+1},m,n,
a)\right)\\
&=&-\sum_{\tilde{u}\in\mathcal{U},\tilde{x}\in\mathcal{X},\tilde{y}\in\mathcal{Y
}}P(\tilde{u})\1_{\{\tilde{x}=a(u_2,\cdots,u_{d+1},
\tilde { u } ) \}}
P(\tilde{y}|u_2,A^{opt}_v(\tilde{x}))\Lambda(u_2,\hat{U}_{Bayes}(\tilde{x},
\tilde{y},m,n))\nonumber\\
&+&\lambda\left(\Gamma-\sum_{\tilde{u}\in\mathcal{U},\tilde{x}\in\mathcal{X}}
P(\tilde{u})\1_{\{\tilde{x}=a_e(u_2,\cdots,u_{d+1},\tilde{u}\}}C(A_v^{opt}
(\tilde{x})) \right).\label{g5}
\eea
Once we have the ACOE, rest of the proof is similar to the proof of
Theorem
\ref{theorem5} by invoking Theorem \ref{theorem2}.

\section{Proof of Theorem \ref{theorem7}}\label{appendixC}
The proofs of this section follow in line with the
previous sections. We just need to establish the ACOE Eq. (\ref{B6}), rest
of the proof follows invoking Theorem \ref{theorem2}. Define the following : 
\bea
S_i&=&(V_i,P_{M_i|V^i},P_{N_i|V^i})\\
W_i&=&V_i\\
A_i&=&f_{e,i}(\cdot,W^{i-1}).
\eea
Let us use the following notation, $\beta_i=P(M_i|V^i)$ and
$\gamma_i=P(N_i|V^i)$. It is easy to see (along the lines of analysis in
previous sections),
\bea
P(W_i|W^{i-1},S^{i-1},A^i)=P(W_i|S_{i-1},A_{i})=K(V_{i-1},V_i),\label{W6}
\eea
and, 
\bea
\beta_i&=&\left\lbrace\frac{\sum_{M_{i-1},X_i}\beta_{i-1}(M_{i-1})
K(V_{i-1},V_i)\1_{\{X_i=A_i(V_i)\}}\1_{\{m=f_m(X_i,M_{i-1})\}}}{\sum_{M_{i-1},
X_i,M_i }
\beta_{i-1}(M_{i-1})
K(V_{i-1},V_i)\1_{\{X_i=A_i(V_i)\}}\1_{\{M_i=f_m(X_i,M_{i-1})\}}}\right\rbrace_{
m\in\mathcal{M}} \\
&=&\xi_m(\beta_{i-1},V_{i-1},V_i,A_i)\\
\gamma_i&=&\left\lbrace\frac{\sum_{N_{i-1},X_i,Y_i}\beta_{i-1}(N_{i-1})
K(V_{i-1},V_i)\1_{\{X_i=A_i(V_i)\}}P(Y_i|V_i,A^{opt}_v(X_i))
\1_{\{n=f_n(Y_i,N_{i-1})\}}}{\sum_{N_{i-1},X_i,Y_i,N_i}
\beta_{i-1}(N_{i-1})
K(V_{i-1},V_i)\1_{\{X_i=A_i(V_i)\}}P(Y_i|V_i,A^{opt}_v(X_i))
\1_{\{N_i=f_n(Y_i,N_{i-1})\}}}
\right\rbrace_{
n\in\mathcal{N}} \\
&=&\xi_n(\gamma_{i-1},V_{i-1},V_i,A_i),
\\&&\mbox{ which imply that }S_i=F(S_{i-1},A_i,W_i).\label{F6}
\eea
Also for constraints, 
\bea
&&\E\left[\tilde{\Lambda}(V_i,\hat{V}^{opt}(X_i,Y_i,M_{i-1},N_{i-1}))|V^{i-1},
\beta^
{i-1},\gamma^{i-1},Y^{i-1},A^i\right ]
\\
&=&\sum_{\tilde{m}\in\mathcal{M},\tilde{n}\in\mathcal{N},\tilde{v}\in\mathcal{V}
,
\tilde{x}\in\mathcal{X},\tilde { y }
\in\mathcal{Y}}\beta_{i-1}(\tilde{m})\gamma_{i-1}(\tilde{n})K(V_{i-1}
, \tilde { v } )\1_{\{\tilde{x}=A_i(\tilde{v})\}}P(\tilde { y }
|A^{opt}_v(\tilde {
x}),\tilde{v})\tilde{\Lambda}(\tilde{v} , \hat{V}^{opt}(\tilde{x},
\tilde{y},\tilde{m},\tilde{n})))\\
&=&-g(V_{i-1},\beta_{i-1},\gamma_{i-1},A_i)\\
&=&-g(S_{i-1},A_i),
\eea
and
\bea
\E\left[\tilde{C}(A_v^{opt}(V_i))|V^{i-1},\beta^
{i-1},\gamma^{i-1},Y^{i-1},A^i\right ]
&=&\sum_{\tilde{x}\in\mathcal{X},\tilde{v}\in\mathcal{V}
}K(V_{i-1},\tilde{v}
)\1_{\{\tilde{x}=A_i(\tilde{v})\}}\tilde{C}(A_v^{opt}(\tilde{x}))\\
&=&l(S_{i-1},A_i).
\eea
After these transformations for a fixed lookahead $d$, $\lambda\ge 0$,
we have the average cost optimality equation,
\bea
&&\rho^{\lambda}(u_1,\cdots,u_{d+1},\beta,\gamma)+h^{\lambda}(u_1,\cdots,u_{d+1}
,\beta, \gamma)\nonumber\\
&=&\max_{a\in\mathcal{A}}\left[g^{\lambda}(u_1,\cdots,u_{d+1},\beta,\gamma,
a)+\sum_ {
\tilde { u }\in\mathcal{U} , \tilde { x }\in\mathcal{X} ,
\tilde{y}\in\mathcal{Y}}P(\tilde{u})\1_{\{\tilde{x}=a(u_2,\cdots,u_{d+1},
\tilde { u } ) \}}
P(\tilde{y}|u_2,A^{opt}_v(\tilde{x}))h(u_d,\cdots,u_{d+1},\tilde{u},
\tilde{\beta},\tilde{\gamma})\right
]\nonumber\\
&& \mbox{ }\forall\mbox{ }(u_1,\cdots,u_{d+1})\in\mathcal{U}^{d+1},
\beta\in \mathcal{P}(\mathcal{M}),\gamma\in\mathcal{P}(\mathcal{N}), 
\eea
where $\tilde{\beta}=\xi_m(\beta,u_1^{d+1},\tilde{u},a)$ and
$\tilde{\gamma}=\xi_n(\gamma,u_1^{d+1},\tilde{u},a)$ are
belief updates and $g^{\lambda}(\cdot)$ is the Lagrangian augmented cost,
\bea
&&g^{\lambda}(u_1,\cdots,u_{d+1},\beta,\gamma,a)\nonumber\\
&=&g(u_1,\cdots,u_{d+1},\beta,\gamma,
a)+\lambda\left(\Gamma-l(u_1,\cdots,u_{d+1},\beta,\gamma,
a)\right)\\
&=&-\sum_{\tilde{m}\in\mathcal{M},\tilde{n}\in\mathcal{N},\tilde{u}\in\mathcal{U
} , \tilde { x } \in\mathcal { X } , \tilde { y } \in\mathcal { Y
}}\beta(\tilde{m})\gamma(\tilde{n})P(\tilde{u})\1_{\{\tilde{x}=a(u_2,\cdots,u_{
d+1 } ,
\tilde { u } ) \}}
P(\tilde{y}|u_2,A^{opt}_v(\tilde{x}))\Lambda(u_2,\hat{U}_{Bayes}(\tilde{x},
\tilde{y},\tilde{m},\tilde{n}))\nonumber\\
&+&\lambda\left(\Gamma-\sum_{\tilde{u}\in\mathcal{U},\tilde{x}
\in\mathcal { X }
}P(\tilde{u})\1_{\{\tilde{x}=a_e(u_2,\cdots,u_{d+1},\tilde{u})\}} C(A_v^ { opt }
(\tilde{x})) \right),\label{g6}
\eea
thus the ACOE Eq. (\ref{B6}) is established. 
\end{document}